\newenvironment{myquote}{\list{}{\leftmargin=1.25in\rightmargin=0in}\item[]}{\endlist}
\newcommand{\rmd}{{\rm d}}
\newcommand{\bA}{{\bold A}}
\newcommand{\RR}{{\mathbb{R}}}
\newcommand{\cM}{{\mathcal M}}
\newcommand{\cS}{{\mathcal S}}
\newcommand{\cF}{{\mathcal F}}
\newcommand{\cT}{{\mathcal T}}
\newcommand{\bx}{{\bold x}}
\newcommand{\by}{{\bold y}}
\newcommand{\bW}{{\mathbf{W}}}
\newcommand{\bP}{{\mathcal{P}}}
\newcommand{\ba}{{\bold a}}
\newcommand{\bu}{{\bold u}}
\newcommand{\bv}{{\bold v}}
\newcommand{\tbv}{\tilde{\bv}}
\newcommand{\txi}{{\tilde{\xi}}}
\newcommand{\hd}{\hat{\rmd}}
\newcommand{\grad}{{\mbox{\boldmath $\nabla$}}}
\newcommand{\btimes}{{\mbox{\boldmath $\times$}}}
\newcommand{\bzed}{{\mbox{\boldmath $0$}}}
\newcommand{\bxi}{{\mbox{\boldmath $\xi$}}}
\newcommand{\tbxi}{\tilde{\bxi}}
\newcommand{\be}{\begin{equation}}
\newcommand{\ee}{\end{equation}}
\newcommand{\lb}{\label}
\begin{document}

 \title{Spontaneous Stochasticity and Anomalous Dissipation for Burgers Equation} 
\author{Gregory L. Eyink \and Theodore D. Drivas}

\institute{Gregory L. Eyink \at
              Department of Applied Mathematics \& Statistics\\  Department of Physics \& Astronomy\\ The Johns Hopkins University \\
              \email{eyink@jhu.edu} { Tel: 410-516-7201} { Fax: 410-516-7459}      
           \and
           Theodore D. Drivas  \at
              Department of Applied Mathematics \& Statistics\\ The Johns Hopkins University
}

\date{Received: date / Accepted: date}

\maketitle

\begin{abstract}
\noindent 
We develop a Lagrangian approach to conservation-law anomalies in weak solutions of inviscid Burgers 
equation, motivated by previous work on the Kraichnan model of turbulent scalar advection. We show that 
the entropy solutions of Burgers possess Markov stochastic processes of (generalized) Lagrangian trajectories backward 
in time for which the Burgers velocity is a backward martingale. This property is shown to guarantee 
dissipativity of conservation-law anomalies for general convex functions of the velocity. The backward stochastic 
Burgers flows with these properties are not unique, however. We construct infinitely many such stochastic flows, both by 
a geometric construction and by the zero-noise limit of the Constantin-Iyer stochastic representation of viscous 
Burgers solutions. The latter proof yields the spontaneous stochasticity of Lagrangian trajectories backward 
in time for Burgers, at unit Prandtl number. It is conjectured that existence of a backward stochastic flow 
with the velocity as martingale is an admissibility condition which selects the unique entropy solution for Burgers. 
We also study linear transport of passive densities and scalars by inviscid Burgers flows. We show that shock 
solutions of Burgers exhibit spontaneous stochasticity backward in time for all finite Prandtl numbers, 
implying conservation-law anomalies for linear transport. We discuss the relation of our results for Burgers with  
incompressible Navier-Stokes turbulence, especially Lagrangian admissibility conditions for Euler solutions
and the relation between turbulent cascade directions and time-asymmetry of Lagrangian stochasticity.  
\keywords{Spontaneous stochasticity  \and Burgers equation \and Weak solution \and Dissipative anomaly
\and Admissibility condition \and Kraichnan model}
\end{abstract}

\section{Introduction}\lb{sec:intro}

\begin{myquote}
{\small {\it There seems to be a strong relation between the behavior of the Lagrangian trajectories and the basic 
hydrodynamic properties of developed turbulent flows: we expect the appearance of non-unique trajectories 
for $Re\rightarrow\infty$ to be responsible for the dissipative anomaly, the direct energy cascade, the dissipation 
of higher conserved quantities and the pertinence of weak solutions of hydrodynamical equations at 
$Re=\infty$.} \\ 

\hfill  --- K. Gaw\c{e}dzki \& M. Vergassola (2000) }\\
\end{myquote}

Energy dissipation in incompressible Navier-Stokes turbulence is, within experimental errors, independent 
of viscosity at sufficiently high Reynolds numbers. For a recent review of the evidence, see
\cite{Pearsonetal02, Kanedaetal03}. This empirical observation motivated  Onsager in 1949 to conjecture that incompressible 
fluid turbulence is described by singular (distributional) Euler solutions that dissipate energy by a nonlinear cascade 
mechanism \cite{Onsager49, EyinkSreenivasan06}. While this conjecture is consistent with all present 
available evidence, quite deep theoretical problems remain. Physically, the Lagrangian interpretation 
of the Eulerian energy cascade is usually in terms of Taylor's vortex-stretching picture \cite{TaylorGreen37,Taylor38}. 
However, Taylor's ideas depend on the validity of the Kelvin circulation theorem, which is very unlikely 
to hold  in the conventional sense for high-Reynolds-number turbulent fluids \cite{Luthietal05,Gualaetal05, Eyink06}. 
Mathematically,  Onsager's conjectured Euler solutions have not yet been obtained as zero-viscosity limits of 
Navier-Stokes solutions. While weak Euler solutions have been constructed which dissipate kinetic energy and 
have the spatial H\"older regularity of observed turbulent fields \cite{Buckmaster13}, such solutions are wildly 
non-unique. Admissibility conditions for weak Euler solutions based only on energy dissipation do not select 
unique  solutions \cite{deLellisSzekelyhidi10}.  

These problems have been resolved, on the other hand, in a toy turbulence model, the  Kraichnan model 
of passive scalar advection by a Gaussian random velocity field which is white-noise in time and 
rough (only H\"older continuous) in space \cite{Kraichnan68, Falkovichetal01}. In this model 
there is anomalous dissipation of the scalar energy due to a turbulent cascade process. In Lagrangian 
terms the turbulent dissipation is explained by  ``spontaneous stochasticity'' of the fluid particle trajectories, 
associated to Richardson explosive dispersion \cite{Bernardetal98}. As quoted in the epigraph 
of this introduction, Gaw\c{e}dzki \& Vergassola \cite{GawedzkiVergassola00} suggested that this non-uniqueness 
and intrinsic stochasticity of Lagrangian trajectories should underlie also the anomalous dissipation in weak 
solutions of hydrodynamic equations relevant to actual fluid turbulence.  Weak solutions of the passive advection 
equation in the Kraichnan model have been rigorously constructed and shown to coincide with solutions
obtained by smoothing the velocity or adding scalar diffusivity and then removing these regularizations 
\cite{EvandenEijnden00, EvandenEijnden01,LeJanRaimond02, LeJanRaimond04}. One can characterize these weak solutions 
by the property that the scalar values are backward martingales for Markov random processes of 
Lagrangian trajectories. 

This successful theory for the Kraichnan model motivated one of us to conjecture a similar ``martingale 
hypothesis'' for fluid circulations in the weak solutions of incompressible Euler equations that are 
believed to be relevant for turbulence \cite{Eyink06}. For smooth solutions of Euler equations, the backward martingale  
property reduces  to the usual Kelvin Theorem on conservation of circulations. However, for singular solutions
it imposes an ``arrow of time'' which was proposed as an infinite set of admissibility conditions to select
``entropy'' solutions of the Euler solutions \cite{Eyink07, Eyink10}. This conjecture assumes spontaneous stochasticity 
in high-Reynolds Navier-Stokes turbulence, for which numerical evidence has been obtained in 
studies of 2-particle dispersion \cite{Sawfordetal08, Eyink11, Bitaneetal13}. Subsequently, 
in very beautiful work,  Constantin \& Iyer \cite{ConstantinIyer08} established a characterization of the 
solutions of the incompressible Navier-Stokes solutions as those space-time velocity fields for which the 
fluid circulations are backward martingales of a stochastic advection-diffusion process (see also \cite{Eyink10}). 
This ``stochastic Kelvin Theorem'' is the exact analogue for Navier-Stokes of the property proposed earlier for entropy 
solutions of Euler equations. Of course, the Navier-Stokes result does not imply that for Euler and, at this time,
the zero-viscosity limit is so poorly understood that a mathematical proof (or disproof) for Euler does not seem 
to be forthcoming anytime soon.

There are simpler PDE problems, however, where the zero-viscosity limit is much better understood. 
These include scalar conservation laws in one space dimension \cite{Bressan09,Dafermos05}, with the 
Burgers equation \cite{Burgers39, BecKhanin07} as a prominent example.  The scalar conservation laws 
possess weak solutions that are uniquely selected by entropy admissibility conditions and which coincide 
with solutions obtained by the zero-viscosity limit. The Burgers equation, in particular, has long been a 
testing ground for ideas about Navier-Stokes turbulence\footnote{Note, furthermore, that the general 
scalar conservation law in one-dimension $u_t+(f(u))_x=0$ with a strictly convex flux function $f$ is 
equivalent for smooth solutions to Burgers equation for the associated velocity field $v=f'(u).$ This 
equivalence extends to viscosity-regularized equations in a slightly modified form. A simple calculation 
shows that $u_t+(f(u))_x=\varepsilon \ u_{xx}$ is equivalent to $v_t+\left[\frac{1}{2}v^2+\varepsilon g(v) \right]_x =\varepsilon \ v_{xx},$ 
where $g(v)=1/\hat{f}''(v)$ and $\hat{f}(v)$ is the Legendre dual of the convex function $f(u).$ Hence, an entropy 
solution $u$ of $u_t+(f(u))_x=0$ should give an entropy solution $v=f'(u)$ of Burgers, and inversely.}. It is 
therefore a natural question whether the known entropy solutions of inviscid Burgers satisfy a version 
of the martingale property conjectured for ``entropy solutions'' of incompressible Euler. Since smooth 
solutions of inviscid Burgers preserve velocities along straight-line characteristics, the natural 
conjecture for Burgers is that the Lagrangian velocity is a backward martingale. As a matter of fact,
Constantin and Iyer \cite{ConstantinIyer08} established exactly such a characterization of the 
solutions of the viscous Burgers equation. In order for such a representation to hold also for the 
zero-viscosity limit, there must be a form of ``spontaneous stochasticity'' for Burgers flows. It has been 
argued that these flows are only coalescing and that stochastic splitting is absent \cite{BauerBernard99}. 
This is true,however, only forward in time. The natural martingale property involves instead flows backward
in time and it is plausible that there should exist a suitable stochastic inverse of the forward coalescing flow.  

A main result of this paper is that there are indeed well-defined Markov inverses of the forward 
coalescing flows for the entropy solutions of inviscid Burgers, such that the Burgers velocity is a 
backward martingale of these stochastic processes. This result implies a stochastic representation 
of the standard entropy solutions of inviscid Burgers exactly analogous to the Constantin-Iyer (C-I) 
representation of viscous Burgers solutions. Interestingly, there is more than one way to construct 
such a stochastic inverse (in contrast to the Kraichnan model, where the stochastic process of 
backward Lagrangian trajectories appears to be essentially unique \cite{LeJanRaimond02,LeJanRaimond04})
\footnote{The ``essential uniqueness'' is that of the stochastic backward process for a 
given weak solution of the passive-scalar advection equation. The Kraichnan model for an intermediate 
regime of compressibility has  distinct weak solutions in the simultaneous limit $\nu,\kappa\rightarrow 0,$ obtained 
by holding fixed different values of the ``turbulent Prandtl number'' \cite{EvandenEijnden00, EvandenEijnden01}.
The backward stochastic process is uniquely fixed by that limit, however, which fully specifies the boundary 
conditions at zero-separation. We shall see that the case is otherwise with Burgers, which has infinitely many 
distinct stochastic inverse flows for the same, unique dissipative weak solution.}. 
We obtain one set of stochastic inverses by a direct geometric construction, closely related to recent work of  
Moutsinga \cite{Moutsinga12}. We obtain another stochastic inverse by the zero-viscosity limit of the backward 
diffusion processes in the Constantin-Iyer representation, demonstrating spontaneous stochasticity for Burgers flows   
backward in time at unit Prandtl number\footnote{``Prandtl number'' here is the ratio of the viscosity 
to the square-amplitude of a white-noise term in the Lagrangian particle equation. This is exactly the 
standard Prandtl (or Schmidt) number for passive scalar advection, when scalar diffusivity is represented 
by stochastic particle motion. See section \ref{sec:scalar}.}. The stochastic inverse flows we obtain are (backward) Markov 
jump-drift processes supported on generalized solutions of the Lagrangian particle equations of motion 
(generalized characteristics in the sense of Dafermos \cite{Dafermos05}.)  Although not themselves 
unique, each constructed backward stochastic flow enjoys the properties discussed above 
and provides a representation of the unique entropy  solutions of Burgers.  Furthermore, we 
show that the backward martingale property of the Burgers velocity is exactly what is required 
to make the solutions dissipate convex entropies. For this purpose, we derive a novel 
Lagrangian formula for inviscid Burgers dissipation.  We conjecture that existence of a stochastic 
process of generalized characteristics with the backward martingale property for velocities 
is an admissibility condition for inviscid Burgers which uniquely selects the standard entropy solution.  

A second main contribution of this work is to the study of linear transport by Burgers 
flows, see Woyczy\'nski \cite{Woyczynski98} and Bauer \& Bernard \cite{BauerBernard99}. 
Because Burgers provides a mathematically tractable example of compressible turbulence with shocks, 
it is possible to gain insight into physically more relevant transport problems for scalars and densities 
in compressible Navier-Stokes and magnetohydrodynamic turbulence. A key question here 
also is the existence or not of anomalies in the conservation laws of scalars and densities 
associated to their own turbulent cascades. The lesson of the Kraichnan model is that this 
question is directly related to the spontaneous stochasticity of the turbulent flow for general 
Prandtl numbers \cite{Falkovichetal01}. Bauer \& Bernard concluded on this basis that there 
are no dissipative anomalies for scalars and densities transported by Burgers because they argued 
that there is no spontaneous stochasticity for Burgers flows \cite{BauerBernard99}. 
For scalar dissipation, however, it is backward-in-time stochasticity which is relevant and we have 
shown that this property holds for Burgers, at least at Prandtl number unity. Extending that result,
we further show that spontaneous stochasticity holds backward in time in some shock solutions of 
Burgers for all finite values of the Prandtl number. These results imply the existence of conservation-law 
anomalies  both for densities and for scalars advected by Burgers\footnote{The Bauer-Bernard picture 
with vanishing anomalies may be valid for the infinite Prandtl number limit, which is more subtle and only briefly 
discussed in this paper. See the recent work \cite{FrishmanFalkovich14}.}. They also lead to a new notion of a 
``Lagrangian weak solution'' for passive scalars in a compressible flow, when the standard
notion of distributional weak solution in the Eulerian formulation is not available.  
Finally, we discuss the importance of the time-irreversibility of Burgers equation and 
the associated differences with the time-reversible Kraichnan model. We suggest that 
the direction of turbulent cascades is related generally in irreversible fluid models 
to the time-asymmetry of Lagrangian particle behavior.  

The detailed contents of this paper are as follows: In section 2 we derive our Lagrangian
formula for conservation-law anomalies in Burgers. Section 3 presents the geometric 
construction of the stochastic inverse to the forward coalescing flow for Burgers. In section 4
we study the zero-viscosity limit of the Constantin-Iyer representation of viscous Burgers 
solutions and establish spontaneous stochasticity backward in time for unit Prandtl number. 
Section 5 discusses the non-uniqueness of the backward stochastic flows for Burgers but 
their conjectured unique characterization of the entropy weak solution.  Section 6 discusses 
passive densities in Burgers and sticky-particle dynamics. Section 7 studies passive scalars,
their conservation-law anomalies, and establishes spontaneous stochastic backward in time
for Burgers shocks at arbitrary Prandtl numbers. Section 8 discusses the time-asymmetry of 
Lagrangian particle statistics and its possible relation to dissipative anomalies. Two appendices 
provide more technical details for some of the proofs. 

\section{Lagrangian Formulation of Anomalous Dissipation}\lb{sec:lagdiss}

We derive first a Lagrangian expression for dissipative anomalies of inviscid Burgers. 
 
\subsection{Basic Burgers Facts}\lb{sec:basic} 

Before beginning, we remind the reader of some standard results about Burgers, many quite elementary. 
For example, see \cite{BecKhanin07}.  Let $u$ be a smooth solution of the inviscid Burgers equation 
for initial data $u_0$ at time $t_0.$ Using the standard method of characteristics, one can see that
\begin{align}
 x &= a+ (t-t_0) u_0(a), \,\,\,\,
u(x,t)=u_0(a). \lb{char} \end{align}
Note that 
$$ \xi_{t_0,t}(a) = a + (t-t_0)u_0(a) $$
is the Lagrangian flow map of fluid mechanics, with inverse $\alpha_{t_0,t}=\xi_{t_0,t}^{-1}$ the 
``back-to-labels''' map so that $u(x,t)=u_0(\alpha_{t_0,t}(x)).$ All of the following are simple consequences of (\ref{char}):
$$  u'(x,t) =u_0'(\alpha_{t_0,t}(x)) \alpha_{t_0,t}'(x) $$ 
$$ \xi_{t_0,t}'(a)  = {1}+ (t-t_0) u_0'(a) $$
$$ \alpha_{t_0,t}'(x) = {1}- (t-t_0) u'(x,t) = [\xi_{t_0,t}'(a)]^{-1}$$
and thus
$$
u'(x,t) = \frac{u_0'(a)}{1+(t-t_0) u_0'(a)}.
$$
It follows from the latter formula that, wherever $u'(a)<0$ at any initial point $a$, a shock will form in finite time 
from smooth initial data $u_0(a).$ The singularity will occur (unless the particle is absorbed first by another shock) at time 
$$
t= t_0+ \frac{1}{\max\{0,-u_0'(a)\}}.
$$  
The first shock occurs at the minimum of the above quantity, related to the maximum of the negative velocity gradient. 
At later times, all of the previous results for smooth solutions are valid at points between shocks.  

We consider Burgers solutions of bounded variation with countably many shocks located at coordinates 
$\{x_i^*\}_{i=1}^\infty$ at time $t$. Let $u^-_i$ be the velocity immediately to the left of the $i$th shock 
and $u^+_i$ the velocity immediately to the right. The \emph{Rankine-Hugoniot jump conditions} require that 
the shock velocity $u_i^*=dx_i^*/dt$ for any weak solution be an average:
\begin{align}\label{shock speed}
u_i^*=\frac{u^-_i+ u^+_i}{2}.  
\end{align}
{\it Entropy solutions} of inviscid Burgers have the property $u^-_i >u^+_i.$ As a matter of fact, it is 
well-known that the energy conservation anomaly at a Burgers shock is $\frac{1}{12}(u^+_i-u^-_i)^3,$
which is negative (dissipative) precisely when $u^-_i >u^+_i.$ This is also the Lax admissibility condition 
for weak solutions \cite{Lax57} in the context of Burgers. Thus, each shock corresponds to a 
Lagrangian interval $[a_i^-,a_i^+]$  such that $u_i^\pm = u_0(a_i^\pm)$ and 
\begin{align}
 x_i^* &= a^-_i+ t u^-_i = a^+_i+ t  u^+_i. \label{endpts}
\end{align}
The union of shock intervals in the Lagrangian space is denoted below as 
$
S=\bigcup_{i=1}^\infty[a_{i}^-,a_{i}^+].
$

\subsection{Dissipative Anomalies}\lb{sec:dissanom} 

Our goal in this section is to derive fundamentally Lagrangian expressions for dissipative anomalies
in inviscid Burgers, analogous to those obtained for integral invariants of passive scalars in the Kraichnan 
model \cite{Bernardetal98,GawedzkiVergassola00} . 
Thus let $\psi$ be a continuous function and $\Psi$ its anti-derivative.  Take $t_0=0$ for simplicity. Then 
\begin{align*}
\int_\RR\rmd x \ \psi(u(x,t))&=\int_{\RR\setminus\{x_i^*\}_{i=1}^\infty}\rmd x \ \psi(u(x,t))
=\int_{\RR\setminus\{x_i^*\}_{i=1}^\infty} \rmd x \ \psi(u_0(\alpha_{t_0,t}(x))) \\
&=\int_{\RR\setminus S} \rmd a \ \psi(u_0(a)) \ \xi_{t_0,t}'(a)\\
&=\int_{\RR} \rmd a \ \psi(u_0(a))\big( 1+ t u_0'(a)\big) - \int_{S} \rmd a \ \psi(u_0(a))\big( 1+ t u_0'(a)\big)\\
&=\int_{\RR} \rmd a \ \psi(u_0(a)) + t\int_{\RR} \rmd a \ \frac{d}{da}\Psi(u_0(a))- \int_{S} \rmd a \ \psi(u_0(a))\big( 1+ t u_0'(a)\big)\\
&=\int_{\RR} \rmd a \ \psi(u_0(a)) - \int_{S} \rmd a \ \psi(u_0(a))\big( 1+ t u_0'(a)\big)
\end{align*}
We used the assumption $\lim_{a\rightarrow\pm\infty} u_0(a)=u_\infty$ to set $\int_{\RR} \rmd a \ \frac{d}{da}\Psi(u_0(a))=0.$
We see that $\int_\RR\rmd x \ \psi(u(x,t))$ is conserved for a smooth Burgers solution, when $S=\emptyset.$

We now consider the case of weak solutions with shocks. We can rewrite the second term:
\begin{align*}
 \int_{S} \rmd a \ \psi(u_0(a))\big( 1+ t u_0'(a)\big)&= \int_{S} \rmd a \ \psi(u_0(a))+ t\int_{S} \rmd a \ \psi(u_0(a)) u_0'(a)\\
    &= \sum_{i=1}^\infty \bigg[ \int_{a_i^-}^{a_i^+} \rmd a \ \psi(u_0(a))+ t \int_{a_i^-}^{a_i^+} \rmd a \ \psi(u_0(a)) u_0'(a)\ \bigg]\\
    &= \sum_{i=1}^\infty \bigg[ \int_{a_i^-}^{a_i^+} \rmd a \ \psi(u_0(a))- t \int_{u_i^+}^{u_i^-} \rmd u \ \psi(u) \bigg]
\end{align*}
Thus,
\be
\int_\RR\rmd x \ \psi(u(x,t)) - \int_{\RR} \rmd a \ \psi(u_0(a)) 
= -\sum_{i=1}^\infty \bigg[ \int_{a_i^-}^{a_i^+} \rmd a \ \psi(u_0(a))- t \int_{u_i^+}^{u_i^-} \rmd u \ \psi(u) \bigg]
\lb{lagdiss} \ee
The right-hand side is a Lagrangian representation of the conservation law anomaly. 

A Burgers solution $u$ is a {\it dissipative} if, for any convex function $\psi,$ 
\be
\int_{a_i^-}^{a_i^+} \rmd a \ \psi(u_0(a))\geq  t \int_{u^+_i}^{u_i^-} \rmd u \ \psi(u), \,\,\,\,i=1,2,\dots
\lb{diss} \ee
Dividing by $a_i^+-a_i^-$ and using the relationship \eqref{endpts}, this is equivalent to 
\be
\frac{1}{a_i^+-a_i^-}\int_{a_i^-}^{a_i^+} \rmd a \ \psi(u_0(a))\geq  \frac{1}{u_i^- -u_i^+} \int_{u_i^+}^{u_i^-} \rmd u \ \psi(u),
\,\,\,\,i=1,2,\dots
\lb{diss2} \ee
Since both $\psi(u)=u$ and $\psi(u)=-u$ are convex functions, any dissipative solution must satisfy the relation 
\be {{1}\over{a^+_i-a^-_i}}\int_{a^-_i}^{a^+_i} u_0(a)\, da  = {{1}\over{2}}(u^-_i+u^+_i),  \lb{avrgcond} \ee
which will prove fundamental to our later work. 
Note 
that (\ref{avrgcond}) is equivalent to the standard ``Maxwell construction'' of the dissipative solution at shocks, in which 
one chooses the Lagrangian map of the weak solution to satisfy  $\xi_{t_0,t}^* (a)= x_i^*(t)$ for $a\in [a_i^-,a_i^+],$ 
under the constraint 
\be \int_{a^-}^{a^+} da\, \left[\xi_{t_0,t}(a) - \xi_{t_0,t}^*(a)\right]=0, \lb{max} \ee
with $\xi_{t_0,t}(a) = a +u_0(a) t $ the naive Lagrangian map \cite{BecKhanin07}. 
To see this, substitute the definitions of the maps and integrate to 
give an equivalent expression of the Maxwell construction as 
$$ x_i^*(t) = \frac{1}{2}(a_i^-+a_i^+) + \frac{t}{a_i^+-a_i^-}\int_{a_i^-}^{a_i^+} u_0(a) \ \rmd a . $$
On the other hand, the average of the two expressions in (\ref{endpts}) gives 
\be x_i^*(t) = \frac{1}{2}(a_i^-+a_i^+) + \frac{t}{2}(u_i^-+u_i^+), \lb{avrgcond2} \ee
from which (\ref{avrgcond}) is obviously equivalent to (\ref{max}). 

We now show that (\ref{avrgcond2}) with $u_i^->u_i^+$ implies (\ref{diss}), at any 
final time $t_f.$ Since the argument applies to every shock, we hereafter drop the $i$ subscript. 
The argument is best understood graphically, so we refer to the Fig.1 below which plots a typical Burgers shock: 

\begin{figure}[!ht]
\begin{center}
\includegraphics[height=3in,width=3.5in]{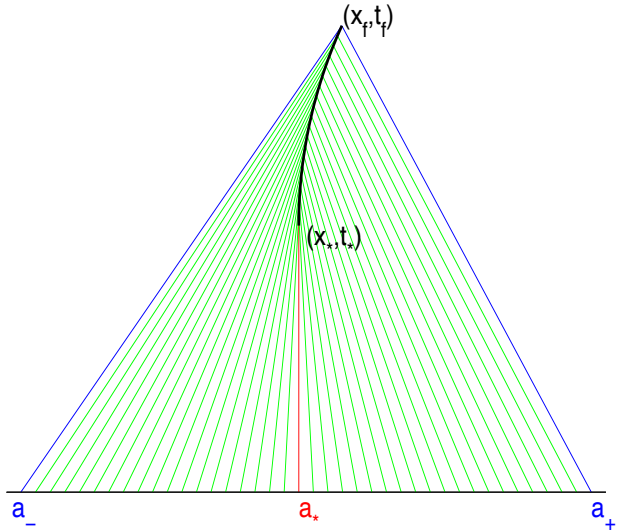}
\caption{{\small {\it Spacetime Plot of a Burgers Shock.} Shown in \textcolor{green}{green} are 
the straight lines corresponding to the smooth particle motions. These converge onto the shock curve
in {\bf black}, which begins at $(x_*,t_*)$ and ends at $(x_f,t_f),$ the final time considered. On the abscissa 
is the space of Lagrangian positions at time $0,$ showing the shock interval $[a_-,a_+]$ and, in 
\textcolor{red}{red}, the label $a_*$ and straight-line characteristic where the shock first forms at time $t_*.$   
 }}
\end{center}
\end{figure}

\noindent Note that the straight characteristic passing through the initial point with label $a$ has slope 
equal to $1/u_0(a).$ Thus, this graph represents the configuration used to obtain the average on the 
lefthand side of (\ref{diss2}). On the other hand, the righthand side of (\ref{diss2}) is obtained from a 
uniformized configuration in which the true initial velocity $u_0(a)$ at each point $a$ is replaced by an 
``apparent initial velocity'' $(x^*_f-a)/t_f.$ This configuration is represented in Fig.2 below by the straight 
line drawn from each point $(a,0)$ to the final point $(x_f^*,t_f).$ The inequality in (\ref{diss2}) is the 
statement that the uniform distribution on the velocity interval $[u_+,u_-]$ is less spread out 
than the distribution of the true initial velocity, as measured by the convex function $\psi.$ To show this,
we can gradually ``lift'' the characteristic lines along the shock curve $x_*(s)$ from $s=t_*$ to $s=t_f$. 
We can expect that the integral is successively decreased by this operation. 
To make this argument analytically, we introduce the function 
$$\Delta_\psi(s) = \int_{a_-(s)}^{a_+(s)} \psi\left(\frac{x_*(s)-a}{s}\right) \ \rmd a 
    + \int_{[a_-,a_-(s)]\cup [a_+(s),a_+]} \psi(u_0(a)) \ \rmd a, $$

\begin{figure}[!ht]
\begin{center}
\includegraphics[height=3in,width=3.5in]{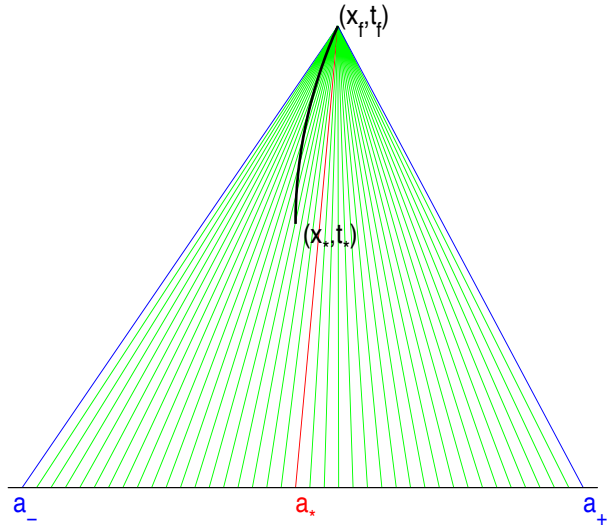}
\caption{{\small {\it ``Uniformized'' Burgers Shock.} Compared with the previous Fig.1, 
all straight-line characteristics have been replaced by straight lines from initial
point $(a,0)$ to the final point $(x_f^*,t_f).$}}
\end{center}
\end{figure}

\noindent 
for $s\in [0,t],$ where $[a_-(s),a_+(s)]$ is the Lagrangian interval at time 0 for the shock located at $x_*(s)$ 
at time $s.$ Note that for $s<t_*,$ the time of first appearance of the shock, 
$$ \Delta_\psi(s) = \int_{a_-}^{a^+} \psi(u_0(a)) \ \rmd a, $$      
while for $s=t$
$$ \Delta_\psi(t) = \int_{a_-}^{a_+} \psi\left(\frac{x_*(t)-a}{t}\right) \ \rmd a = t \int_{u^+}^{u^-} \psi(u) \ \rmd u.  $$
Thus, the total dissipative anomaly over time interval $[0,t]$ (for a single shock) is the difference $\Delta_\psi(t)-\Delta_\psi(0).$
We shall show that $\Delta_\psi(s)$ is non-increasing in $s.$ Taking the $s$-derivative and using (\ref{endpts}) gives
$$ \frac{d}{ds}\Delta_\psi(s) = \frac{1}{s} \int_{a_-(s)}^{a_+(s)} \psi'\left(\frac{x_*(s)-a}{s}\right) \left(u_*(s)- \frac{x_*(s)-a}{s}\right) \ \rmd a $$
Convexity of $\psi$ implies that $\psi\left(\frac{x_*(s)-a}{s}\right) + \psi'\left(\frac{x_*(s)-a}{s}\right) \left(u_*(s)- \frac{x_*(s)-a}{s}\right)
\leq \psi(u_*(s))$ and thus
\be  \frac{d}{ds}\Delta_\psi(s) \leq  \frac{1}{s} \int_{a_-(s)}^{a_+(s)} \left[\psi(u_*(s))-\psi\left(\frac{x_*(s)-a}{s}\right)\right] \ \rmd a. \lb{Psip} \ee
On the other hand, condition (\ref{avrgcond2}) for time $s$ can be rewritten as 
\be u_*(s) = \frac{1}{a_+(s)-a_-(s)}\int_{a_-(s)}^{a_+(s)} \frac{x_*(s)-a}{s} \ \rmd a, \lb{great} \ee
so that convexity of $\psi$ yields also
\be \psi(u_*(s)) \leq \frac{1}{a_+(s)-a_-(s)}\int_{a_-(s)}^{a_+(s)} \psi\left(\frac{x_*(s)-a}{s}\right) \ \rmd a, \lb{great2} \ee
and hence the non-positivity of the righthand side of (\ref{Psip}).  Thus, $\Delta_\psi$ is non-increasing, and $\Delta_\psi(t)\leq \Delta_\psi(0),$
which is equivalent to (\ref{diss}). 

This proof gives a simple Lagrangian interpretation of the dissipative anomaly 
for Burgers equation: information about the initial velocity is ``erased'' as the particles fall into the shock and 
the initial velocity distribution is replaced by a uniform distribution in the shock interval. This decreases the average 
value of any convex function of the velocities because, instantaneously, the velocities are mixed 
(homogenized) by the shock to be closer to its own velocity $u_*(s).$\footnote{This result is a sort of Burgers-equation 
version of Landauer's principle in the physical theory of computation, which states that erasure of information implies 
entropy production \cite{Landauer61}. One may also see some resemblance with the generalized second 
law of black-hole thermodynamics \cite{Mukohyama97}, with the shock being analogous to the event horizon
of the black hole.
} Note that the above argument yields
a new Lagrangian expression for the dissipative anomaly: 
\begin{eqnarray}
&& \int_\RR\rmd x \ \psi(u(x,t)) - \int_{\RR} \rmd a \ \psi(u_0(a)) \cr
&& \,\,\,\,\,\,\,\,\,\,\,\,\,\,\,\,\,\,\,\,
=\sum_{i=1}^\infty \int_0^t \frac{\rmd s}{s} \int_{a_i^-(s)}^{a_i^+(s)} \psi'\left(\frac{x_i^*(s)-a}{s}\right) \left(u_i^*(s)- \frac{x_i^*(s)-a}{s}\right) \ \rmd a \cr
&& \,\,\,\,\,\,\,\,\,\,\,\,\,\,\,\,\,\,\,\,
=\sum_{i=1}^\infty \int_0^t \frac{\rmd s}{s} \int_{a_i^-(s)}^{a_i^+(s)} \Big(\psi(u^*_i(s))-\psi(u_i(s)) - D_\psi^{u_i(s)}(u^*_i(s),u_i(s))\Big)  \ \rmd a. 
\lb{lagdiss2} \end{eqnarray}
We have introduced here the notation $u_i(s)=\frac{x_i^*(s)-a}{s}$ and used the definition of the {\it Bregman divergence} 
between $u$ and $u^*$ with respect to the convex function $\psi$ \cite{Bregman67}:  
$$ D_\psi^u(u^*,u) = \psi(u^*)-\psi(u) - \psi'(u) \cdot (u^*- u). $$
Instantaneously, one has
\be \frac{d}{dt} \int_\RR\rmd x \ \psi(u(x,t)) = \frac{1}{t}
\sum_{i=1}^\infty \int_{a_i^-(t)}^{a_i^+(t)} \Big(\psi(u^*_i(t))-\psi(u_i(t)) - D_\psi^{u_i}(u_i^*(t),u_i(t))\Big)  \ \rmd a. \lb{lagdiss3} \ee
Since $D_\psi^u(u^*,u)\geq 0,$ we can easily see that the contribution from each shock to the anomaly is non-positive 
using inequality (\ref{great2}). 

We recall the standard Eulerian result for the dissipative anomaly
$$ \frac{d}{dt} \int_\RR\rmd x \ \psi(u(x,t)) = \sum_{i=1}^\infty \Big(u_i^*(t) (\psi(u^-_i)-\psi(u^+_i))-(J(u_i^-)-J(u_i^+))\Big). $$
Here $(\psi,J)$ is a Lax entropy pair with entropy flux function defined by 
$$ J(u) = \int \rmd u \ u\ \psi'(u).$$
See \cite{Dafermos05,Bressan09}.  Our Lagrangian formula makes connection with the work of Khanin \& 
Sobolevski\u{\i} \cite{KhaninSobolevski10} on particle dynamics for Hamilton-Jacobi equations. 
They defined a ``dissipative anomaly'' which measured the rate of the difference in 
the action functional between true action minimizers and trajectories of particles on shocks. For Burgers as a 
Hamilton-Jacobi equation, the Hamiltonian and Lagrangian coincide with the convex function $\psi(u)=\frac{1}{2}u^2.$
With this choice of $\psi,$ the ``dissipative anomaly'' of  \cite{KhaninSobolevski10} is the maximum over $\pm$
of the Bregman divergences $D_\psi^{u_i^\pm}(u_i^*,u_i^\pm)=\frac{1}{2}|u_i^\pm-u_i^*|^2.$ Further relations 
with their work will be explored in section \ref{sec:reversal}. 

However, we first exploit the results of the present section to show how to construct, for any entropy solution $u$ of 
inviscid Burgers, a backward Markov process of generalized solutions of the ODE $dx/dt=u(x,t)$. This process is 
thus a generalized (stochastic) inverse of the forward coalescing flow for inviscid Burgers, which has been considered 
by many authors \cite{BauerBernard99, Bogaevsky04}. The essential property of the stochastic inverse 
considered here is that the velocities of the process are backward martingales, generalizing the result for smooth solutions 
of inviscid Burgers that velocities are Lagrangian invariants (preserved along characteristics). As we shall see, it is this 
backward martingale property which implies Lagrangian dissipativity of the entropy solution and it is natural to conjecture 
that this property uniquely characterizes the entropy solution. On the other hand, the stochastic inverses with the 
above stated properties are not themselves unique.  In the following section \lb{sec:geom} we construct a set of such inverses 
by a direct geometric method. Then in section \ref{sec:zerovisc} we obtain another such stochastic inverse via the zero-viscosity 
limit of the backward diffusion process in the Constantin-Iyer representation of viscous Burgers solutions. 

\section{Geometric Construction}\lb{sec:geom} 


To present the geometric construction first in the simplest case, we consider the situation that a single shock has formed 
at time $t_*>0$ and consider a later time $t_f>t_*,$ but before the shock in question has merged with any other. 
The location of this shock for times $t\in [t_*,t_f]$ is denoted by $x_*(t)$ and, at the final time, $x_*^f=x_*(t_f).$  
The random process will consist of continuous curves $x(t)$ over the time interval $[0,t_f]$ satisfying 
$x(t_f)=x_*^f$ a.s. The guiding idea of the construction is to consider the interval of Lagrangian positions 
$[b_-^f,b_+^f]$ at any time $t_0\in [0,t_*)$ which belong to the shock at time $t_f$ and to assume a uniform 
probability distribution over the positions $b$ in this interval. The definition of the random process 
can be understood geometrically, with reference to Fig.1 for the choice $t_0=0.$ The uniform probability 
on the interval $[a_-,a_+]$ is mapped to the shock curve by the straight-line characteristics. Points $a$ 
to the left of $a_*$  map to the shock curve a probability density $p_-(\tau)$ at the time $\tau=(x_*(\tau)-a)/u(a)$ 
when the characteristic enters the shock. Likewise, points $a$ to the right of $a_*$  map to the shock curve 
a probability density $p_+(\tau)$ at the time also determined by $\tau=(x_*(\tau)-a)/u(a).$ Backward 
in time, the random process corresponds to paths which follow the shock curve until they leave either 
to the right or to the left of the shock at time $\tau$ with probability densities $p_\pm(\tau),$ respectively. After leaving the shock,
the paths of the random process move along straight-line characteristics backward in time to initial time $0.$

We now describe the construction analytically, for any choice of initial time $t_0\in [0,t_*).$
Let $u_\pm(\tau)$ be the velocities to the right/left of the shock at times $\tau\in [t_*,t_f].$ The realizations of the 
random process which we construct are of the form
\be  x_\pm(t;\tau) = \left\{ \begin{array}{ll}
                                       x_*(t) & t\geq \tau \cr
                                       x_*(\tau) + u_\pm(\tau)\cdot (t-\tau) & t< \tau 
                                       \end{array} \right. , \,\,\,\, t\in [0,t_f],\tau \in [t_*,t_f]. \lb{real} \ee
These are generalized solutions of the ODE $dx/dt=u(x,t)$ in the sense of \cite{BauerBernard99}. 
That is, they satisfy
$$ D_t^+x(t)=\bar{u}(x(t),t) $$
with $D_t^+x(t)=\lim_{\epsilon\rightarrow 0+} \frac{x(t+\epsilon)-x(t)}{\epsilon}$ and $\bar{u}(x,t)=\frac{1}{2}(u(x+,t)+u(x-,t)).$
Thus, as stated above, the curve coincides with the shock moving backward to time $\tau$ and then branches 
off to the right/left as a particle trajectory for a smooth solution of Burgers. To specify the process we need only 
give the probability densities $p_\pm(\tau)$ to branch off at time $\tau$ which satisfy 
\be \int_{t_*}^{t_f} \rmd\tau \ [p_+(\tau)+p_-(\tau)] =1. \lb{norm} \ee
To assign these probabilities we note that the positions at time $t$ of particles located at $b$ at time $t_0$
are given by $x=b+u(b,t_0)(t-t_0).$ Hence, these particles hit the shock at time $\tau$ for the two points 
$b_\pm(\tau)$ determined by 
\be x_*(\tau) = b_\pm + u(b_\pm,t_0)(\tau-t_0). \lb{transf} \ee
We now assume a uniform probability distribution of these $b$ on the interval $[b_+^f,b_-^f],$ that is, 
$b_+$ is distributed on $[b_*,b_+^f]$ with density $db/(b_+^f-b_-^f)$ and 
$b_-$ is distributed on $[b_-^f,b_*]$ with density $db/(b_+^f-b_-^f),$ where $b_*$ is the particle location 
which shocks at time $t_*.$ Using (\ref{transf}) these probability assignments can be transformed into 
probability densities $p_\pm(\tau).$  Note taking the $\tau$-derivative of (\ref{transf}) gives
$$ \dot{b}_\pm(\tau) = \pm \frac{\frac{1}{2}(u_-(\tau)-u_+(\tau))}{1+u'(b_\pm(\tau),t_0)(\tau-t_0)}
= \pm \frac{1}{2}(u_-(\tau)-u_+(\tau)) (1-u'_\pm(\tau)(\tau-t_0)) , $$
with $u'_\pm(\tau)=u'(x_*(\tau)\pm 0,\tau).$ Since also $b_+^f-b_-^f=(u_-^f-u_+^f)(t_f-t_0)$
we obtain the unique assignment
\be p_\pm(\tau)=\frac{(u_-(\tau)-u_+(\tau)) (1-u'_\pm(\tau)(\tau-t_0))}{2(u_-^f-u_+^f)(t_f-t_0)}, 
\,\,\,\, \tau\in [t_*,t_f]. \lb{ppm} \ee
which completely specifies the process. 

This random process is Markov in an extended state space $X(\tau)\subset \RR \times \{-1,0,1\},$ 
which is defined, precisely, by 
$$ X(\tau)=\{(x,-1):\ x\leq x_*(\tau)\} \bigcup \{(x,0):\ x=x_*(\tau)\} \bigcup \{(x,+1):\ x\geq x_*(\tau)\} $$
for $\tau\geq t_*,$ with the three subsets denoted $X_{-1}(\tau),X_0(\tau),X_{+1}(\tau),$ respectively. 
Likewise, 
$$ X(\tau)=\{(x,-1):\ x<a_*(\tau)\} \bigcup \{(x,0):\ x=a_*(\tau)\} \bigcup \{(x,+1):\ x> a_*(\tau)\} $$
for $\tau<t_*$ with $x_*(\tau)$ replaced by the particle position $a_*(\tau)$ at time $\tau$ which evolves 
into the shock at time $t_*.$ The time-dependent infinitesimal generator $L(\tau)$ of the process is
$$ L(\tau)f(x,\pm 1) = -u(x,\tau)f'(x,\pm 1), \ x\in X_{\pm 1}(\tau) $$
$$ L(\tau)f(x,0) = -u_*(\tau)f'(x,0) + \sum_{\alpha=\pm 1} \lambda_\alpha(\tau) [f(x,\alpha)-f(x,0)], \ x\in X_{0}(\tau) $$
for $\tau\geq t_*,$ and
$$ L(\tau)f(x,\alpha) = -u(x,\tau)f'(x,\alpha), \ (x,\alpha)\in X(\tau) $$
for $\tau<t_*.$ Here we have used the definitions of jump rates to the right/left off the shock as
$$ \lambda_\pm(\tau) = p_\pm(\tau)/P(\tau), \,\,\,\, P(\tau)=\int_{t_*}^\tau \rmd t \ [p_+(t)+p_-(t)]. $$
The factor $P(\tau)$ is the probability that the particle remains on the shock at time $\tau$ backward 
in time and appears because of the definition of the generator through a conditional expectation. 
Consistent with the fact that the particle must leave the shock by time $t_*$ a.s., 
$\lim_{\tau\rightarrow t_*+} \lambda_\pm(\tau)=+\infty.$ 
Notice that this is a jump-drift process in the extended state space but, projected down to $\RR,$
the realizations $x(t)$ are continuous functions of time $t$. On the other hand, the velocity process 
$D_t^+x(t)=\bar{u}(x(t),t)$---which we denote for simplicity as $\dot{x}(t)$---is only right-continuous 
with jump discontinuities at discrete times.  

We next establish an important property for this process: 
\begin{proposition} 
$$ {\mathbb E}(\dot{x}(t)) = u_*^f \,\,\,\, \mbox{for all $t\in [0,t_f].$} $$
\end{proposition}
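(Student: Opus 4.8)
The plan is to reparametrize the random process by the Lagrangian label of the characteristic a realization rides, rather than by its branching time $\tau$, and then to reduce the claim to the average condition (\ref{avrgcond}). Fix $t_0=0$ (the construction and the final answer are independent of the choice of $t_0\in[0,t_*)$). The densities $p_\pm(\tau)$ in (\ref{ppm}) were engineered precisely so that the starting label is uniformly distributed: since $b_+^f-b_-^f=(u_-^f-u_+^f)(t_f-t_0)$ and $\rmd b=|\dot b_\pm(\tau)|\,\rmd\tau$, equation (\ref{ppm}) is exactly $p_\pm(\tau)=|\dot b_\pm(\tau)|/(b_+^f-b_-^f)$, the pushforward of the uniform law on $[b_-^f,b_+^f]$. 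Hence an equivalent description of each realization is: draw a label $a$ uniformly on the final shock interval $[a_-,a_+]=[b_-^f,b_+^f]$; follow the straight characteristic $x=a+u_0(a)t$ forward until it meets the shock curve, and ride the shock thereafter.

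Next I would compute the velocity $\dot x(t)=\bar u(x(t),t)$ along the realization with label $a$ at a fixed time $t\in[0,t_f]$. Let $[a_-(t),a_+(t)]\subseteq[a_-,a_+]$ be the Lagrangian interval of labels that have already entered the shock by time $t$. If $a\notin[a_-(t),a_+(t)]$ the path is still on its characteristic and $\dot x(t)=u_0(a)$; if $a\in[a_-(t),a_+(t)]$ the path rides the shock, where by Rankine--Hugoniot (\ref{shock speed}) one has $\bar u(x_*(t),t)=u_*(t)$. Taking the expectation over uniform $a$,
\begin{equation*}
{\mathbb E}(\dot x(t))=\frac{1}{a_+-a_-}\left[u_*(t)\,(a_+(t)-a_-(t))+\int_{[a_-,a_-(t)]\cup[a_+(t),a_+]}u_0(a)\,\rmd a\right].
\end{equation*}
The crucial observation is that (\ref{avrgcond}), applied to the shock as it stands at the intermediate time $t$, gives $u_*(t)\,(a_+(t)-a_-(t))=\int_{a_-(t)}^{a_+(t)}u_0(a)\,\rmd a$, so replacing $u_0$ by the constant $u_*(t)$ on the sub-interval does not change the integral. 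The two terms therefore merge into $\int_{a_-}^{a_+}u_0(a)\,\rmd a$, which is manifestly independent of $t$, and a final application of (\ref{avrgcond}) at time $t_f$ collapses it to $\tfrac12(u_-^f+u_+^f)=u_*^f$. For $t<t_*$ the interval $[a_-(t),a_+(t)]$ is degenerate, the on-shock term is absent, and the same identity at $t_f$ gives the result directly.

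The step I expect to require the most care is the validity of (\ref{avrgcond}) along the entire shock trajectory, not merely at $t_f$: I must know that at each $s\in[t_*,t_f]$ the interval $[a_-(s),a_+(s)]$ is exactly the Lagrangian preimage of $x_*(s)$ and that the Maxwell construction holds there, which is the content of (\ref{great}) (equivalently (\ref{avrgcond2})) read at time $s$ and is the $u_0$-average form obtained by integrating the Maxwell constraint (\ref{max}). A secondary point is to confirm the equivalence of the $\tau$- and label-parametrizations rigorously, in particular that $|\dot b_\pm(\tau)|>0$ so that the change of variables is a bijection; the entropy inequality $u_-(\tau)>u_+(\tau)$ together with $1-u'_\pm(\tau)(\tau-t_0)>0$ off the shock guarantees positivity. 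With these in hand the remaining computation is routine.
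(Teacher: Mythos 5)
Your proof is correct and is, at bottom, the paper's own argument in a cleaner parametrization. The paper establishes the proposition by pushing the uniform law on $[b_-^f,b_+^f]$ forward/backward to explicit densities at each time (eqs.\ (\ref{backpdf}), (\ref{forepdf})) plus an atom at the shock, and then invoking (\ref{avrgcond}) for the atom's contribution; it even remarks after the $t<t_0$ case that the explicit density computations ``follow directly by its definition,'' which is exactly the shortcut you take by conditioning on the uniformly distributed label and using constancy of $\dot{x}$ along each realization until shock entry. Your split into on-shock labels (contributing $u_*(t)\,(a_+(t)-a_-(t))$) and off-shock labels is the paper's atom-plus-continuous decomposition, and your ``crucial observation''---(\ref{avrgcond}) applied to the shock interval at the intermediate time $t$, valid because the Maxwell construction (\ref{max}) holds at every time along the shock---is precisely the step the paper flags as the fundamental property. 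One correction is needed: your parenthetical claim that the \emph{construction} is independent of the choice of $t_0\in[0,t_*)$ is false. Section \ref{sec:nonuniq} emphasizes that different choices of $t_0$ yield genuinely different processes---a uniform label distribution at one time is non-uniform at another, cf.\ (\ref{backpdf}), (\ref{forepdf})---and this is exactly the source of the non-uniqueness of the stochastic inverse. Fortunately this does not damage the result: for a general $t_0$ your argument applies verbatim with $(a,u_0)$ replaced by $(b,u(\cdot,t_0))$ for $t\geq t_0$, and for $t<t_0$ (a case absent when $t_0=0$ but treated explicitly in the paper's proof) each realization sits on a smooth characteristic along which the velocity is preserved, so ${\mathbb E}(\dot{x}(t))$ remains the uniform average of $u(\cdot,t_0)$ over $[b_-^f,b_+^f]$, which is $u_*^f$ by (\ref{avrgcond}).
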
\lb{fun} 

\noindent
{\it Proof:} This is obvious for $t=t_0$ since, by construction,
\be {\mathbb E}(\dot{x}(t_0)) = \frac{1}{b_+^f-b_-^f}\int_{b_-^f}^{b_+^f} \rmd b \ u(b,t_0) = u_*^f. \lb{fund-tstar} \ee

Next consider $0\leq t<t_0.$ Since all of the realizations of the process are smooth solutions of 
Burgers for $t<t_0,$ we have a smooth invertible relation between the Lagrangian positions at times $t_0$ and $t$:
\be  b = c + u(c,t) (t_0-t), \,\,\,\, u(b,t_0) = u(c,t). \lb{backmap} \ee
The uniform distribution on $b\in [b_-^f,b_+^f]$ is transformed into the distribution with density 
\be p(c,t) = \frac{1 + u'(c,t)(t_0-t) }{b_+^f-b_-^f} \lb{backpdf} \ee
on the shock interval $[c_-^f,c_+^f]$ at time $t.$ Note this density is nonnegative and 
$\int_{c_-^f}^{c_+^f} \rmd c \ p(c,t) =1$ using $b_+^f-b_-^f=c_+^f-c_-^f+(u_+^f-u_-^f)(t_0-t).$
We can immediately infer that for $0\leq t<t_0.$
\begin{eqnarray} 
{\mathbb E}(\dot{x}(t)) &=& \int_{c_-^f}^{c_+^f} \rmd c \ u(c,t) \ p(c,t) \cr
                                    &=& \frac{1}{b_+^f-b_-^f}\left[\int_{c_-^f}^{c_+^f} \rmd c \ u(c,t) 
                                             + \frac{1}{2}(|u_+^f|^2-|u_-^f|^2)(t_0-t)\right] \cr
                                    &=&  \frac{1}{c_+^f-c_-^f}\int_{c_-^f}^{c_+^f} \rmd c \ u(c,t) = u_*^f        
\end{eqnarray} 
using in the second line $\frac{1}{2}(u_+^f+u_-^f)=\frac{1}{c_+^f-c_-^f} \int_{c_-^f}^{c_+^f} \rmd c \ u(c,t)$ 
to obtain the third line. Although we have verified these properties by explicit calculations with (\ref{backpdf}),
they indeed follow directly by its definition. In particular, the average of $u(c,t)$ for $p(c,t)$ must coincide 
by (\ref{backmap}) with the uniform average of $u(b,t_0)$ over $[b_-^f,b_+^f].$ 

\begin{figure}[!ht]
\begin{center}
\includegraphics[height=3in,width=3.5in]{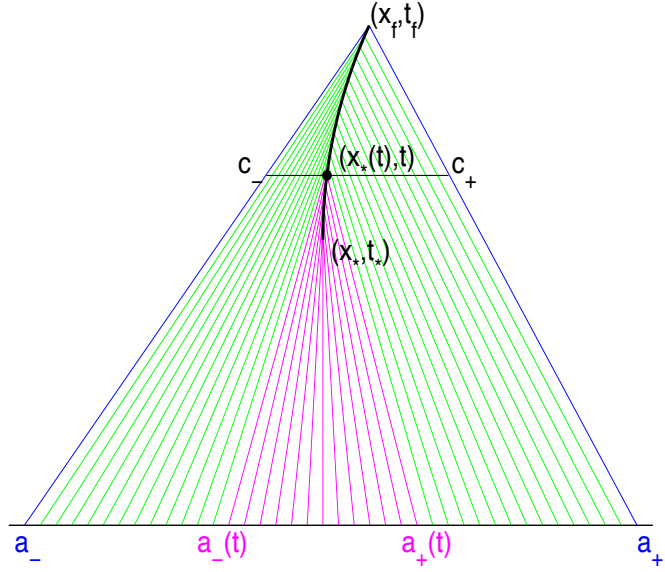}
\caption{{\small {\it Particle Positions at Time $t>t_*$.} The horizontal black line
shows the set of possible positions $[c_-,c_+]$ at time $t$ of particles in the shock interval. 
There is an atom of positive probability situated on the shock itself, indicated by the black dot.
The probability of this atom is the fraction of the interval $[a_-^f,a_+^f]$ inside the 
subinterval $[a_-(t),a_+(t)]$ which maps into the shock at time $t.$}}
\end{center}
\end{figure}

Finally, we consider the case $t_0<t<t_f.$ When $t_0<t<t_*,$ the analysis is similar to that above, except 
that particle labels $c$ at time $t$ are related to those at time $t_0$ by
\be c = b + u(b,t_0) (t-t_0), \,\,\,\, u(c,t) = u(b,t_0), \lb{foremap} \ee
where $c\in [c_-^f,c_+^f]$. A uniform distribution on $[b_-^f,b_+^f]$ implies a density 
\be p(c,t) = \frac{1-u'(c,t)(t-t_0)}{b_+^f-b_-^f}  \lb{forepdf} \ee
which is nonnegative, as seen from $1-u'(c,t)(t-t_0)=[1+u'(b,t)(t-t_0)]^{-1},$ and integrates to 1 and has 
mean value $u_f^*.$ However, when $t_*<t<t_f,$ the situation is different, as illustrated in Fig.~3 for the 
special choice $(b,t_0)=(a,0).$ Now there are two components to the probability distribution 
of $x(t),$ a continuous component and an atom of positive probability associated to the shock.  
The continuous component has total probability 
\be \int_t^{t_f} \rmd \tau \ [p_+(\tau)+p_-(\tau)] = \frac{1}{b_+^f-b_-^f}\int_{[b_-^f,b_-(t)]\cup [b_+(t),b_+^f]} \rmd b 
    = 1- \left(\frac{b_+(t)-b_-(t)}{b_+^f-b_-^f}\right) \lb{norm2} \ee
and the atom has the complementary probability $\frac{b_+(t)-b_-(t)}{b_+^f-b_-^f}$.  In Fig.~3 this latter probability 
corresponds to the relative length of the subinterval $[a_-(t),a_+(t)]$ inside $[a_-^f,a_+^f],$ which is mapped by 
the forward coalescing flow into the atom (magenta lines). The continuous component is again given by 
(\ref{foremap}),(\ref{forepdf}), but now $b\in [b_-^f,b_-(t)]\cup [b_+(t),b_+^f]$ and the continuous density
integrates to the value in the right-hand side of (\ref{norm2}). The last statement follows because the transformation 
$c \leftrightarrow b$ is smooth and invertible between $[c_-^f,c_+^f]$ and $[b_-^f,b_-(t)]\cup [b_+(t),b_+^f],$ or 
can be checked by explicit calculation. It likewise follows that the contribution of the continuous component 
to ${\mathbb E}(\dot{x}(t))$ is now:
$$ \int_{c_-^f}^{c_+^f} \rmd c \ u(c,t) \ p(c,t) = \frac{1}{b_+^f-b_-^f}\int_{[b_-^f,b_-(t)]\cup [b_+(t),b_+^f]} \rmd b \ u(b,t). $$
The contribution of the atom to ${\mathbb E}(\dot{x}(t))$, on the other hand, is
$$  \left(\frac{b_+(t)-b_-(t)}{b_+^f-b_-^f}\right)\cdot u_*(t) = 
       \frac{1}{b_+^f-b_-^f}\int_{[b_-(t),b_+(t)]} \rmd b \ u(b,t), $$
where we have used the fundamental property (\ref{avrgcond}) that 
$$ u_*(t) = \frac{1}{b_+(t)-b_-(t)}\int_{b_-(t)}^{b_+(t)} \rmd b \ u(b,t). $$       
Finally, adding the two contributions gives
$$ {\mathbb E}(\dot{x}(t)) =  \frac{1}{b_+^f-b_-^f} \int_{b_-^f}^{b_+^f} \rmd b \ u(b,t) = u_*^f. $$
\hfill $\Box$

An elaboration of these arguments shows furthermore that
\begin{proposition}
$$ {\mathbb E}\big(\dot{x}(t)\big|\dot{x}(s)\big) = \dot{x}(s) \,\,\,\, \mbox{for all $t\leq s,$ $t,s\in [0,t_f].$} $$
\end{proposition}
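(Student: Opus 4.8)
The claim is the backward-martingale property $\mathbb{E}\big(\dot{x}(t)\big|\dot{x}(s)\big)=\dot{x}(s)$ for $t\leq s$, which is the promised strengthening of Proposition~1 (the latter being the unconditional statement $\mathbb{E}(\dot{x}(t))=u_*^f$). The plan is to exploit the Markov structure in the extended state space $X(\tau)\subset\RR\times\{-1,0,1\}$ and reduce the conditional statement at time $s$ to the unconditional statement already proved, by showing that conditioning on $\dot{x}(s)$ is essentially conditioning on the ``sub-shock'' that the particle belongs to at time $s$. The key observation is that, because the realizations $x_\pm(t;\tau)$ in (\ref{real}) are deterministic given the branch-off time $\tau$ and side $\pm$, the value $\dot{x}(s)$ at an intermediate time $s$ encodes exactly which straight-line characteristic (or the shock itself) the path occupies at time $s$.

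First I would condition on the state $(x(s),\alpha(s))$ in the extended space at time $s$ and split into cases. If the path has already branched off the shock by time $s$ (i.e. $\alpha(s)=\pm 1$, equivalently $\tau>s$), then for $t\leq s$ the path follows a single straight-line characteristic of a smooth Burgers solution, along which the velocity is \emph{constant}; hence $\dot{x}(t)=\dot{x}(s)$ identically on this event and the martingale identity is trivial. The substantive case is when the path still lies on the shock at time $s$ (i.e. $\alpha(s)=0$, $\dot{x}(s)=u_*(s)$). Here I would use the fundamental averaging property (\ref{avrgcond}), now applied at the intermediate time $s$ to the sub-interval $[b_-(s),b_+(s)]$ of labels that are still in the shock at time $s$: conditioned on being on the shock at time $s$, the path is distributed exactly like the process built from the uniform distribution on $[b_-(s),b_+(s)]$, and so the conditional expectation of $\dot{x}(t)$ for any $t\leq s$ equals the shock velocity $u_*(s)=\dot{x}(s)$ by the very computation carried out in Proposition~1.

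Concretely, the main work is to verify the \emph{consistency} that conditioning the original process (built from uniform measure on $[b_-^f,b_+^f]$) on the event $\{\text{still on shock at }s\}$ reproduces the uniform measure on the restricted label interval $[b_-(s),b_+(s)]$. This follows from the explicit densities (\ref{forepdf}) and the branch-off densities $p_\pm(\tau)$ in (\ref{ppm}): the normalization in (\ref{norm2}) identifies the conditional probability of remaining on the shock, and the back-map (\ref{transf})/(\ref{foremap}) shows the conditional law of the surviving labels is again uniform. One then re-runs the three-case computation of Proposition~1 verbatim with $t_0$ replaced by $s$ and $[b_-^f,b_+^f]$ replaced by $[b_-(s),b_+(s)]$, the crucial input being the averaging identity $u_*(s)=\frac{1}{b_+(s)-b_-(s)}\int_{b_-(s)}^{b_+(s)}u(b,s)\,\rmd b$ at the intermediate time.

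The main obstacle I anticipate is precisely this conditioning/consistency step: one must argue carefully that the Markov process restricted to the shock at time $s$ is itself a copy of the geometric construction over the interval $[s,t_f]$ with the correct (uniform) initial label distribution, rather than merely using the generator $L(\tau)$ formally. The cleanest route is to lift everything to the extended state space and invoke the Markov property of the jump-drift process generated by $L(\tau)$, which guarantees that the future evolution depends only on the current state $(x(s),\alpha(s))$; then the $\alpha(s)=0$ case reduces to re-initializing the construction at time $s$ on the shock interval $[b_-(s),b_+(s)]$. Once that reduction is in place, Proposition~1 (applied at the new initial time $s$) delivers $\mathbb{E}(\dot{x}(t)\mid\alpha(s)=0)=u_*(s)$, completing the proof.
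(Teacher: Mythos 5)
Your proposal is correct and takes essentially the same approach as the paper's proof, which splits on $\dot{x}(s)\neq u_*(s)$ (conditional law supported on smooth characteristics along which the velocity is preserved, so the identity is immediate) versus $\dot{x}(s)=u_*(s)$ (conditional law equal to the uniform distribution on the restricted label interval $[b_-(s),b_+(s)]$ at time $t_0$, reducing to the computation of Proposition 3.1 via the averaging property (\ref{avrgcond}) applied at time $s$). The only slip is notational: the reduction replaces the \emph{final} time $t_f$ by $s$ while keeping $t_0$ as the reference time for the uniform label distribution, rather than ``$t_0$ replaced by $s$,'' but your surrounding description makes clear that this is what you intend.
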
\lb{fun2} 

\noindent {\it Proof:} We discuss separately the cases where $\dot{x}(s)=u_*(s)$ and $\dot{x}(s)\neq u_*(s).$

When $u\neq u_*(s),$ then the set $\{\dot{x}(s)=u\}$ has zero probability and the conditional 
distribution is supported on smooth characteristics for Burgers solutions which pass through 
the finite number of points $y_i$ which satisfy $u(y_i,s)=u$. Since the velocity is preserved 
along characteristics, the conditional distribution can likewise be defined as that supported 
on smooth characteristics which pass through the finite number of points $x_i$ which satisfy 
$u(x_i,t)=u,$ with the two sets of points related by $y_i=x_i+u\cdot (s-t).$ The conditional distribution 
$ p(x,t|\dot{x}(s)=u)$ thus has the standard definition 
$$ p(x,t|\dot{x}(s)=u) = \frac{p(x,t)\delta (u(x,t)=u)}{\int \rmd x \ p(x,t)\delta (u(x,t)=u)}. $$
But 
$$ \delta (u(x,t)=u) = \sum_i \frac{\delta (x-x_i)}{|u'(x_i,t)|}, $$
so that 
$$  p(x,t|\dot{x}(s)=u) = \sum_i w_i \delta(x-x_i) $$
with $w_i=\frac{p(x_i,t)}{N|u'(x_i,t)|}$ and $N=\sum_i \frac{p(x_i,t)}{|u'(x_i,t)|}.$ However, since $u(x_i,t)=u$
for all $i,$ it is immediate that 
$$ {\mathbb E}(\dot{x}(t)\big|\dot{x}(s)=u) = \int \rmd x \ u(x,t) p(x,t|\dot{x}(s)=u) = \sum_i w_i u(x_i,t) = u. $$

On the other hand, when $u=u_*(s),$ then the set $\{\dot{x}(s)=u\}$ has positive probability 
and the conditional distribution is supported on the generalized characteristics which are at $x_*(s)$
at time $s.$ Note that there may be smooth characteristics not in the shock at time $s$ which also 
happen to have the same velocity $u,$ but these have zero probability and may be neglected. 
By the construction of the original random process, the conditional distribution is identical to that obtained 
by assuming a uniform distribution on the interval of Lagrangian positions $[b_-(s),b_+(s)]$ at time $t_0$ 
which map into $x_*(s)$ at time $s.$ See Fig.~3 for $(b,t_0)=(a,0),$ changing $t$ there to $s$. Note that 
if $t_*<t<s,$ then this distribution has an atom located at the shock and if $t<t_*$ then it has only a continuous part.
In either case 
$$ {\mathbb E}(\dot{x}(t)\big|\dot{x}(s)=u) = \dot{x}_*(s) = u $$
by the arguments used in proving Proposition \ref{fun2}.1. 
\hfill $\Box$  

\vspace{20pt}
\noindent {\bf Remark \# 1:} By the Markov properties of the process,
$$ {\mathbb E}\big(\dot{x}(t)\big|\{\dot{x}(\tau),\ \tau\geq s\}\big) = {\mathbb E}\big(\dot{x}(t)\big|\dot{x}(s)\big). $$
Thus, the velocity process is a {\it backward martingale}. It is also a backward martingale
with respect to the position process $x(t),$ that is, 
$$ {\mathbb E}\big(\dot{x}(t)\big|\{x(\tau),\ \tau\geq s\}\big) = {\mathbb E}\big(\dot{x}(t)\big|x(s)\big)=\dot{x}(s). $$

\vspace{20pt}
\noindent {\bf Remark \# 2:} Proposition \ref{fun2}.2 in fact implies Proposition \ref{fun}.1 by choosing $s=t_f$ and 
noting that $\dot{x}(t_f)=u_*^f$ with probability one, so that 
$$ {\mathbb E}(\dot{x}(t)) = {\mathbb E}\big(\dot{x}(t)\big|\dot{x}(t_f)=u_*^f\big) = u_*^f. $$

\vspace{20pt}
\noindent {\bf Remark \# 3:} The generalization of the above geometric construction to Burgers solutions 
with countably many shocks is straightforward, but the details are a bit tedious. We outline the 
multi-shock construction in Appendix A. 

\vspace{20pt}
\noindent {\bf Remark \# 4:} 
As the present paper was being prepared for publication, we became aware of an interesting recent 
work of Moutsinga \cite{Moutsinga12}, whose results are closely related to those of the present 
section. Moutsinga's goal was to derive the entropy solution of inviscid Burgers equation from a suitable 
sticky particle model corresponding to a forward coalescing flow with initial uniform mass density 
$\rho_0(db)=db$ (Lebesgue measure).   
His Theorem 2.1 gave the time-evolved mass measure as 
$$ \rho_t(dc)=dc - (t-t_0) du(c,t) $$
where $du(c,t)$ is the measure defined by the Lebesgue-Stieltjes integral with respect to the Burgers 
solution $u(\cdot,t).$ This result coincides with our formulas (\ref{forepdf}),(\ref{norm2}), except 
that our ``mass densities'' are normalized to be probability measures. Furthermore, Moutsinga's 
Theorem 2.2 gave the martingale property in our Proposition 3.2
as 
$$ u(\xi_{s,t},t) = {\mathbb E}_{\rho_s} \left[\left. u(\cdot,s)\right| \xi_{s,t}\right], \,\,\,\, \rho_s-{\rm a.e.}  
\,\,\,\,\,\,\, t_0\leq s\leq t, $$
when transcribed into our notations. Moutsinga's result is itself a generalization of earlier such theorems  
for forward coalescing flows in sticky particle models of pressure-less gas dynamics \cite{Dermoune99}. 
The main innovation in our work here is to point out the existence of a stochastic inverse of the forward 
flow which is Markov backward in time and under which the fluid particle velocities are backward martingales.

\section{Zero-Viscosity Limit}\lb{sec:zerovisc}  

We now construct a fundamentally different stochastic inverse by considering the zero-viscosity 
limit of the stochastic representation of Constantin \& Iyer \cite{ConstantinIyer08} for the 
viscous Burgers solutions. 

\subsection{The Constantin-Iyer Representation}\lb{sec:CI} 

To make our discussion self-contained, we begin by presenting a new derivation of the Constantin-Iyer
representation for viscous Burgers solutions. We then establish the close relation of this stochastic 
representation to the classical Hopf-Cole representation. These results hold for any space dimension 
$d\geq 1,$ so that we discuss in this section multi-dimensional Burgers. 

Consider a solution $\bu$ to the viscous Burgers equation on the space-time domain $D=\RR^d \times [t_0,t_f]$ with initial condition $\bu_0(\bx)$ and define 
the {\it backward stochastic flow} $\tbxi_{t,s}$ for $t_f\geq t\geq s\geq t_0$ by the solution of the stochastic differential equation
\be \rmd \tbxi_{t,s}(\bx) = \bu(\tbxi_{t,s}(\bx),s) \rmd s + \sqrt{2\nu}\, \hd \tilde{\bW}(s) \label{stochflow1} \ee
with final conditions 
\be \tbxi_{t,t}(\bx)=\bx, \,\,\,\, \bx\in \mathbb{R}^d, \,\, t\in [t_0,t_f]. \label{stochflow2} \ee  
Here $\tilde{\bW}(t)$ denotes an $\RR^d$-valued Wiener process and ``$\hd$'' in (\ref{stochflow1}) implies a backward Ito SDE. 
These flows enjoy the semigroup property $\tbxi_{s,r}\circ\tbxi_{t,s}=\tbxi_{t,r}$ a.s. for $t\geq s\geq r.$ 
For the basic theory of backward It$\bar{\rm o}$ integration and stochastic flows that we use below, 
see \cite{Friedman06, Kunita97}. 

The fundamental property of the backward stochastic flows defined above for the viscous Burgers velocity 
field is given by the following:
\begin{proposition}\label{martingale}
The stochastic Lagrangian velocity $\tbv(s|\bx,t)=\bu(\tbxi_{t,s}(\bx),s)$ is a backward martingale of  
the stochastic flow defined by (\ref{stochflow1}), (\ref{stochflow2}). That is, 
$$ \mathbb{E}\big(\tbv(s|\bx,t)\big|\mathcal{F}_{t,r}\big)=\tbv(r|\bx,t), \,\,\,\, t\geq r\geq s,$$
where $\mathcal{F}_{t,r}$ is the filtration of sigma algebras $\sigma\{\tilde{\bW}(u): t\geq u\geq r\}.$ 
\end{proposition}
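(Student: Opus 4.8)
The plan is to apply the backward It\^o formula to the composite process $\tbv(s|\bx,t)=\bu(\tbxi_{t,s}(\bx),s)$ and to show that its drift vanishes identically by virtue of the viscous Burgers equation, so that the process reduces to a pure backward stochastic integral against $\tilde{\bW}$, which is automatically a backward martingale. First I would record the backward It\^o formula for a smooth space-time function $f(\by,s)$ evaluated along the backward flow $\tbxi_{t,s}(\bx)$. The essential point is that for backward It\^o integration the second-order correction enters with the \emph{opposite} sign to the forward case, so that, inserting the drift $\bu$ and noise amplitude $\sqrt{2\nu}$ from (\ref{stochflow1}),
\be
\hd f(\tbxi_{t,s},s) = \left(\partial_s f + \bu\cdot\grad f - \nu\,\Delta f\right)(\tbxi_{t,s},s)\,\rmd s + \sqrt{2\nu}\,(\grad f)(\tbxi_{t,s},s)\cdot\hd\tilde{\bW}(s).
\ee
Fixing this sign convention correctly, and justifying the formula for the only finitely smooth Burgers velocity, is the step requiring the most care; I would appeal to the standard theory of backward It\^o calculus and stochastic flows in \cite{Friedman06, Kunita97}.

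Next I would apply this formula componentwise with $f=\bu$. The drift term becomes $\partial_s\bu + (\bu\cdot\grad)\bu - \nu\,\Delta\bu$, which vanishes identically because $\bu$ solves the viscous Burgers equation. Hence
\be
\hd\tbv(s|\bx,t) = \sqrt{2\nu}\,(\grad\bu)(\tbxi_{t,s},s)\cdot\hd\tilde{\bW}(s),
\ee
so that, backward in time, $\tbv(s|\bx,t)$ is a driftless backward It\^o integral. Since such an integral is adapted to the backward filtration $\mathcal{F}_{t,r}=\sigma\{\tilde{\bW}(u):t\geq u\geq r\}$ and has vanishing conditional increment, the desired identity $\mathbb{E}(\tbv(s|\bx,t)|\mathcal{F}_{t,r})=\tbv(r|\bx,t)$ for $t\geq r\geq s$ follows immediately.

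Beyond fixing the backward-It\^o sign convention, the main obstacle is the integrability needed to guarantee that the stochastic integral is a genuine backward martingale rather than merely a local martingale: one requires control of $\grad\bu$ along the flow, for instance boundedness of $\grad\bu$ or an $L^2$ bound on $\int_s^t|(\grad\bu)(\tbxi_{t,r},r)|^2\,\rmd r$. For smooth, rapidly decaying viscous Burgers solutions at fixed $\nu>0$ this is standard, and I would verify it from the a priori regularity of $\bu$; the delicate behavior as $\nu\to 0$ is a separate matter, deferred to the spontaneous-stochasticity analysis of the following subsections.
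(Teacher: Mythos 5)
Your proof is correct and takes essentially the same route as the paper: both apply the backward It\^o formula to $\bu$ along the flow $\tbxi_{t,s}(\bx)$, use the viscous Burgers equation to annihilate the drift $\partial_s\bu+(\bu\cdot\grad)\bu-\nu\triangle\bu$, and conclude from the vanishing conditional expectation of the resulting backward It\^o integral given $\mathcal{F}_{t,r}$. Your extra remarks on the backward sign convention and on the integrability needed to get a genuine (not merely local) backward martingale are sound care points that the paper, working with smooth solutions at fixed $\nu>0$, leaves implicit.
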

\begin{proof}
By the backward It$\bar{\rm o}$ formula for flows \cite{Kunita97}, we have, for any $\bx\in \mathbb{R}^d$ and for each $t,t'$ satisfying $t_0\leq t'\leq t<t_f$ that 
$$
\rmd \bu( \tbxi_{t,t'}(\bx),t') = \bu_t \rmd t'+  (\hd \tbxi_{t,t'}\cdot \grad_x)\bu  -\frac{1}{2} \bu_{x_ix_j} \rmd \langle \txi_{t,t'}^i, \txi_{t,t'}^j\rangle,
$$
where the quadratic variation can be calculated from (\ref{stochflow1}) to be $ \rmd \langle \txi_{t,t'}^i, \txi_{t,t'}^j\rangle = 2\nu \delta^{ij}\rmd t'. $
Thus, 
\begin{align*}
\rmd \bu( \tbxi_{t,t'}(\bx),t') &= \left.\left(\left(\bu_t+(\bu\cdot\grad)\bu -\nu \triangle\bu\right) \rmd t'+ \sqrt{2\nu} \ \hd  \tilde{\bW}(t')\cdot \grad_x\bu \right) \right|_{ (\tbxi_{t,t'}(\bx),t')}\\
&= \sqrt{2\nu} \ \hd  \tilde{\bW}(t') \cdot \grad_x\bu(\tbxi_{t,t'}(\bx),t') 
\end{align*}
In passing to the second line, we used the fact that $\bu$ solves Burgers equation on $D$. Integrating over $t'\in [s,r]$ gives 
$$ \bu( \tbxi_{t,s}(\bx),s) = \bu( \tbxi_{t,r}(\bx),r) + \sqrt{2\nu}\int_s^r \ \hd  \tilde{\bW}(t') \cdot \grad_x\bu(\tbxi_{t,t'}(\bx),t') $$
Since $\mathbb{E}\Big( \int_s^r \hd  \tilde{\bW}(t') \cdot \grad_x\bu(\tbxi_{t,t'}(\bx),t')\Big|\mathcal{F}_{t,r}\Big)=0$ for a backward It$\bar{\rm o}$ integral, the result 
follows. \hfill $\Box$
\end{proof}

\noindent Note that unconditional expectation gives
\be \mathbb{E}\big(\bu(\tbxi_{t,s}(\bx),s)\big)=
  \mathbb{E}\big(\bu(\tbxi_{t,s}(\bx),s)\big|\mathcal{F}_{t,t}\big)=\bu(\bx,t). \lb{mart2} \ee
This leads to the Constantin-Iyer representation: 
\begin{proposition}\label{burgersCI}
 A smooth function $\bu$ on the space-time domain $D=\RR^d \times [t_0,t_f]$ is a solution to the viscous Burgers equation 
with initial data $\bu(\cdot,t_0)=\bu_0$ if and only if for each $(\bx,t)\in D$ it satisfies
\begin{align}
\bu(\bx,t)&= \mathbb{E}\left[\bu_0 (\tbxi_{t,t_0}(\bx))\right]  \label{burgersvel},
\end{align}
where the map $\tbxi_{t,s}$ is the stochastic flow defined by (\ref{stochflow1}), (\ref{stochflow2}) for the velocity field $\bu$.
\end{proposition}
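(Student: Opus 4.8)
The plan is to prove the two implications separately, since one of them is essentially free from what has already been established. For the forward direction, assume $\bu$ solves viscous Burgers. Then the unconditional martingale identity (\ref{mart2}), specialized to $s=t_0$ and combined with $\bu(\cdot,t_0)=\bu_0$, gives at once $\bu(\bx,t)=\mathbb{E}\big(\bu(\tbxi_{t,t_0}(\bx),t_0)\big)=\mathbb{E}\big(\bu_0(\tbxi_{t,t_0}(\bx))\big)$, which is (\ref{burgersvel}). In either direction the initial condition is recovered trivially: setting $t=t_0$ and using $\tbxi_{t_0,t_0}(\bx)=\bx$ yields $\bu(\bx,t_0)=\mathbb{E}(\bu_0(\bx))=\bu_0(\bx)$. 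So the entire content lies in the converse.

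For the converse, suppose the smooth field $\bu$ satisfies (\ref{burgersvel}) at every $(\bx,t)\in D$, with $\tbxi$ the backward flow (\ref{stochflow1})--(\ref{stochflow2}) \emph{generated by this same} $\bu$; the goal is to deduce that $\bu$ solves viscous Burgers. The key is to exploit the fixed-point character of the representation through the flow (semigroup) property $\tbxi_{t,t_0}\circ\tbxi_{t+h,t}=\tbxi_{t+h,t_0}$, which holds for any smooth drift and is independent of whether $\bu$ solves Burgers. Because the maps $\tbxi_{t,t_0}$ and $\tbxi_{t+h,t}$ are driven by increments of $\tilde{\bW}$ over the disjoint intervals $[t_0,t]$ and $[t,t+h]$, they are independent; conditioning on the inner map $\tbxi_{t+h,t}(\bx)=\by$ and applying the representation hypothesis at $(\by,t)$ gives the two-parameter semigroup identity $\bu(\bx,t+h)=\mathbb{E}\big(\bu(\tbxi_{t+h,t}(\bx),t)\big)$. (More generally, for fixed data $\phi$ the function $v(\bx,t)=\mathbb{E}(\phi(\tbxi_{t,t_0}(\bx)))$ satisfies $v(\bx,t+h)=\mathbb{E}(v(\tbxi_{t+h,t}(\bx),t))$, the semigroup relation for the \emph{linear} advection--diffusion equation advected by $\bu$.)

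It then remains to expand the right-hand side for small $h>0$ by the backward It$\bar{\rm o}$ formula. The crucial bookkeeping is the direction of the backward flow: to leading order $\tbxi_{t+h,t}(\bx)=\bx-\bu(\bx,t)\,h+\sqrt{2\nu}\,\Delta\tilde{\bW}+o(h)$, where $\Delta\tilde{\bW}$ is the Wiener increment over $[t,t+h]$ of covariance $\delta^{ij}h$, and the minus sign records that the particle sitting at $\bx$ at time $t+h$ came from upstream. Taylor expanding $\bu(\cdot,t)$ and taking expectations, so that the martingale noise term drops and the quadratic variation $2\nu\,\delta^{ij}h$ supplies the Laplacian, gives $\mathbb{E}\big(\bu(\tbxi_{t+h,t}(\bx),t)\big)=\bu(\bx,t)+\big[-(\bu\cdot\grad)\bu+\nu\triangle\bu\big](\bx,t)\,h+o(h)$. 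Dividing by $h$ and letting $h\to0^{+}$ yields $\bu_t+(\bu\cdot\grad)\bu=\nu\triangle\bu$. Equivalently, the hypothesis says $\bu$ coincides with the solution $v$ of the linear transport equation $v_t+(\bu\cdot\grad)v=\nu\triangle v$ whose advecting field is $\bu$ itself, and the self-consistency $v=\bu$ promotes this linear equation to the nonlinear Burgers equation.

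The main obstacle is making the last step rigorous: controlling the remainder in the short-time backward It$\bar{\rm o}$ expansion uniformly, justifying the interchange of expectation with the one-sided $t$-derivative, and—most error-prone—fixing signs consistently in the backward It$\bar{\rm o}$ calculus so that the backward flow direction delivers the advective term $(\bu\cdot\grad)\bu$ with its correct sign. The smoothness (and suitable boundedness) of $\bu$ assumed in the statement is exactly what renders these estimates routine, while the flow property and independence of increments are standard facts for stochastic flows \cite{Kunita97}; the real work is the clean differentiation of the two-parameter semigroup as $h\to0^{+}$.
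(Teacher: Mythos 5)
Your proposal is correct and follows essentially the same route as the paper: the forward direction via the unconditional martingale identity (\ref{mart2}) at $s=t_0$, and the converse by combining the semigroup property of the stochastic flow with conditioning to obtain the two-time identity $\bu(\bx,t)=\mathbb{E}\left[\bu(\tbxi_{t,t'}(\bx),t')\right]$, then recovering the PDE from the backward It$\bar{\rm o}$ calculus in a coincidence limit. The only cosmetic difference is that you expand over $[t,t+h]$ and let $h\to 0^{+}$ with an $o(h)$ remainder, whereas the paper applies the exact backward It$\bar{\rm o}$ formula over $[t',t]$ (so the residual term $\partial_s\bu+(\bu\cdot\grad)\bu-\nu\triangle\bu$ appears explicitly under the integral) and lets $t'\nearrow t$.
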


\begin{proof}
First suppose that $\bu$ solves the viscous Burgers equation with initial condition $\bu_0$. Using (\ref{mart2}) with $s=0$ yields formula \eqref{burgersvel}.  
Now for the other direction, assume that \eqref{burgersvel} holds, together with (\ref{stochflow1}),(\ref{stochflow2}). Using the semigroup property 
of the stochastic flow maps and the $\mathcal{F}_{t,s}$-measurability of $\tbxi_{t,s}$ gives for any $t,t'\in [t_0,t_f],\,\, t\geq t' $
\begin{align*}
\mathbb{E}\left[\bu_0(\tbxi_{t,0}(\bx))\right] 
&= \mathbb{E}\left[\bu_0\left(\tbxi_{t',0}\circ \tbxi_{t,{t'}}(\bx)\right)\right] \cr
& = \mathbb{E}\left[\mathbb{E}\left[\bu_0\left(\tbxi_{t',0}\left(\tbxi_{t,{t'}}(\bx)\right)\right)\Big|\mathcal{F}_{t,t'}\right] \right] \cr
&= \mathbb{E}\left[\mathbb{E}\left.\left[\bu_0\left(\tbxi_{t',0}(\by)\right)\right]\right|_{\tbxi_{t,{t'}}(\bx)=\by} \right] \ = \ \mathbb{E}\left[\bu(\tbxi_{t,{t'}}(\bx),t')\right]. 
\end{align*}
We therefore see that the following equivalent representation is implied for $t\geq t'$:
\begin{align*}
\bu(\bx,t)&= \mathbb{E}\left[\bu(\tbxi_{t,{t'}}(\bx),t')\right].  
\end{align*}
An application of the backward It$\bar{\rm o}$'s formula to $\bu \circ \tbxi_{t,t'}$ gives:
\begin{align*}
\bu(\bx,t)= \ & \bu(\tbxi_{t,{t'}}(\bx),t')+ \int_{t'}^t\left. \left(\partial_s \bu+(\bu\cdot \grad)\bu -\nu \triangle \bu\right) \right|_{(\tbxi_{t,s}(\bx),s)} \ \rmd s
+ \sqrt{2\nu}\int_{t'}^t  \ \hd \tilde{\bW}_s \cdot \grad_x \bu( \tbxi_{t,s}(\bx),s)
\end{align*}
Using the above results for $t>t'$ and computing 
\begin{align*}
0&=\frac{\bu(\bx,t)- \mathbb{E}\left[\bu(\tbxi_{t,{t'}}(\bx),t')\right] }{t-t'}=
\mathbb{E}\left[  \frac{\int_{t'}^t\left.\left(\partial_s \bu_s+\bu_s\cdot \grad\bu_s -\nu \triangle \bu_s \right) \right|_{ (\tbxi_{t,s}(\bx),s)} \rmd s }{t-t'}\right]
\end{align*}
which, at the coincidence limit ${t'}\nearrow t$, proves that any $\bu$ satisfying (\ref{burgersvel}) must solve the viscous Burgers equation.
\hfill $\Box$
\end{proof} 

The C-I representation (\ref{burgersvel}) for Burgers is exactly analogous to that employed for studies of passive 
scalar advection the Kraichnan model \cite{Bernardetal98,GawedzkiVergassola00} and can be written also as 
$$ \bu(\bx,t) = \int \rmd^d {\rm a} \ \bu(\ba,s) \ p_\bu(\ba,s|\bx,t), \,\,\,\, s\leq t $$
where
$$ p_\bu(\ba,s|\bx,t) = {\mathbb E}\left[\delta^d(\ba-\tbxi_{t,s}(\bx))\right] $$
is the transition probability for the backward diffusion. Unlike the linear relation for the Kraichnan model, however, 
the C-I representation is a nonlinear fixed point condition for the Burgers solution, because the drift of the diffusion process 
is the Burgers velocity itself. There should be close connections with the stochastic control formulation introduced 
by P.-L. Lions for general Hamilton-Jacobi equations \cite{Lions83,FlemingSoner05}. Note, however, 
that the C-I representation requires no assumption that the velocity field is potential. When this latter condition holds 
it is possible to establish a direct relation with the Hopf-Cole solution \cite{Hopf50,Cole51}, by means of the following: 


\begin{proposition} 
The backward transition probabilities $p_\bu(\ba,s|\bx,t)$ of the stochastic Lagrangian trajectories in the C-I representation 
of viscous Burgers equation have the form 
\begin{align}
p_\bu(\ba,s| \bx,t)&= \frac{\exp\left(-\frac{1}{2\nu}S(\ba,s|\bx,t) \right) }{\int_{\RR^d} 
\exp\left(-\frac{1}{2\nu}S(\ba',s|\bx,t)  \right)\rmd^d {\rm a}'} \label{Sprob} 
\end{align}
with 
\be  S(\ba,s|\bx,t) = \frac{|\bx-\ba|^2}{2(t-s)} + \phi(\ba,s)-\phi(\bx,t), \lb{Sdef} \ee
and $\phi$ is any solution of the KPZ/Hamilton-Jacobi equations 
\begin{align}\label{phieq} 
\partial_t \phi +\frac{1}{2}|\grad\phi |^2 &= \nu \triangle \phi +\gamma(t),\\
\phi(\bx,t_0)&=\phi_0(\bx)+c_0, \nonumber
\end{align}
where $ \bu_0=\grad\phi_0$ but the function $\gamma(t)$ and constant $c_0$ may be freely chosen. 
\end{proposition}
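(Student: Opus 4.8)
\noindent The plan is to identify the linear PDE obeyed by $p_\bu(\ba,s|\bx,t)$ as a function of the starting data $(\bx,t)$ and then exhibit the claimed formula as its unique solution. First I would show, by the same semigroup-plus-backward-It\^o argument used in Propositions \ref{martingale} and \ref{burgersCI}, that for fixed $(\ba,s)$ the transition density $F(\bx,t):=p_\bu(\ba,s|\bx,t)=\mathbb{E}\big[\delta^d(\ba-\tbxi_{t,s}(\bx))\big]$ solves the backward Kolmogorov equation
\be \partial_t F + \bu\cdot\grad_\bx F - \nu\,\triangle_\bx F = 0, \qquad \lim_{t\to s^+}F(\cdot,t)=\delta^d(\cdot-\ba). \ee
The decisive point is the sign of the Laplacian: the backward It\^o formula carries a $-\tfrac12$ on the quadratic-variation term (exactly as in the proof of Proposition \ref{martingale}), so the second-order contribution is $-\nu\triangle$, not $+\nu\triangle$. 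I would check this against the $\nu=0$ transport equation and the $\bu=\bzed$ free heat kernel as consistency tests.

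Next I would exploit the hypothesis $\bu=\grad\phi$ by introducing the Hopf--Cole field $\theta:=\exp(-\phi/2\nu)$. A short computation converts the viscous Hamilton--Jacobi equation \eqref{phieq} into the linear heat equation $\partial_t\theta=\nu\triangle\theta-\tfrac{\gamma(t)}{2\nu}\theta$ and gives $\bu=\grad\phi=-2\nu\,\grad\theta/\theta$. Using this I would rewrite the candidate: since $\exp\!\big(-\tfrac{1}{2\nu}S(\ba,s|\bx,t)\big)$ equals the Gaussian heat kernel $G(\bx-\ba,t-s)$ times $\theta(\ba,s)/\theta(\bx,t)$ up to the fixed prefactor of $G$, after normalization the proposed density becomes
\be q(\ba,s|\bx,t)=\frac{G(\bx-\ba,t-s)\,\theta(\ba,s)}{\displaystyle\int_{\RR^d} G(\bx-\ba',t-s)\,\theta(\ba',s)\,\rmd^d{\rm a}'}. \ee
In this ratio the factor $\theta(\bx,t)$, the constant $c_0$, and the spatially constant time-factor generated by $\gamma(t)$ all cancel between numerator and denominator, which is precisely why $\gamma$ and $c_0$ may be chosen freely.

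I would then verify directly that $q$ solves the backward Kolmogorov equation. Writing $N(\bx,t):=G(\bx-\ba,t-s)\,\theta(\ba,s)$ and $Z(\bx,t):=\int_{\RR^d}G(\bx-\ba',t-s)\,\theta(\ba',s)\,\rmd^d{\rm a}'$, both solve the free heat equation $\partial_t(\cdot)=\nu\triangle_\bx(\cdot)$ because $G$ does, while $\bu=-2\nu\,\grad_\bx Z/Z$ since $Z$ and $\theta(\bx,t)$ agree up to the $\gamma$-factor. Substituting $q=N/Z$ into $\partial_t q+\bu\cdot\grad_\bx q-\nu\triangle_\bx q$, the terms cancel pairwise, so $q$ is annihilated by the operator; moreover $G(\bx-\ba,t-s)\to\delta^d(\bx-\ba)$ as $t\to s^+$ forces $q\to\delta^d(\bx-\ba)$, and the heat-semigroup identity $Z(\bx,t)=\theta(\bx,t)\exp(\tfrac{1}{2\nu}\int_s^t\gamma)$ gives $\int q\,\rmd^d{\rm a}=1$. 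By uniqueness of the transition density (equivalently, of the bounded solution of this backward equation with the prescribed datum), $q=p_\bu$, which is the assertion.

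The step I expect to be the main obstacle is pinning down the backward Kolmogorov equation rigorously --- both the sign of the Laplacian from backward It\^o and the justification of uniqueness for a backward-in-time heat equation with a distributional datum, which requires mild growth control on $\bu$ (hence on $\theta$) so that $Z$ is finite and the heat semigroup genuinely represents the evolution. A conceptually cleaner alternative I would note, but not grind through, is a Girsanov/Cameron--Martin change of measure removing the drift $\grad\phi$: the Radon--Nikodym factor $\exp\!\big(\tfrac{1}{2\nu}\int\grad\phi\cdot\hd\tbxi-\tfrac{1}{4\nu}\int|\grad\phi|^2\,\rmd s\big)$, combined with backward It\^o applied to $\phi(\tbxi,\cdot)$ and the Hamilton--Jacobi equation, telescopes the stochastic integral into the endpoint difference $\phi(\ba,s)-\phi(\bx,t)$ plus a path-independent $\gamma$-term, reproducing the action $S$ of \eqref{Sdef} directly.
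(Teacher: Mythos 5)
Your argument is correct, but it takes a genuinely different route from the paper's. The paper proves the proposition \emph{constructively}: it computes $p_\bu(\ba,s|\bx,t)=\mathbb{E}^W[\delta^d(\tbxi_{t,s}(\bx)-\ba)]$ by a backward Girsanov change of measure to the driftless (scaled) Wiener process, then applies the backward It$\bar{\rm o}$ formula to $\phi(\tbxi_{t,\tau},\tau)$ together with the KPZ/Hamilton--Jacobi equation so that the stochastic integral in the Girsanov exponent telescopes pathwise into the endpoint difference $\frac{1}{2\nu}\big(\phi(\bx,t)-\phi(\tbxi_{t,s}(\bx),s)\big)$; the Gaussian expectation of the delta then yields (\ref{Sprob}) directly, with no uniqueness theorem required. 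This is precisely the ``alternative'' you sketch in your closing paragraph but do not carry out. Your main route instead characterizes $F(\bx,t)=p_\bu(\ba,s|\bx,t)$ as the solution of the advection--diffusion equation $\partial_t F+\bu\cdot\grad_\bx F=\nu\triangle_\bx F$ with $\delta$-datum at $t=s$ (your sign bookkeeping from the backward It$\bar{\rm o}$ formula is right, and this PDE is consistent with the paper's own use of the backward transition density for passive scalars in (\ref{scal1ptreg})--(\ref{bSDE}) at $\kappa=\nu$), linearizes via Hopf--Cole $\theta=e^{-\phi/2\nu}$, and verifies the candidate $q=N/Z$ by the pairwise cancellation you describe --- a computation that checks out, including the identity $Z(\bx,t)=\theta(\bx,t)\exp\big(\tfrac{1}{2\nu}\int_s^t\gamma\big)$ and hence $\bu=-2\nu\,\grad_\bx Z/Z$. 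What each approach buys: the paper's Girsanov computation produces the formula rather than guessing it and needs no well-posedness input, while your guess-and-verify route trades the stochastic-integral manipulation for a parabolic uniqueness theorem, which is the one real burden you correctly flag --- it is manageable here because $\bu$ is smooth and bounded on $D$, so $\phi$ grows at most linearly and $\theta$ at most exponentially, comfortably inside the Tychonov uniqueness class for the heat semigroup. Your route also makes the free choices of $\gamma(t)$ and $c_0$ conceptually transparent as cancellations in the ratio $N/Z$, and your final expression $q=G\,\theta(\cdot,s)\big/\!\int G\,\theta(\cdot,s)$ is exactly the Hopf--Cole form of the transition density that the paper derives separately in Remark \#2 following the proposition, so your proof in effect unifies the proposition with that remark.
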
\lb{Drivasform} 

\begin{proof}
Calculate the transition probability by the Girsanov transformation
\begin{align*}
p_\bu(\ba,s| \bx,t) & =\mathbb{E}^W\left[\delta^d(\tbxi_{t,s}(\bx)-\ba) \right]\\
& =\mathbb{E}_\bx^{\xi,\nu}\left[\delta(\tbxi_{t,s}(\bx)-\ba)\left(\frac{\rmd \bP^{W} }{\rmd \bP^{\xi,\nu}_\bx } \right)\right],
\end{align*}
where the first expectation $\mathbb{E}^W$ is over the Wiener measure $\mathcal{P}^W$ associated to the Brownian motion $\tilde{\bW},$ the second expectation 
is over the (scaled) Wiener measure $\mathcal{P}^{\xi,\nu}_\bx$ associated to the Brownian motion $\tbxi_{t,s}(\bx)\sim \bx + \sqrt{2\nu}\tilde{\bW}(s),$ and the Radon-Nikodym
derivative (change of measure) is given by the backward Girsanov formula:
\begin{align*}
\frac{\rmd \bP^{W} }{\rmd \bP^{\xi,\nu}_\bx } =\exp\left[\frac{1}{2\nu}\int_s^t \left(\bu(\tbxi_{t,\tau}(\bx),\tau) \cdot \hd \tbxi_\tau - \frac{|\bu(\tbxi_{t,\tau}(\bx),\tau)|^2}{2}\rmd \tau \right)  \right]
\end{align*}
Now, suppose we are considering potential flow so that $\bu(\bx,t)= \grad\phi(\bx,t)$.  Demanding that $\bu$ satisfy the viscous Burgers equation, $\phi$ must satisfy 
(\ref{phieq}). By the backward It$\bar{\rm o}$ formula we have that 
\begin{align*}
\rmd \phi(\tbxi_{t,\tau}(\bx),\tau) &= \left(\partial_\tau \phi - \nu \triangle \phi\right)\big|_{(\tbxi_{t,\tau}(\bx),\tau)}\rmd \tau + \hd \tbxi_{t,\tau} \cdot \grad\phi(\tbxi_{t,\tau}(\bx),\tau)\\
&= \gamma(\tau)\rmd \tau -\frac{1}{2}|\grad \phi(\tbxi_{t,\tau}(\bx),\tau)|^2\rmd \tau+ \hd \tbxi_{t,\tau} \cdot \grad\phi(\tbxi_{t,\tau}(\bx),\tau)
\end{align*}
The Girsanov formula thus becomes:
\begin{align*}
\frac{\rmd \bP^{W} }{\rmd \bP^{\xi,\nu}_\bx }= \frac{1}{\mathcal{N}}\exp\left(\frac{1}{2\nu}\left( \phi(\bx,t)-\phi(\tbxi_{t,s}(\bx),s) \right) \right)
\end{align*}
where $\mathcal{N}$ must be chosen to satisfy the normalization condition $\mathbb{E}_x^{\xi,\nu}\left[\frac{\rmd \bP^{W} }{\rmd \bP^{\xi,\nu}_\bx }\right]=1.$
Using the equality in distribution $\tbxi_{t,s}(\bx)\sim \bx + \sqrt{2\nu}\tilde{\bW}(s),$ one obtains
\begin{align*}
p_\bu(\ba,s| \bx,t)& =\frac{1}{\mathcal{N}}\mathbb{E}_\bx^{\xi,\nu}\left[\delta(\tbxi_{t,s}(\bx)-\ba) \exp\left(\frac{1}{2\nu}\left( \phi(\bx,t)-\phi(\tbxi_{t,s}(\bx),s) \right) \right) \right]\\
& =\frac{1}{\mathcal{N}}\mathbb{E}_\bx^{\xi,\nu}\big[\delta(\tbxi_{t,s}(\bx)-\ba) \big]\exp\left(\frac{1}{2\nu}\left( \phi(\bx,t)-\phi(\ba,s) \right) \right) \\
&= \frac{1}{\mathcal{N}}\frac{1}{(4\pi \nu t)^{d/2}}\exp\left(-\frac{|\bx-\ba|^2}{4\nu(t-s)}+ \frac{1}{2\nu}\left( \phi(\bx,t)-\phi(\ba,s) \right) \right) 
\end{align*}
with
$$
\mathcal{N}=\frac{1}{(4\pi \nu t)^{d/2}}\int_{\RR^d} \exp\left(-\frac{|\bx-\ba|^2}{4\nu(t-s)}+ \frac{1}{2\nu}\left( \phi(\bx,t)-\phi(\ba,s) \right) \right) \rmd^d {\rm a}.
$$
\hfill $\Box$
\end{proof}

\vspace{20pt}
\noindent {\bf Remark \# 1:} 
For related calculations using a forward Girsanov transformation, see \cite{Garbaczewskietal97}.
The backward Girsanov formula is equivalent to the Lagrangian path-integral 
$$ p_\bu(\ba,s| \bx,t) = \int_{\bx(t)=\bx} \mathcal{D}\bx \ \delta^d(\bx(s)-\ba)\exp\left(-\frac{1}{4\nu}\int_s^t |\dot{\bx}(\tau)-\bu(\bx(\tau),\tau)|^2 \ \rmd \tau\right), $$
which appears in the physics literature \cite{ShraimanSiggia94, Falkovichetal01}. For a careful discussion of this equivalence, see the Appendix of 
\cite{Eyink11}.

\vspace{20pt}
\noindent {\bf Remark \# 2:} 
It is now straightforward to show that the C-I representation is equivalent to the Hopf-Cole formula \cite{Hopf50,Cole51}:
$$ \bu(\bx,t) =
-2\nu \grad_x\ln\left[\frac{1}{(4\pi\nu t)^{d/2}}\int_{\RR^d} \exp\left(-\frac{|\bx-\ba|^2}{4\nu t}-\frac{\phi_0(\ba)}{2\nu}\right) \rmd^d{\rm a} \right]. $$
Using the chain rule and moving the gradient inside the integral gives 
\begin{eqnarray}
\bu(\bx,t) &=& 
\frac{\frac{1}{(4\pi\nu t)^{d/2}}\int_{\RR^d} 2\nu\grad_a \exp\left(-\frac{|\bx-\ba|^2}{4\nu t}\right)\cdot \exp\left(-\frac{\phi_0(\ba)}{2\nu}\right) \rmd^d{\rm a}} 
{\frac{1}{(4\pi\nu t)^{d/2}}\int_{\RR^d} \exp\left(-\frac{|\bx-\ba|^2}{4\nu t}-\frac{\phi_0(\ba)}{2\nu}\right) \rmd^d{\rm a}}\cr
&=& \frac{\int_{\RR^d}  \exp\left(-\frac{|\bx-\ba|^2}{4\nu t}-\frac{\phi_0(\ba)}{2\nu}\right) \grad_a \phi_0(\ba) \rmd^d{\rm a}} 
{\int_{\RR^d} \exp\left(-\frac{|\bx-\ba|^2}{4\nu t}-\frac{\phi_0(\ba)}{2\nu}\right) \rmd^d{\rm a}}\cr
&=& \int_{\RR^d} \bu_0(\ba)\ p_\bu(\ba,0|\bx,t)  \rmd^d{\rm a}
\end{eqnarray}
using integration by parts, $\bu_0(\ba)=\grad\phi_0(\ba),$ and the expression:
\begin{align}
p_\bu(\ba,s| \bx,t)&= \frac{\exp\left(-\frac{|\bx-\ba|^2}{4\nu(t-s)}- \frac{1}{2\nu}\phi(\ba,s)  \right) }{\int_{\RR^d} \exp\left(-\frac{|\bx-\ba'|^2}{4\nu(t-s)}- \frac{1}{2\nu}\phi(\ba',s)  \right)\rmd^d {\rm a}'}
\end{align}
with some common factors canceled.

\subsection{Spontaneous Stochasticity}\lb{sec:spontstoch} 

We now employ the results of the previous section to show that the backward diffusion process associated 
to the C-I representation remains random (non-deterministic) as $\nu\rightarrow 0,$ which is exactly 
the property of spontaneous stochasticty (backward in time). We here explicitly denote the viscosity 
dependence of the Burgers solution by subscript, as $\bu_\nu=\grad\phi_\nu,$ and the zero-viscosity limit 
is denoted by $\bu_*=\grad\phi_*.$ We also define the measure on $\RR^d$ 
$$P_\nu^{s;\bx,t}(\rmd\ba) = \rmd^d {\rm a} \ p_{\bu_\nu}(\ba,s|\bx,t) $$
associated to the backward diffusion with densities (\ref{Sprob}). Our limit result is then stated as:  

\begin{proposition}
For any sequence $\bx_\nu= \bx + O(\nu),$ the probability measures $P_\nu^{s;\bx_\nu,t}$ on $\RR^d$ 
for each $\nu>0$ converge weakly along subsequences in the limit $\nu\rightarrow 0$ 
to probability measures $P_*^{s;\bx,t},$ which may depend on the subsequence but which are always supported 
on atoms in the finite set 
$$ {\mathcal A}_{s;\bx,t} = {\rm argmin}_\ba \left[ \frac{|\bx-\ba|^2}{2(t-s)}+\phi_*(\ba,s)\right].$$
If $(\bx,t)$ is a regular point 
of the limiting inviscid Burgers solution $\bu_*,$ then $P_\nu^{s;\bx_\nu,t} \stackrel{w}{\longrightarrow} P_*^{s;\bx,t}=\delta_{\ba(s;\bx,t)},$ where 
$\ba(s;\bx,t)=\bx-\bu_*(\bx,t)(t-s)$ is the inverse Lagrangian image at time $s<t$ of $\bx$ at time $t.$
Suppose instead that $(\bx,t)$ is a generic point on the shock set of $\bu_*$ and the sequence $\bx_\nu$ 
satisfies 
\be \lim_{\nu\rightarrow 0} \bu_\nu(\bx_\nu,t) = p\ \bu_*(\bx^-,t) + (1-p) \bu_*(\bx^+,t), \,\,\,\,
p\in [0,1] \lb{p-lim} \ee
where the velocities $\bu(\bx^\pm,t)$ are the limits from the two sides of the shock. Then
\be P_\nu^{s;\bx_\nu,t} \stackrel{w}{\longrightarrow} P_*^{s;\bx,t}=p\ \delta_{\ba_+(s;\bx,t)} + (1-p) \delta_{\ba_-(s;\bx,t)} \lb{2delta} \ee
where $\ba_\pm(s;\bx,t)=\bx-\bu_*(\bx^\pm,t)(t-s)$ are the two inverse Lagrangian images at time $s<t$ 
of $\bx$ at time $t,$ so that ${\mathcal A}_{s;\bx,t} =\{ \ba_-(s;\bx,t),\ba_+(s;\bx,t)\}.$ In particular, if $\bx_\nu$ 
satisfies $\bu_\nu(\bx_\nu,t)=\bar{\bu}_*(\bx,t),$ then $p=1/2$ and 
\be P_\nu^{s;\bx_\nu,t} \stackrel{w}{\longrightarrow} P_*^{s;\bx,t}=\frac{1}{2}\ \delta_{\ba_+(s;\bx,t)} + 
\frac{1}{2} \delta_{\ba_-(s;\bx,t)}. \lb{symm-lim} \ee  
\end{proposition}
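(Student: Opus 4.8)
The plan is to read the limit off Laplace's method (the vanishing-viscosity WKB principle) applied to the explicit densities \eqref{Sprob}--\eqref{Sdef}, and then to pin the atomic weights at a shock using the exact velocity identity contained in the Constantin--Iyer representation. Since the factor $\phi_\nu(\bx_\nu,t)$ in \eqref{Sdef} does not depend on $\ba$ and cancels in the normalised ratio \eqref{Sprob}, the measure $P_\nu^{s;\bx_\nu,t}$ has density proportional to $\exp(-\frac{1}{2\nu}S^0_\nu(\ba))$ with $S^0_\nu(\ba)=\frac{|\bx_\nu-\ba|^2}{2(t-s)}+\phi_\nu(\ba,s)$. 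Using that the Hopf--Cole potentials $\phi_\nu$ converge locally uniformly to the inviscid (viscosity) solution $\phi_*$, and that $S^0_\nu$ is coercive in $\ba$ uniformly for small $\nu$ (from the growth of $\phi_0$), I would first establish tightness, so that weak limits exist along subsequences, and then invoke the Laplace principle: for every bounded continuous $f$ the integral $\int f\,\rmd P_\nu^{s;\bx_\nu,t}$ concentrates on the minimisers of $S^0_*(\ba)=\frac{|\bx-\ba|^2}{2(t-s)}+\phi_*(\ba,s)$. This yields the claimed support on $\mathcal{A}_{s;\bx,t}$.

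At a regular point the Hopf--Lax formula makes the minimiser unique, namely $\ba(s;\bx,t)=\bx-\bu_*(\bx,t)(t-s)$, so $P_*^{s;\bx,t}=\delta_{\ba(s;\bx,t)}$. At a generic shock point there are exactly the two global minimisers $\ba_\pm=\bx-\bu_*(\bx^\pm,t)(t-s)$, sharing the common value $\min_\ba S^0_*$, so every subsequential limit must take the form $q\,\delta_{\ba_+}+(1-q)\,\delta_{\ba_-}$ for some $q\in[0,1]$. It then remains only to identify $q$ in terms of the label $p$ fixed by the hypothesis \eqref{p-lim}.

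Rather than compute Gaussian prefactors at the two minima, I would fix $q$ from an identity valid for every $\nu>0$. Differentiating the Hopf--Cole formula derived above in $\bx$ gives $\bu_\nu(\bx_\nu,t)=(\bx_\nu-\langle\ba\rangle_\nu)/(t-s)$, i.e. the first moment $\langle\ba\rangle_\nu:=\int\ba\,\rmd P_\nu^{s;\bx_\nu,t}$ equals $\bx_\nu-\bu_\nu(\bx_\nu,t)(t-s)$. Passing to the limit along the chosen subsequence, $\bx_\nu\to\bx$ and \eqref{p-lim} force the right-hand side to $\bx-[p\,\bu_*(\bx^-,t)+(1-p)\,\bu_*(\bx^+,t)](t-s)$, while the left-hand side converges to $q\,\ba_++(1-q)\,\ba_-$ provided $\ba$ is uniformly integrable under $P_\nu^{s;\bx_\nu,t}$. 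Since $\ba_\pm=\bx-\bu_*(\bx^\pm,t)(t-s)$, this is a single scalar linear equation for $q$, whose solution fixes the two atomic weights and so gives the two-atom limit \eqref{2delta}; in particular the choice $\bu_\nu(\bx_\nu,t)=\bar{\bu}_*(\bx,t)$ forces $p=\tfrac12$ and hence the symmetric split \eqref{symm-lim}. Equivalently one may pass the Constantin--Iyer identity $\bu_\nu(\bx_\nu,t)=\int\bu_0\,\rmd P_\nu^{s;\bx_\nu,t}$ to the limit, using $\bu_0(\ba_\pm)=\bu_*(\bx^\pm,t)$ by conservation of velocity along characteristics.

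The weight step is the crux, and it is precisely where spontaneous stochasticity resides. With $\bx$ held fixed, a bare Laplace expansion would pin $q$ to the Hessian ratio determined by $D^2 S^0_*$ at $\ba_\pm$, leaving no freedom whatsoever. The point is that $\bx_\nu$ is only $\bx+O(\nu)$: a shift of order $\nu$ changes $S^0_\nu(\ba_+)-S^0_\nu(\ba_-)$ by order $\nu$, which, divided by $\nu$ in the exponent, rescales the weight ratio by an $O(1)$ factor. Thus the limit depends genuinely on the $O(\nu)$-scale placement of $\bx_\nu$ inside the shock layer, and any $p\in[0,1]$ is attainable. Technically the proof reduces to two points: local-uniform convergence $\phi_\nu\to\phi_*$ together with finiteness and non-degeneracy of $\mathcal{A}_{s;\bx,t}$ at a generic shock point, and the uniform integrability of $\ba$ needed to pass the first moment to the limit. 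Both follow from coercivity of $S^0_\nu$, i.e. from growth of the initial potential $\phi_0$ controlling the Laplace tails; I expect this uniform-in-$\nu$ tail control to be the main technical hurdle.
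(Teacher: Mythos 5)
Your proposal is correct and follows essentially the same route as the paper's proof: Laplace-type concentration of $P_\nu^{s;\bx_\nu,t}$ on the minimizers of $S_*$ (including the key observation that the $O(\nu)$ placement of $\bx_\nu$ shifts the exponent $S_\nu/\nu$ only by $O(1)$, which is exactly how the paper makes every $p\in[0,1]$ attainable), followed by pinning the atomic weights at a shock by passing an exact velocity identity to the limit. The only cosmetic differences are that your primary route uses the first-moment form $\bu_\nu(\bx_\nu,t)=(\bx_\nu-\langle\ba\rangle_\nu)/(t-s)$, which needs the uniform integrability you flag, whereas the paper passes the Constantin--Iyer average $\bu_\nu(\bx_\nu,t)=\int\bu_\nu(\ba,s)\,P_\nu^{s;\bx_\nu,t}(\rmd\ba)$ of the uniformly bounded velocity to the limit, sidestepping that issue (an equivalent variant you yourself note), and that solving your linear equation with \eqref{p-lim} as literally printed puts weight $1-p$ on $\ba_+$---a $\pm$-labelling swap already present between the paper's statement and its own proof, not a defect of your argument.
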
\lb{spontstoch} 

\begin{proof}
The solutions $\phi_\nu$ have limits $\phi_*$ as $\nu\rightarrow 0$ given by the Lax-Oleinik formula
for the zero-viscosity Burgers solution \cite{BecKhanin07}:  
$$ \phi_*(\bx,t) =\inf_\ba \left[ \frac{|\bx-\ba|^2}{2(t-s)}+\phi_*(\ba,s)\right].$$
This implies existence of the continuous limiting function 
\be S_*(\ba,s|\bx,t) = \lim_{\nu\rightarrow 0} S_\nu(\ba,s|\bx,t) 
                               = \frac{|\bx-\ba|^2}{2(t-s)} + \phi_*(\ba,s)-\phi_*(\bx,t) \ee        
with the properties $S_*(\ba,s|\bx,t)\geq 0$ and $=0$ only for the finite set $\mathcal{A}_{s;\bx,t}\subset \RR^d$ of 
$\ba$-values at which 
the infinimum in the Lax-Oleinik formula is achieved. 
Because velocities $\bu_\nu=\grad\phi_\nu$ are bounded in the limit as $\boldmath \nu\rightarrow 0,$ 
then $S_\nu(\ba,s|\bx_\nu,t) = S_\nu(\ba,s|\bx,t) + O(\nu)$ if $\bx_\nu=\bx+O(\nu),$ 
so that $$\frac{1}{\nu}S_\nu(\ba,s|\bx_\nu,t) = \frac{1}{\nu}S_\nu(\ba,s|\bx,t) + O(1). $$
It follows that outside the finite set $\mathcal{A}_{s;\bx,t}$, 
probabilities for $P_\nu^{s;\bx_\nu,t}$ decay exponentially as $\nu\rightarrow 0.$ These measures are thus 
exponentially tight and have weak limits along subsequences which are supported on atoms in the set 
$\mathcal{A}_{s;\bx,t}.$ 

In the case where $(\bx,t)$ is a regular point of $\bu_*,$ the set $\mathcal{A}_{s;\bx,t}=\{\ba(s;\bx,t)\},$
a singleton, and every weak subsequential limit is the delta measure $\delta_{\ba(s;\bx,t)}.$ 

In the case where $(\bx,t)$ is a generic point in the shock set of $\bu_*,$ the set $\mathcal{A}_{s;\bx,t}=\{\ba_+(s;\bx,t),\, \ba_-(s;\bx,t)\}.$
Hence, every weak subsequential limit is of the form 
$$ p_*\ \delta_{\ba_+(s;\bx,t)} + (1-p_*) \delta_{\ba_-(s;\bx,t)} $$
for some $p_*\in [0,1].$ However, taking the limit as $\nu\rightarrow 0$ of the C-I representation 
$\bu_\nu(\bx_\nu,t) = \int \bu_\nu(\ba,s) P_{\nu}^{s;\bx_\nu,t}(d\ba)$ gives
$$ p \ \bu_*(\bx^+,t) + (1-p) \bu_*(\bx^-,t) =  \int \bu_*(\ba,s) P_{*}^{s;\bx_\nu,t}(d\ba) = 
p_*\ \bu_*(\bx^+,t) + (1-p_*) \bu_*(\bx^-,t), $$ 
or $(p-p_*)(\bu_*(\bx^+,t) -\bu_*(\bx^-,t))=\bzed.$ Since $\bu_*(\bx^+,t)\neq \bu_*(\bx^-,t),$ $p_*=p$ for every subsequence
$\nu_k\rightarrow 0$ and thus (\ref{2delta}) holds.   \hfill $\Box$
\end{proof}

\vspace{20pt}
\noindent {\bf Remark \# 1:} 
Following the approach of \cite{LaForgueOMalley95}, we may define the {\it shock surface} for $\nu>0$ as 
$$ {\mathcal S}_\nu(t) = \left\{ \bx: \ |\bx-\bx_*|= O\left(\frac{\nu}{|\Delta \bu|}\right) \mbox{ for some } \bx_*\in {\mathcal S}_*(t) \mbox{ and } 
\bu_\nu(\bx,t) = \frac{\bu_*(\bx^-_*,t)+\bu_*(\bx^+_*,t)}{2} \right\} $$
where ${\mathcal S}_*(t)$ is the shock surface of the inviscid limit $\bu_*$ and where $\Delta \bu=\bu_--\bu_+.$ 
The previous proposition thus implies that (\ref{symm-lim}) holds for a 
sequence $\bx_\nu\in {\mathcal S}_\nu(t)$ such that $\bx_\nu\rightarrow \bx\in {\mathcal S}_*(t).$
This means that stochastic particles which are ``exactly on the shock'' at time $t$ for $\nu>0$ 
must jump off the shock backward in time as $\nu\rightarrow 0,$ with equal probability to the left or to the right. 

\subsection{One-Dimensional Case}\lb{sec:1D} 

There is special interest in the one-dimensional case, because we wish to use Burgers as a toy model to understand 
the relation between spontaneous stochasticity and anomalous dissipation. It is only in 1D that the local integrals 
$I_\psi(t)=\int \rmd x \ \psi(u(x,t))$ are invariants of smooth inviscid solutions. Furthermore, there is more detailed analysis 
of shock solutions available in 1D which we can exploit. For the remainder of this article we discuss primarily the 
one-dimensional problem.

For a solution $u_\nu$ of 1D Burgers the shock surface ${\mathcal S}_\nu(t)$ 
consists of isolated points $x_\nu^i(t),$ $i=1,2,3,...$.This follows from a matched asymptotic 
analysis of \cite{GoodmanXin92} for systems of conservation laws in 1D. More generally, if $(x,t)$ is a shock point of the 
inviscid Burgers solution $u_*$ where $u_\pm = u_*(x^\pm,t),$ then for any $\upsilon \in (u_+,u_-)$
there exists a unique point $x_\nu(t;\upsilon)$ such that 
$$ u_\nu(x_\nu(t;\upsilon),t) = \upsilon, \,\,\,\, x_\nu(t;\upsilon) = x_*(t) + O(\nu/\Delta u) $$
with $\Delta u=u_--u_+$. Defining the stretched spatial variable $\xi=(x-x_\nu(t;\upsilon))/\nu$ 
similarly as in \cite{LaForgueOMalley95} and thus
$$ \bar{u}_{\nu}(\xi,t;\upsilon) = u_\nu(x_\nu(t;\upsilon)+\nu\xi,t), $$
the 1D viscous Burgers equation becomes 
\be \dot{x}_\nu(t) \bar{u}_\xi -\bar{u}\bar{u}_\xi + \bar{u}_{\xi\xi} = \nu \bar{u}_t. \lb{1dBurg_stretch} \ee
Integrating over $\xi,$ one finds 
$$ \dot{x}_\nu(t;\upsilon) = \frac{1}{2}(u_-+u_+) + \frac{\nu}{u_--u_+} \int_{-\infty}^{+\infty} \rmd \xi\ \bar{u}_t(\xi,t;\upsilon) 
= \frac{1}{2}(u_-+u_+) + O\left(\frac{\nu}{T\Delta u}\right) $$
up to exponentially small terms.  Thus, the points $x_\nu(t;\upsilon)$ for all $\upsilon \in (u_+,u_-)$
``move with the shock" asymptotically for $\nu\rightarrow 0.$ Neglecting the $O(\nu)$ term on the right side 
of eq.(\ref{1dBurg_stretch}), likewise using the similarity variable $\xi=(x-x_*(t))/\nu,$ and therefore 
replacing $\dot{x}_\nu(t)$ with $\dot{x}_*(t)=\frac{1}{2}(u_-+u_+)$ , eq.(\ref{1dBurg_stretch}) becomes the equation for the 
zeroth-order inner solution in the matched asymptotic analysis of \cite{GoodmanXin92}, specialized to Burgers. 
The zeroth-order solution for the boundary conditions $\bar{u}(-\infty,t)=u_-(t),$ $\bar{u}(+\infty,t)=u_+(t)$ is 
$$ \bar{u}_0(\xi,t) = \dot{x}_*(t)-\frac{\Delta u}{2}\tanh\left(\frac{(\xi-\delta_0)\Delta u}{4}\right), $$
where $\delta_0$ is a constant of translation which is undetermined at this order.
To fix this constant, \cite{GoodmanXin92} showed that one must match velocity-gradients 
at next order in the asymptotic expansion. 

It follows for $\upsilon = \frac{1}{2}(1-\lambda)u_-+\frac{1}{2}(1+\lambda)u_+$
with $\lambda\in (-1,1)$ that 
$$  x_\nu(t;\upsilon) \doteq x_*(t) + \nu \left( \frac{4}{\Delta u}\tanh^{-1}(\lambda) + \delta_0\right) + O(\nu^2). $$
These are just the first terms in an asymptotic expansion in powers of $\nu$ that follows from the method
of \cite{GoodmanXin92}. For $\lambda=0$ this expansion gives the shock location in the sense of 
\cite{LaForgueOMalley95}.  Proposition 4.4
implies that the stochastic trajectories moving backward in time from the point $x_\nu(t;\upsilon)$ remain random 
as $\nu\rightarrow 0,$ moving to the left with probability $\frac{1-\lambda}{2}$ and to the right with 
probability $\frac{1+\lambda}{2}$. 
 
It is helpful to illustrate these results by a concrete example, the {\it Khokhlov sawtooth solution} 
of viscous Burgers. This is the velocity field 
\be  u_\nu(x,t) = \frac{x-L\tanh(Lx/2\nu t)}{t} \lb{khokhsol} \ee 
defined for $x\in \RR$ and $t>0$ \cite{SoluyanKhokhlov61}.  
As the previous discussion shows, the Khokhlov solution has the universal form of a 
viscously-smoothed shock in 1D Burgers, sufficiently close to the shock and in its rest frame.  
Since the velocity potential of the Khokhlov solution is
$$\phi_\nu(x,t) = \frac{x^2}{2t}-2\nu\ln \cosh\left(\frac{L x}{2\nu t}\right), $$ 
it is straightforward to calculate from Proposition 4.3
the transition probability 
$$ p_{u_\nu}(a,s|0,t) = \frac{1}{\mathcal{Z}}\cosh\left(\frac{La}{2\nu s}\right) \exp\left(-\frac{t}{4\nu s(t-s)}a^2\right),
\,\,\,\, s<t $$
for this solution. Since $\cosh(x)\sim e^{|x|}/2$ for $|x|\gg 1,$ it follows that the density becomes 
$$ p_{u_\nu}(a,s|0,t) \sim \frac{1}{2\mathcal{Z}}\exp\left[\frac{1}{4\nu s}\left(2L|a|-\frac{t}{t-s}a^2\right)\right], $$
as $\nu\rightarrow 0,$ except for a narrow interval around the origin of width $\sim \nu s/L.$ An application
of the Laplace method shows that the associated probability measure converges to (\ref{2delta}) with
$$ a_\pm(s;0,t) = \pm L \left(1-\frac{s}{t}\right ), $$
which are indeed the Lagrangian pre-images at time $s$ of the shock at the origin at time $t,$ for the limiting velocity 
$u_*(x,t)=\frac{1}{t}[x-L \, {\rm sign}(x)].$ The Laplace method also yields 
$$ \mathcal{Z}^{-1}\sim \sqrt{\frac{t}{4\pi\nu s(t-s)}} \exp\left[-\frac{L^2}{4\nu s}\left(1-\frac{s}{t}\right)\right] $$
for $\nu\rightarrow 0,$ which is the asymptotic value of $p(0,s|0,t).$ Thus, the probability to remain at the 
shock is transcendentally small as $\nu\rightarrow 0,$ for any $s<t.$ This is not surprising, because close 
to the shock center the equation for stochastic Lagrangian trajectories becomes
$$ \rmd \txi = -\left(\frac{L^2}{2\nu t}-1\right)\txi  \frac{\rmd t}{t} +\sqrt{2\nu} \ \rmd \tilde{W}(t) $$
and {\it backward in time} there is a strong repulsion from the shock. This corresponds to a standard 
problem in statistical physics: a noise-induced transition from an unstable equilibrium. It is well-known 
that for weak noise the realizations exit quickly from the unstable point and transit to a 
deterministic solution of the equation without noise \cite{Suzuki78}. 

\subsection{Limiting Backward Process}\lb{sec:LBP}  

Of the limiting probabilities obtained in Proposition 4.4 
there is a distinguished case in which the 
particle starts ``exactly on the shock'' for $\nu>0$ and then jumps off to the right or the left 
with equal probabilities as $\nu\rightarrow 0.$ It is only for this case that the particle drift velocity
at the shock equals the limiting shock velocity. This case corresponds to a random process $\tilde{x}_*(t)$ which 
enjoys the same properties as the processes obtained by the geometric construction in section \ref{sec:geom}.  
It is clearly Markovian backward in time in an extended state space $X(t) \subset \RR \times \{-1,1\}$ with label
$\alpha=+1$ indicating to the right of the shock and $\alpha=-1$ to the left. The Markov generator is 
$$ L(t) f(x,\pm) = -u(x^\pm,t) f'(x,\pm) $$
and initial conditions for $(x,t)$ a shock point of $u_*$ assign $\alpha=+1$ or $-1$ with probability $1/2.$ 
The realizations of this random process move always along straight-line characteristics.  
The Lagrangian velocity $\bar{u}_*(\tilde{x}_*(t),t)$ is also a martingale backward in time. For 
$(x,t)$ a regular point of $u_*,$ this is the usual conservation of velocity along straight-line characteristics, 
while, for $(x,t)$ a shock point of $u_*,$ the martingale property depends also on the definition 
$\bar{u}(x,t)=\frac{1}{2}\left[u(x^+,t)+u(x^-,t)\right].$ Note that, as for the geometric construction,
the martingale property implies in 1D the positivity of dissipation. Indeed, the shock velocity can be represented 
using the martingale property as  
\begin{eqnarray}
u_*(t) &=&  \frac{1}{2} [u_+(t)+u_-(t)] \cr
         &=&  \frac{1}{2} \left[ \left(\frac{x_*(t)-a_+}{t}\right) + \left(\frac{x_*(t)-a_-}{t}\right) \right]  
         = \frac{1}{a_+-a_-}\int_{a_-}^{a_+} \left(\frac{x_*(t)-a}{t}\right) \ \rmd a,
\end{eqnarray}          
which is exactly the condition (\ref{great}) needed to show positivity. The Lax entropy condition $u^->u^+$ 
is implicit in this formulation, since it guarantees that $a^+>a^-.$

\section{Non-uniqueness, Dissipation and a Conjecture}\lb{sec:nonuniq} 

The results of the previous two sections can be restated as follows: the entropy (viscosity, dissipative) solution 
$u$ of inviscid Burgers equation in one space dimension with smooth initial data $u_0$ satisfies the identity 
\be \bar{u}(x,t) = {\mathbb E}[u_0(\tilde{x}(0))|\tilde{x}(t)=x] = \int \rmd a \ u_0(a) p_u(a,0|x,t) \lb{Burgrep} \ee
where ${\mathbb E}$ is expectation with respect to any of the random processes $\tilde{x}(\tau)$ backward in time 
constructed in the previous sections and $p_u(a,s|x,t)$ is the transition probability for this process. The random process 
$\tilde{x}(\tau)$ has the following properties:
\begin{itemize}
\item[(i)] The realizations of the process projected to coordinate space are generalized solutions of the ODE 
$d\tilde{x}/d\tau=\bar{u}(\tilde{x},\tau)$. 
\item[(ii)] The process is Markov backward in time. 
\item[(iii)] The velocity process $D_\tau^+\tilde{x}(\tau)=\bar{u}(\tilde{x}(\tau),\tau)$ is a backward martingale. 
\end{itemize}  
The formula (\ref{Burgrep}) is the analogue of the representation of the weak solutions for passive scalars in the Kraichnan 
model \cite{EvandenEijnden00, LeJanRaimond02} and is an inviscid analogue of the Constantin-Iyer 
representation of viscous Burgers solutions. 
Note that it is a consequence of (i) that, at points of smoothness of $u(x,t),$ the process $\tilde{x}(\tau)$ is deterministic and 
consists of the single characteristic curve which arrives to $(x,t)$. On the other hand, when $(x,t)$ is located on a shock, the 
previous constructions contain an interesting element of non-uniqueness. The two approaches, the geometric one of section 
\ref{sec:geom} and the zero-viscosity limit of section \ref{sec:zerovisc} (which applies also for dimensions $d\geq 1$), lead to quite 
different stochastic processes. Even within the geometric approach there is an important element of non-uniqueness,  
because the time $t_0$ before formation of the first shock
---when the positions are chosen to be uniformly distributed on the Lagrangian interval $[b_-^f,b_+^f]$ --- is completely 
arbitrary. As can be seen from (\ref{backpdf}),(\ref{forepdf}), assuming a uniform distribution on particle positions 
$[b_-^f,b_+^f]$ at time $t_0<t_*$ does not lead to uniform distributions at other times $t<t_*.$  Hence there are uncountably
many distinct definitions of random processes for which (\ref{Burgrep}) is valid and all of the properties (i),(ii),(iii) hold.   

To the non-uniqueness of the stochastic process of backward particle motions there corresponds a similar non-uniqueness
in the Lagrangian expression (\ref{lagdiss3}) of anomalous dissipation. In that expression, one may likewise chose any 
initial time $t_0<t_*$ and represent the dissipation by integrals over the Lagrangian intervals $[b_i^-(t),b_i^+(t)]$ of 
particle positions at time $t_0$ that will have fallen into shocks at time $t$:      
\begin{eqnarray}
&&  \frac{d}{dt} \int_\RR\rmd x \ \psi(u(x,t)) = \frac{1}{t-t_0}
\sum_{i=1}^\infty \int_{b_i^-(t)}^{b_i^+(t)} \Big(\psi(u^*_i(t))-\psi(u_i(t)) - D_\psi^{u_i}(u_i(t),u^*_i(t))\Big)  \ \rmd b, \cr
&& \,\,\,\,\,\,\,\,\,\,\,\,\,\,\,\,\,\,\,\,\,\,\,\,\,\,\,\,\,\,\,\,\,\,\,\,\,\,\,\,\,\,\,\,\,\,\,\,\,\,\,\,\,\,\,\,
u_i(t)=\frac{x_i^*(t)-b}{t-t_0}.  \lb{lagdiss3t0} \end{eqnarray} 
Note that this anomalous dissipation may be directly represented in terms of the corresponding random process 
(with uniform distribution on positions in shock intervals at time $t_0$) as \\
\vspace{-10pt}
\begin{eqnarray} 
&& \frac{d}{dt} \int_\RR\rmd x \ \psi(u(x,t)) = \cr
&& 
\sum_{i=1}^\infty \Delta u_i \, {\mathbb E}\left[\psi(u^*_i(t))-\psi\left(\frac{\tilde{x}(t)-\tilde{x}(t_0)}{t-t_0}\right) 
- D_\psi^{u_i}\left(\frac{\tilde{x}(t)-\tilde{x}(t_0)}{t-t_0}, \,u^*_i(t)\right)\Big|\dot{\tilde{x}}(t)=u_i^*(t)\right],  \,\,\,\,\,\,\,\,\,\,\,
\end{eqnarray} 
where $\Delta u_i=u_i^--u_i^+>0.$
The negative sign of the dissipation is then seen to be directly due to the backward martingale property (iii) of the random 
process. Note indeed that it is a consequence of the Bauer-Bernard \cite{BauerBernard99} definition that solutions of $\dot{x}=\bar{u}(x,t)$ 
in their sense satisfy 
$$ \tilde{x}(t')-\tilde{x}(t'') = \int_{t''}^{t'} \rmd \tau \ \bar{u}(\tilde{x}(\tau),\tau). $$
Therefore, integrating with respect to time $t$ in Proposition 
3.2 gives as a direct corollary 
$$ {\mathbb E}\Big(\frac{\tilde{x}(t')-\tilde{x}(t'')}{t'-t''}\Big|\dot{\tilde{x}}(t)=u\Big) = u \,\,\,\, \mbox{for all $t''<t'\leq t,$ $\,\,\,\, t'',t',t\in [0,t_f].$} $$
This property for $t'=t,$ $t''=t_0$ 
and, in particular, the consequence that for any convex function $\psi$ 
$$ {\mathbb E}\Big[\psi\Big(\frac{\tilde{x}(t)-\tilde{x}(t_0)}{t-t_0}\Big)\Big|\dot{x}(t)=u\Big] \geq \psi(u) \,\,\,\, \mbox{for all $s\in [0,t_f].$} $$
is thus the basic property required to show dissipativity of the Burgers solution. 

The above fundamental connection between the backward martingale property and dissipation motivates the following:

\vspace{10pt}
\noindent {\bf Conjecture:} The only space-time velocity field $u$ on ${\mathbb R}\times [0,t_f]$ which satisfies 
the identity (\ref{Burgrep}) for a stochastic process $\tilde{x}(\tau)$ with the properties (i),(ii),(iii) is the unique 
viscosity (entropy, dissipative) solution of inviscid Burgers with initial condition $u_0$ on ${\mathbb R}.$
\footnote{The results of section \ref{sec:zerovisc} suggest that the conjecture should also hold for space dimensions 
$d\geq 1,$ although the entropy conditions are no longer valid.} 
\vspace{10pt} 

\noindent In particular, it should follow directly from the dissipation implied by the backward martingale 
property that the field defined by the stochastic representation satisfies the conditions to be an 
``admissible weak solution'' (e.g. see \cite{Dafermos05}, Def. 6.2.1). The condition imposed is highly 
implicit, since the velocity $u$ which appears as the result of the average in (\ref{Burgrep}) is the same 
as the velocity $u$ which appears in the ODE in (i) governing particle motion. The conjecture as stated 
above is not explicit enough to be subject to proof or disproof or even to be entirely well-formulated, without 
additional conditions. A natural requirement on spatial regularity is that the velocity field be of bounded-variation 
at each fixed time $t,$ $u(\cdot,t)\in BV({\mathbb R})$
for all $t\in [0,t_f].$ In that case $u(\cdot,t)$ is continuous except at a countable set of points where right-
and left-hand limits exist, so that the field $\bar{u}(x,t)=\frac{1}{2}(u(x-,t)+u(x+,t)$ is well-defined. Some temporal 
regularity must also certainly be assumed, such as $u\in C([0,t_f];L^1({\mathbb R})).$

It is important to emphasize that the uniqueness claim in the conjecture above is for the weak solution
$u$ only and not for the random process $\tilde{x}.$ As we have already seen by explicit construction, there 
is more than one such random process $\tilde{x}$ for the same entropy solution $u.$ There is thus an arbitrariness 
in how the Burgers velocities can be regarded to be transported by their own flow. This is similar to the arbitrariness 
that exists even for some smooth problems, e.g. the Lie-transport of a magnetic field ${\bf B}$ (closed 2-form) by a smooth 
velocity field $\bu,$ governed by the induction equation 
$$ \partial_t{\bf B} = \grad\times (\bu\times{\bf B}). $$
It has long been known (e.g. \cite{Newcomb58}) that there is more than one ``motion'' which can 
be consistently ascribed to the magnetic field-lines governed by the above equation. This arbitrariness 
holds in that case even for the linear problem of passive transport of the magnetic field by a smooth 
velocity. We shall next explore such passive transport problems for 1D Burgers, which may be a toy 
model of such transport in more realistic situations, as previously considered in \cite{Woyczynski98,BauerBernard99}.
In one dimension there are two types of geometric transport which are possible, passive densities (1-forms) and 
passive scalars (0-forms). We consider these in the following two sections. 

\section{Passive Densities}\lb{sec:density} 

A density ($d$-form) is transported according to the continuity equation, which for $d=1$ becomes
\be  \rho_t + (u\rho)_x =0. \lb{rhoeq} \ee
For smooth solutions there is the explicit solution 
\be \rho(x,t) = \left.\frac{\rho(a,t_0)}{|\xi'_{t_0,t}(a)|}\right|_{\alpha_{t_0,t}(x)} =\int \rmd a \ \delta(x-\xi_{t_0,t}(a)) \rho(a,t_0) 
\lb{rhosol} \ee
given by the flow maps $\xi_{t_0,t}$ generated by $u$ and the inverse maps $\alpha_{t_0,t}=\xi_{t_0,t}^{-1}.$
In integrated form, with $M([x,x'],t)=\int_x^{x'} \rmd y \ \rho(y,t)$ the mass in the interval $[x,x'],$ (\ref{rhosol}) 
expresses mass conservation:  
$$ M([x,x'],t) = M([\alpha_{t_0,t}(x),\alpha_{t_0,t}(x')],t_0). $$
This is the Lie-derivative Theorem for $\rho$ transported by $u$ as a differential 1-form

For non-smooth fields there may, of course, be many distinct weak solutions of (\ref{rhoeq}) obtained 
by different regularizations and limits. For compressible Navier-Stokes fluids the equation (\ref{rhoeq}) is 
unchanged but implicitly regularized by the addition of viscous terms to the momentum equation, which 
smooth the velocity $u.$  Instead, \cite{GawedzkiVergassola00,BauerBernard99} 
have explicitly regularized (\ref{rhoeq}) with a positive diffusivity $\kappa>0$
as well as possibly a smoothed velocity $u_\nu:$
\be \rho_t+(u_\nu\rho)_x=\kappa\rho_{xx}. \lb{rhoeqkap} \ee
In molecular fluids the mass density is protected from any such dissipative transport, because its space 
flux vector is itself a conserved density (momentum density, ${\bf j}=\rho\bu$). The regularization 
(\ref{rhoeqkap}) has however some appealing mathematical properties. For example, the solution 
(\ref{rhosol}) can be generalized by means of the stochastic flows solving the forward It$\bar{{\rm o}}$ SDE
\be d\tilde{\xi}_{t_0,t}(a)=u_\nu(\tilde{\xi}_{t_0,t}(a),t)dt +\sqrt{2\kappa}\,\rmd\tilde{W}(t), \,\,\tilde{\xi}_{t_0,t_0}(a)=a. 
\lb{fSDE} \ee
The first formula in (\ref{rhosol}) is replaced by  
$$ \rho(x,t) = \mathbb{E}\left[ \left.\frac{\rho(a,t_0)}{|\tilde{\xi}'_{t_0,t}(a)|}\right|_{\tilde{\alpha}_{t_0,t}(x)}\right],
\,\,\,\, t>t_0. $$ 
where $\mathbb{E}$ is the expectation over the Brownian motion in (\ref{fSDE}). 
See \cite{GawedzkiVergassola00}. In this formulation, the mass in an interval becomes 
a backward martingale, so that 
$$ M([x,x'],t) = \mathbb{E}\left[M([\tilde{\alpha}_{t_0,t}(x),\tilde{\alpha}_{t_0,t}(x'),t_0)\right]. $$ 
The second formula in (\ref{rhosol}) is also generalized, as  
\be \rho(x,t)=\int \rmd a \ p_{\nu,\kappa}(x,t|a,t_0) \rho_0(a), \,\,\,\, t>t_0 \lb{rhosolkap} \ee
where $p_{\nu,\kappa}(x,t|a,t_0)=\mathbb{E}\left[ \delta(x-\tilde{\xi}_{t_0,t}(a))\right]$ is the transition 
probability for the forward diffusion process.  

It turns out that the weak solutions obtained from the limit $\nu,\kappa\rightarrow 0$ of eq.(\ref{rhoeqkap}) 
have also great physical interest for Burgers. The work of \cite{KhaninSobolevski10,KhaninSobolevski12} 
shows that for two choices of Prandtl number $Pr\equiv\nu/\kappa,$ $Pr=0$ and $Pr=\infty,$ the forward stochastic flow 
$\tilde{\xi}_{t_0,t}(a)$ converges in the limits $\kappa\rightarrow 0$ and $\nu\rightarrow 0,$ respectively, to 
the deterministic, forward coalescing flow for Burgers. In that case, $p_{\nu,\kappa}(x,t|a,t_0)$ converges to 
$$   p^*(x,t|a,t_0) = \left\{ \begin{array}{ll}
                                    \delta(x-\xi_{t_0,t}(a)) &  \mbox{$u$ smooth at $x$} \cr
                                    \delta(x-x_i^*(t)) \chi_{[a_i^-(t),a_i^+(t)]}(a) & \mbox{$u$ with shock at $x=x_i^*(t)$} \cr
                                    \end{array} \right. $$
Although we know of no rigorous proof, it is plausible that the same limit is obtained as $\nu,\kappa\rightarrow 0$ 
for any fixed value of $Pr.$  In that case, the limit of (\ref{rhosolkap}) gives                                 
$$ \rho(x,t) = \rho_0(\alpha_{t_0,t}(x))|\alpha_{t_0,t}'(x)|+\sum_i \delta(x-x_i^*(t))M_i(t), $$
where the first smooth part is well-defined except at shock points $(x,t)$ and the delta function part contains 
$$ M_i(t)=\int_{a_i^-(t)}^{a_i^+(t)} \, \rmd a \ \rho_0(a), $$
which is the mass absorbed into the $i$th shock at time $t>t_0.$ The measure $\rho(\cdot,t)$ is well-defined for initial 
density $\rho_0$ any positive Radon measure. Using the elementary result 
$$ \dot{x}^*_i(t) = [1+(t-t_0)u_0'(a_i^\pm(t))] \dot{a}_i^\pm(t) + u^\pm_i(t), $$
it is straightforward to check that this measure gives a  weak 
(distributional) solution of the Burgers-mass transport system: 
$$ u_t+(\frac{1}{2}u^2)_x=0, \,\,\,\, \rho_t+(u\rho)_x=0. $$
Because $\rho$ is a measure with atoms at the shock, it is essential here to use the convention that 
$u=\bar{u}$ at shock points.  This weak solution corresponds to the ``adhesion model'' widely employed in studies 
of the cosmological mass distribution \cite{GurbatovSaichev84,Vergassolaetal94}. 

As was pointed out by Brenier \& Grenier \cite{BrenierGrenier98},  this weak solution does {\it not} satisfy 
momentum conservation. It is thus also true that the 
entropy inequalities $(\rho h(u))_t+(u\rho h(u))_x\leq 0$ do {\it not} hold for all convex $h,$ since 
this would imply momentum conservation. 
Instead there is a momentum-conservation anomaly which is explicitly calculable for piecewise smooth solutions, as:  
\begin{eqnarray*}
&& (\rho u)_t + (\rho u^2)_x = \sum_i \delta(x-x_i^*(t)) \Big\{ \frac{d}{dt}\left[M_i(t)\dot{x}_i^*(t)\right] \cr
&& \hspace{60pt}
       - \left[ (\rho u)_i^-(t)(u_i^-(t)-\dot{x}_i^*(t)) - (\rho u)_i^+(t)(u_i^+(t)-\dot{x}_i^*(t)) \right]\Big\} \cr
&& \,\,\,\,\,\,\,\,\,\,\,\,
= -\frac{1}{4}\sum_i \delta(x-x_i^*(t)) \left[ (\Delta u_i)^2(\rho_i^- -\rho_i^+) + \Delta u_i (u_i^{-\prime}-u_i^{+\prime})M_i(t)\right]. 
\end{eqnarray*}        
Here, $u_i^{\pm\prime}(t)=u_x(x_i^*(t)\pm,t)$. The first expression for the anomaly is obtained by a standard elementary 
calculation. It has a simple physical meaning, since the term inside the curly bracket is the rate of change of momentum 
of the shock minus the flux of momentum from the left and the right into the moving shock. The second expression 
for the anomaly can be obtained from the first using the easily obtained relations 
$$ \dot{M}_i(t)=\frac{1}{2}\Delta u_i (\rho_i^-+\rho_i^-), \,\,\,\,\,\, \ddot{x}_i^*(t) =- \frac{1}{4} \Delta u_i (u_i^{-\prime}-u_i^{+\prime}). $$
(It follows, incidentally, that $\dot{M}_i\geq 0.$) This anomaly may have either sign, as can be seen from the Khokhlov sawtooth example.
If one starts with initial density $\rho_0(a)=\rho_0^\pm$ for ${\rm sign}(a)=\pm 1,$ it evolves in the Khokhlov flow to   
$\rho(x,t)=\rho_0^\pm \cdot t_0/t$ for ${\rm sign}(x)=\pm 1$ and $t>t_0.$ The momentum anomaly from the shock at the origin 
in the Khokhlov solution is explicitly $(\rho^+(t)-\rho^-(t))\left(\frac{L}{t}\right)^2=(\rho^+_0-\rho^-_0)\left(\frac{t_0}{t}\right)\left(\frac{L}{t}\right)^2.$  
Its sign is determined by the relative magnitude of $\rho^+_0$ and $\rho^-_0,$ which can be arbitrarily chosen.

\section{Passive Scalars}\lb{sec:scalar} 

The other transport problem of interest is advection of a passive scalar (0-form), governed by the equation
\be \theta_t + u \theta_x = 0 . \lb{thetaeq} \ee 
When $u,\theta$ are smooth, then pointwise scalar values are ``frozen-in'' and conserved along Lagrangian trajectories, 
so that 
\be \theta(x,t) = \theta_0(\xi_{t,t_0}(x)) =  \int \rmd a \ \delta(a-\xi_{t,t_0}(x)) \theta_0(a). \lb{scal1pt} \ee
The physically natural regularizations of (\ref{thetaeq}) are a smoothed velocity $u_\nu$ and a molecular
diffusivity $\kappa>0,$ so that $\theta$ satisfies
\be \theta_t + u_\nu \theta_x = \kappa\theta_{xx}. \lb{thetaeqreg} \ee 
In that case, (\ref{scal1pt}) is generalized to 
\be \theta(x,t)=\int \rmd a \ p_{\nu,\kappa}(a,t_0|x,t) \theta_0(a), \lb{scal1ptreg} \ee
where $p_{\nu,\kappa}(a,s|x,t)$ is the transition probability for the backward diffusion:
\be d\tilde{\xi}_{t,s}(x)=u_\nu(\tilde{\xi}_{t,s}(x),s)ds +\sqrt{2\kappa}\,\hd\tilde{W}(s), s<t; \,\,\,\,\tilde{\xi}_{t,t}(x)=x. 
\lb{bSDE} \ee
This is the same formula used to analyze passive scalar advection in the Kraichnan model \cite{Falkovichetal01}.
The nature of the solutions of (\ref{thetaeqreg}) depends on the 
behavior of the statistics of the backward-in-time diffusion process (\ref{bSDE}).   

We shall show that there is {\it spontaneous stochasticity} backward-in-time in (\ref{bSDE}) for $u_\nu$ a Burgers solutions 
with shocks, at any finite value of $Pr=\nu/\kappa.$ That is, the statistics of (\ref{bSDE}) remain random as 
$\nu,\kappa\rightarrow 0$ with $Pr$ fixed. We do not have a general proof which covers an entire class of shock solutions, 
as for the $Pr=1$ case in section \ref{sec:spontstoch}, but we shall prove the 
result for the Khokhlov sawtooth shock which, we have seen, exemplifies the universal form of the viscous Burgers
shock. For simplicity we present here the proof for an even more elementary case, a stationary shock solution of Burgers equation,
$$ u_\nu(x) =-u_0 \tanh(u_0x/2\nu), $$
where, obviously, $u_0=\Delta u/2.$ The details for the Khokhlov solution are very similar but somewhat more complicated, and are presented 
in Appendix B. 
To treat evolution backward in time in the most transparent manner, 
we take $u_\nu(x)\rightarrow -u_\nu(x)$ and discuss instead a stochastic particle position $\txi(t)$ starting at $\txi(0)=0$ 
and moving forward for $t>0$.  Consider then 
\be \rmd\txi(t)= u_0 \tanh(u_0\txi/2\nu) \rmd t +\sqrt{2\kappa}\,\rmd\tilde{W}(t), \,\,\,\, \txi(0)=0.  \lb{txieq} \ee
We prove the following: 

 \begin{proposition}
 For solution $\txi(t)$ of (\ref{txieq}) with $\txi(0)=0$ at any $t>0,$ for $\alpha=\min\{1,\frac{1}{Pr}\},$ and for any $\epsilon\in (0,1),$
 $$ \lim_{\stackrel{\kappa\rightarrow 0}{Pr=\nu/\kappa \,\,\,\, {\rm fixed}}} P_{\nu,\kappa}\Big( |\txi(t)| \geq 
 \alpha (1-\epsilon) u_0 t\Big) =1. $$
 \end{proposition}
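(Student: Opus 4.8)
The plan is to obtain the law of $\txi(t)$ in closed form by the same backward-Girsanov computation used in Proposition 4.3, and then read off its concentration by a Laplace analysis as $\kappa\to 0$. First I would write the drift as a gradient, $u_0\tanh(u_0 x/2\nu)=\hat\phi'(x)$ with $\hat\phi(x)=2\nu\ln\cosh(u_0 x/2\nu)$, and change measure from the law of $\txi$ to that of a driftless Brownian motion $B$ of diffusivity $\kappa$ started at $0$. Applying the backward-It\^o formula to $\hat\phi(B_s)$ exactly as in Proposition 4.3 collapses the Radon--Nikodym factor to a boundary term plus a time integral, and using $Pr=\nu/\kappa$ one finds
\be p_{\nu,\kappa}(x,t\,|\,0,0)=\frac{e^{-x^2/4\kappa t}}{\sqrt{4\pi\kappa t}}\,\cosh^{Pr}\!\Big(\frac{u_0 x}{2\nu}\Big)\,e^{-u_0^2 Pr\,t/4\nu}\,\mathbb{E}^{\mathrm{br}}_{0\to x}\!\Big[\exp\Big(-\tfrac{u_0^2(1-Pr)}{4\nu}\textstyle\int_0^t \mathrm{sech}^2(u_0 B_s/2\nu)\,\rmd s\Big)\Big], \ee
where the expectation is over the Brownian bridge from $0$ to $x$ in time $t$. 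For $Pr=1$ the occupation functional drops out and one recovers the exactly solvable density of Section 4.3; all the difficulty for general $Pr$ sits in this functional.

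The structural fact I would then exploit is that, after substituting $\nu=Pr\,\kappa$ and using $\cosh^{Pr}(w)\sim 2^{-Pr}e^{Pr|w|}$ for $|w|\gg1$, the Prandtl number cancels in the smooth part of $\log p_{\nu,\kappa}$, which becomes exactly $-(|x|-u_0t)^2/4\kappa t$. Thus the Gaussian envelope of the law is peaked at $|x|=u_0 t$ with width $\sqrt{\kappa t}\to0$, independently of $Pr$, and the only thing that can pull mass back toward the origin is the occupation functional. For $Pr\le1$ this is immediate: the exponent $-\tfrac{u_0^2(1-Pr)}{4\nu}\int\mathrm{sech}^2$ is nonpositive, so $\mathbb{E}^{\mathrm{br}}_{0\to x}[\cdots]\le 1$ and $p_{\nu,\kappa}$ is dominated by the envelope. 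Integrating this bound over $\{|x|<(1-\epsilon)u_0 t\}$ and comparing with the mass carried by the peak at $|x|=u_0t$ gives $P_{\nu,\kappa}(|\txi(t)|<(1-\epsilon)u_0 t)\to0$, which is the assertion with $\alpha=1=\min\{1,1/Pr\}$.

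For $Pr>1$ the sign reverses and the functional \emph{rewards} time spent in the collapsing core $|x|\lesssim\nu/u_0$: the weight is the Euclidean propagator of the Schr\"odinger operator $H=-\kappa\partial_x^2-c\,\mathrm{sech}^2(u_0 x/2\nu)$ with $c=\tfrac{u_0^2(Pr-1)}{4\nu}>0$, an exactly diagonalizable P\"oschl--Teller well. Its ground-state energy is $E_0=-\tfrac{u_0^2(Pr-1)^2}{4\nu Pr}$, and the heat kernel obeys $\mathbb{E}^{\mathrm{br}}_{0\to x}[\cdots]\le C\,e^{-E_0 t}$ up to a polynomial prefactor. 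Inserting this uniform bound into the envelope, $\log p_{\nu,\kappa}(x,t)\le -\tfrac{(|x|-u_0t)^2}{4\kappa t}-E_0t+O(\log\tfrac1\kappa)$, which is $\le -\Theta(1/\kappa)$ precisely on $\{\,|x|\le (1-\epsilon)u_0t/Pr\,\}$, since the crossover $(|x|-u_0t)^2=\tfrac{u_0^2(Pr-1)^2t^2}{Pr^2}$ occurs at $|x|=u_0t/Pr$. Hence $\int_{|x|<(1-\epsilon)u_0t/Pr}p_{\nu,\kappa}\,\rmd x\to0$ (the thin core $|x|\lesssim\nu/u_0$, where the large-argument form of $\cosh^{Pr}$ fails, is handled separately and is even smaller), giving the result with $\alpha=1/Pr$.

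The step I expect to be the main obstacle is exactly this control of the core-occupation functional for $Pr>1$: one must show that the rare trajectories which linger in the shrinking shock layer---where the bound state of $H$ lives---do not trap a non-negligible fraction of probability, which requires the ground-state estimate above (or, equivalently, an occupation-time/local-time bound on $\int_0^t\mathrm{sech}^2$ with its large-deviation rate). I note that the bound-state reward $e^{-E_0t}$ is tight only for paths that stay trapped (those ending near the core) and is genuinely lossy for $x$ of order $u_0t$, where the true functional is $O(1)$; applying it uniformly is what stops the argument at the threshold $u_0t/Pr$. The envelope computation strongly suggests the sharp statement is $\alpha=1$ for every $Pr$, but it is the uniform P\"oschl--Teller bound that cleanly survives for all $Pr$ at once, and it is this conservative constant $\alpha=\min\{1,1/Pr\}$ that I would state and prove.
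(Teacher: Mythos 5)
You take a genuinely different route from the paper. The paper's proof in section 7 is a short Lyapunov--martingale argument: It\^o's formula applied to $\phi_\nu(\txi(t))=2\nu\ln\cosh(u_0\txi(t)/2\nu)$ produces a drift $u_0^2\big[\tanh^2+{\rm sech}^2/Pr\big]\geq \alpha u_0^2$ pointwise, the martingale part has variance at most $2\kappa t$ by the It\^o isometry, and Chebyshev together with $\phi_\nu(x)<u_0|x|$ gives $P_{\nu,\kappa}\big(|\txi(t)|<(1-\epsilon)\alpha u_0t\big)<\kappa/\epsilon^2\alpha^2u_0^2t$, uniformly in mechanism for every $Pr$. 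You instead compute the law of $\txi(t)$ in closed form by Girsanov and run a Laplace analysis. Your Girsanov/Feynman--Kac density formula is correct (I have checked the algebra), and your $Pr\le 1$ argument is complete and rigorous: since $\cosh^{Pr}(w)\le e^{Pr|w|}$ holds globally, no separate treatment of the core is even needed, the density is dominated by the envelope $e^{-(|x|-u_0t)^2/4\kappa t}/\sqrt{4\pi\kappa t}$, and you obtain the conclusion with $\alpha=1$ and exponentially small failure probability---on that range of $Pr$ a strictly stronger result than the martingale argument gives.

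For $Pr>1$, however, the step you flag as the main obstacle is a genuine gap, and it is exactly where the proposition lives. Your exponent bookkeeping requires the kernel estimate
\be e^{-tH}(0,x)\ \le\ {\rm poly}(1/\kappa)\; e^{|E_0|t}\;\frac{e^{-x^2/4\kappa t}}{\sqrt{4\pi\kappa t}}
\qquad \mbox{on } \{|x|\le(1-\epsilon)u_0t/Pr\}, \ee
i.e.\ the ground-state growth factor \emph{together with} the sharp Gaussian off-diagonal decay, with only polynomial loss; quoting the P\"oschl--Teller spectrum does not supply this. The standard rigorous substitutes give strictly less. Cauchy--Schwarz through the semigroup plus a short-time diagonal bound yields only $e^{-tH}(0,x)\le{\rm poly}(1/\kappa)\,e^{|E_0|t}$ with no Gaussian factor, and feeding that into your computation moves the crossover from $u_0t/Pr$ down to $\frac{2Pr-1}{2Pr^2}\,u_0t<u_0t/Pr$, so the stated threshold is missed for \emph{every} $Pr>1$. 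An eigenfunction expansion with a crude polynomial bound on the continuum part instead caps the provable threshold near $u_0t/2$, which fails for $1<Pr<2$ (and for $Pr\ge 2$ requires keeping all excited bound states, not just the ground state). The bound you want is in fact true at the level of exponential rates: writing the optimal strategy as ``linger near the well for time $(1-\theta)t$, then travel,'' one has $\sup_{\theta\in(0,1]}\big[|E_0|(1-\theta)t-\frac{x^2}{4\kappa\theta t}\big]\le |E_0|t-\frac{x^2}{4\kappa t}$, and after the rescaling $x=\kappa y$, under which $H=\kappa^{-1}\tilde H$ with $\tilde H=-\partial_y^2-A\,{\rm sech}^2(y/L)$ a \emph{fixed} operator, the needed statement becomes a sharp heat-kernel bound for $\tilde H$ at times $T=t/\kappa\to\infty$ and distances growing like $T$. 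But establishing that with polynomial prefactors (joint large deviations of the bridge endpoint and its occupation time of the well) is a substantial piece of analysis that the proposal asserts rather than proves. As written, the $Pr>1$ half is therefore not a proof.

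Two remarks for perspective. First, carrying out the linger-then-travel optimization exactly gives the true rate: $-\kappa\log p\approx \frac{u_0}{2Pr}\big[u_0t\big(1-\frac{1}{2Pr}\big)-|x|\big]$ for $|x|\le\frac{Pr-1}{Pr}u_0t$, matching continuously onto $\frac{(|x|-u_0t)^2}{4t}$ beyond; this is positive except at $|x|=u_0t$, confirming your suspicion that the sharp constant is $\alpha=1$ for every finite $Pr$---a statement your method could in principle deliver and the paper's soft argument cannot. Second, note that the paper's displayed It\^o formula carries a factor-of-two slip: the correct correction term is ${\rm sech}^2/2Pr$ (compare Appendix B, where $\alpha=\min\{1,\frac{1}{2Pr}\}$ is used), so the elementary route honestly proves the proposition only with $\alpha=\min\{1,\frac{1}{2Pr}\}$; reaching the constant $\min\{1,\frac{1}{Pr}\}$ as literally stated appears to require precisely the kind of sharper kernel analysis you are attempting.
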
 
 
\begin{proof}
Consider the velocity potential 
$$ \phi_\nu(x) = 2\nu \ln \cosh(u_0 x/2\nu) $$ 
which is non-negative, less than $u_0|x|,$ and convergent to $u_0|x|$ as $\nu\rightarrow 0$. By the (forward) 
It$\bar{{\rm o}}$ formula
$$ \rmd\phi_\nu(\txi(t),t) = u_0^2 [\tanh^2(u_0\txi/2\nu) + {\rm sech}^2(u_0\txi/2\nu)/Pr] \rmd t 
+ \sqrt{2\kappa}\ u_0  \tanh(u_0\txi/2\nu) \rmd \tilde{W}(t). $$
It follows using $\tanh^2(z) + {\rm sech}^2(z)=1$ that  
$$ \phi_\nu(\txi(t)) >  \alpha u_0^2 t + \sqrt{2\kappa} \ u_0  \int_0^t \tanh(u_0 \txi(s)/2\nu) \rmd \tilde{W}(s) $$
with $\alpha=\min\left\{1,\frac{1}{Pr}\right\}$. Hence for any $\epsilon\in (0,1)$
\begin{eqnarray*}
P_{\nu,\kappa}\bigg( \phi_\nu(\txi(t)) < (1-\epsilon)\alpha u_0^2 t \bigg) &<& 
P_{\nu,\kappa}\left( \sqrt{2\kappa}\int_0^t  \tanh(u_0\txi(s)/2\nu) \rmd \tilde{W}(s) <-\epsilon \alpha u_0 t  \right)\cr
&=& \frac{1}{2} 
P_{\nu,\kappa}\left( \left| \sqrt{2\kappa}\int_0^t  \tanh(u_0\txi(s)/2\nu) \rmd \tilde{W}(s) \right| > \epsilon \alpha u_0 t  \right)
\end{eqnarray*} 
By It$\bar{\rm o}$ isometry the variance of $\tilde{\eta}(t)=\sqrt{2\kappa}\int_0^t  \tanh(u_0\txi(s)/2\nu) \rmd \tilde{W}(s)$ is
$$     {\mathbb E}_{\nu,\kappa}(\tilde{\eta}^2(t)) = 2\kappa \int_0^t  {\mathbb E}_{\nu,\kappa}\big(\tanh^2(u_0\txi(s)/2\nu)\big) ds < 2 \kappa t. $$
Thus the Chebyshev inequality 
$$ P_{\nu,\kappa}\left( \left| \sqrt{2\kappa}\int_0^t  \tanh(u_0\txi(s)/2\nu) \rmd \tilde{W}(s) \right| > \epsilon \alpha u_0 t  \right)
<  2\kappa / \epsilon^2\alpha^2 u_0^2 t, $$
gives 
\be P_{\nu,\kappa}\big(|\txi(t)| < (1-\epsilon)\alpha u_0 t \big) < P_{\nu,\kappa}\big(\phi_\nu(\txi(t)) < (1-\epsilon)\alpha u_0^2 t \big) 
< \kappa / \epsilon^2 \alpha^2 u_0^2 t, \lb{Pbd} \ee
which completes the proof. \hfill $\Box$
\end{proof}
\noindent This result shows that the particle does not remain at its initial position $\txi(0)=0,$ but instead moves 
away from the origin at least at speed $\alpha u_0 $ as $\nu,\kappa\rightarrow 0.$ As a matter of fact, the speed 
must be $u_0$ even when $\alpha<1$. Standard theorems on zero-noise limits for smooth dynamics show that 
the motion becomes deterministic with constant speed $u_0$ away from the origin \cite{FreidlinWentzell98}. 
By symmetry, the probabilities for the particle to move right or left must be equal. We thus obtain two limiting particle 
trajectories $\xi_+(t)$ and $\xi_-(t)$, right moving and left moving at speed $u_0$, with probabilities $1/2.$           
              
\vspace{20pt}
\noindent {\bf Remark \#1:} The previous proof is valid even for $Pr=0$ ($\nu=0$) directly, when  
$ u_*(x) =-u_0 {\rm sign}(x) $ and, after reversal $u_*\rightarrow -u_*,$ $\phi_*(x) = u_0|x|.$
In that case,
$$ \rmd\txi= u_0 {\rm sign}(\txi) \rmd t +\sqrt{2\kappa}\,\rmd\tilde{W}(t), \,\,\,\, \txi(0)=0 $$
leads to 
$$ \rmd |\txi(t)|= u_0 dt + \kappa\, \rmd\tilde{L}(t) + \sqrt{2\kappa}\, u_0 \, {\rm sign}(\txi) \rmd\tilde{W}(t). $$
Here $\tilde{L}(t) = \int_0^t \delta (\tilde{W}(s))$ is the local time process of Brownian motion at 0
(see \cite{KaratasShreve91}), 
which replaces the hyperbolic secant square term in the equation for $\rmd\phi_\nu$ above. Since 
${\tilde L}(t)\geq 0$ a.s., the previous proof goes through unchanged, with $\alpha=1.$ 

The case $Pr=\infty$ ($\kappa=0$) is much more delicate. As for 
the Kraichnan model, any spontaneous stochasticity must now arise from randomness in the initial  
data $\xi(0)$ for the deterministic ODE $\dot{\xi}=u_\nu(\xi)$ rather than from random noise \cite{EvandenEijnden00}. 
It is clear that particle trajectories will remain stochastic for suitable random initial data (or, really, final data, since the 
evolution considered is backward in time) which become deterministic as $\nu\rightarrow 0$. For example, if the data 
are spread continuously over a symmetric interval of length $\sim \nu/u_0$ about 0, then the initial velocities are spread
over an interval $\sim (-u_0,u_0).$ It is obvious that trajectories will then escape with positive probabilities to both 
right and left of the shock. We do not offer here any more precise statement. See the recent preprint of 
Frishman-Falkovich \cite{FrishmanFalkovich14} for more specific analysis of the $Pr=\infty$ problem.

\vspace{20pt}
\noindent {\bf Remark \#2:} The presence of the factor $\alpha$ in the previous rigorous argument suggests that 
escape from the origin may be retarded for $Pr>1$, when $\alpha<1.$ One can heuristically estimate  the 
escape time as $\tau_{esc}\sim \alpha^{-1} \frac{\kappa}{u_0^2}=\frac{\max\{\kappa,\nu\}}{u_0^2}$ and the 
distance required to escape as $\ell_{esc}= \alpha^{-1/2} \frac{\kappa}{u_0}.$

Consider first $Pr<1$, when $\alpha=1.$ Then at distance $\ell_{esc}=\kappa/u_0,$ $u_\nu(\ell_{esc})=u_0\tanh(\frac{1}{2} Pr^{-1})$
$\sim u_0.$ 
Hence, a particle starting at distance $\ell_{esc}$ would diffuse back to the origin in the time $\ell_{esc}^2/\kappa\sim \kappa/u_0^2\sim 
\tau_{esc},$ 
which is the same as the time $\ell_{esc}/u_\nu(\ell_{esc}) \sim \tau_{esc}$ required to move distance $\ell_{esc}$ 
away from the origin by advection. Hence particles further away from the origin than $\ell_{esc}=\kappa/u_0$ 
are unlikely to return. This heuristic   
argument is in agreement with the previous rigorous argument.  Define 
$$ t_B = B \frac{\kappa}{u_0^2}=B \tau_{esc} . $$
for some large constant $B\gg 1. $ The probability bound (\ref{Pbd}) for $\epsilon=1/2$ and $\alpha=1$ becomes 
$$ P_{\nu,\kappa}\big(|\txi(t_B)| < B \ell_{esc} \big) < \frac{4}{B}. $$
We thus see that $\tau_{esc}$ is the characteristic time to escape from the 
shock backward in time. 

Next consider $Pr>1.$ A similar argument suggests that the escape distance is $\ell_{esc}= Pr^{1/2} \kappa/u_0.$
In fact, at that distance $u_\nu(\ell_{esc}) = u_0\tanh(\frac{1}{2}Pr^{-1/2}) \sim u_0 Pr^{-1/2}.$ Thus, in time $\tau_{esc}\sim Pr \cdot \kappa/u_0^2$
the distance moved by diffusion $, (\kappa \tau_{esc})^{1/2},$ and the distance moved by advection, $u_\nu(\ell_{esc})\tau_{esc},$
both equal $\ell_{esc}$. At smaller distances diffusion dominates and at larger distances advection away from the 
origin dominates, supporting the idea that $\ell_{esc}= Pr^{1/2} \kappa/u_0$ is the escape distance and 
$\tau_{esc}\sim Pr \cdot \kappa/u_0^2$ the escape time. This conclusion seems plausible, although the bound 
(\ref{Pbd}) is not sharp enough for $Pr>1$ to verify it.  
 
\vspace{20pt}
\noindent {\bf Remark \#3:} Spontaneous stochasticity at $Pr=0$ is analogous to a {\it zero-temperature 
phase transition} for a one-dimensional spin system in infinite volume. This is true not only for Burgers 
velocities, but in general. 

Consider, for example, the 1-dimensional Ising model in finite-volume $[-N,...,N]$
$$ P_N[\sigma]=\frac{1}{Z} \exp\left(-\frac{1}{k_BT} H[\sigma]\right),\,\,\,\,\,\, H[\sigma]= \frac{J}{2}\sum_{{\tiny \begin{array}{l}
                                                                                                                                                     i=-N\cr
                                                                                                                                                     \sigma_{-N}=+1
                                                                                                                                                     \end{array}}}^N (\sigma_i-\sigma_{i+1})^2$$
with boundary condition $\sigma_{-N}=+1.$ In the zero-temperature limit
$$ P_N[\sigma] \rightarrow \prod_{i=-N}^N \delta_{\sigma_i,+1} \,\,\,\, {\rm as} \,\,\,\, T\rightarrow 0,$$ 
the unique {\it ground-state} with $\sigma_{-N}=+1.$  This is analogous to the fact that the Lagrangian 
path-integral for a smoothed velocity $\bu_\nu$ in the zero-noise limit $\kappa\rightarrow 0$ satisfies            
$$ P^{\nu,\kappa}_\bu[\bx] \,\, \mathcal{D}\bx = \frac{1}{\mathcal{Z}}
     \left. \exp\left(-\frac{1}{4\kappa}\int_{t_0}^{t_f} d\tau\, |\dot{\bx}(\tau)-\bu^\nu(\bx(\tau),\tau)|^2\right)
      \right|_{\bx(t_0)=\bx_0}\mathcal{D}\bx $$ 
$$ \rightarrow \prod_{\tau\in [t_0,t_f]} \delta^3(\bx(\tau)-\bx_*(\tau)) \,\, \mathcal{D}\bx, $$
with $\bx_*(t)$ the unique solution of $\dot{\bx}=\bu^\nu(\bx,t),\,\,\bx(t_0)=\bx_0.$

On the other hand, consider the 1-dimensional Ising model in infinite-volume ($N\rightarrow\infty$)
$$ P_\infty[\sigma]=\frac{1}{Z} \exp\left(-\frac{1}{k_BT} H[\sigma]\right),\,\,\,\,\,\, H[\sigma]=\frac{J}{2}\sum_i (\sigma_i-\sigma_{i+1})^2. $$
In the zero-temperature limit
$$ P_\infty[\sigma] \rightarrow \frac{1}{2}\prod_i \delta_{\sigma_i,+1} +\frac{1}{2}\prod_i \delta_{\sigma_i,-1} \,\,\,\, {\rm as} \,\,\,\, T\rightarrow 0,$$ 
the {\it symmetric mixture of ground-states}, a zero-temperature phase transition. See \cite{Georgii11} for more 
careful statements. Likewise, in the zero-noise limit with {\it first} $\nu\rightarrow 0,$ {\it then} $\kappa\rightarrow 0$, 
the Lagrangian path-integral
$$ P^{0,\kappa}_\bu[\bx] \,\, \mathcal{D}\bx= \frac{1}{\mathcal{Z}}
    \left.  \exp\left(-\frac{1}{4\kappa}\int_{t_0}^t d\tau\, |\dot{\bx}(\tau)-\bu(\bx(\tau),\tau)|^2\right)
    \right|_{\bx(t_0)=\bx_0}\mathcal{D}\bx \,\, $$ 
$$ \longrightarrow \int \Pi(d\alpha) \prod_{\tau \in [t_0,t]} \delta^3(\bx(\tau)-\bx_\alpha(\tau))\,\, \mathcal{D}\bx. $$
where $\Pi$ is a nontrivial probability measure on the non-unique solutions $x_\alpha$ of $\dot{\bx}=\bu(\bx,t),\,\,\bx(t_0)=\bx_0.$ 
These solutions are the analogues of the zero-temperature ground states\footnote{There is a strong analogy of the high-Reynolds limit 
of turbulence with the semi-classical limit of quantum
mechanics, e.g. see \cite{Kraichnan75}, section 6. This raises the possibility of {\it quantum spontaneous stochasticity}. 
Consider a non-relativistic quantum-mechanical particle of mass $m$ and electric charge $q$ 
moving in an electric field ${\bf E}=-\grad\Phi-(1/c)\partial_t\bA$ 
and magnetic field ${\bf B} =\grad\btimes\bA $ governed by the Schr\"odinger equation 
$$ i\hbar \partial_t\Psi = \frac{1}{2m}\left(-i\hbar\grad-\frac{q}{c}\bA\right)^2\Psi+q\Phi\Psi, $$
or with transition amplitudes given by Feynman's path-integral formula \cite{Feynman48}
$$ \langle \bx,t|\bx_0,0\rangle =\int_{\bx(0)=\bx_0}^{\bx(t)=\bx} {\mathcal D}\bx 
\exp\left(\frac{i}{\hbar}\int_0^t ds\ L(\bx(s),\dot{\bx}(s),s)\right), $$
where the classical Lagrangian is 
$$ L(\bx,\dot{\bx},t)=\frac{1}{2}m|\dot{\bx}|^2+\frac{q}{c}\bA(\bx,t)\cdot\dot{\bx}-q\Phi(\bx,t). $$
When ${\bf E},{\bf B}$ are Lipschitz, then the classical equations of motion 
$$ m\ddot{\bx} =q\left[{\bf E}(\bx,t) + \frac{1}{c}\dot{\bx}\ \btimes\ {\bf B}(\bx,t)\right] $$
have unique solutions and the stationary phase argument of Feynman \cite{Feynman48} 
yields classical dynamics for $\hbar\rightarrow 0.$ However, if the electromagnetic fields are non-Lipschitz,
then quantum superposition effects could persist in the classical limit. This would presumably require
classical electromagnetic fields which are ``rough'' down to the de Broglie wavelength $\lambda=h/p$ of the particle.}.

The comparison with the one-dimensional Ising model is especially apt for Burgers, since we have seen that there are 
precisely two ``ground states'' in the zero-noise limit for Burgers\footnote{As shown in section \ref{sec:geom}, 
there are actually uncountably many solutions $\xi(s)$ of $D_s^+\xi(s)=\bar{u}(\xi(s),s)$ for a final point $(x,t)$
in the shock set of the Burgers solution $u$. Note by Kneser's Theorem \cite{Hartman02} that there are in general 
uncountably many solutions of the initial-value problem for velocity fields $u$ which are continuous, when 
non-uniqueness of solutions occurs at all. However,  only two of these ``ground states'', the extremal solutions, 
are selected for Burgers by the zero-noise limit.}, a trajectory leaving to the right of the shock and another 
leaving to the left, with equal probabilities. One can say, roughly, that Burgers is in the ``Ising class.''  

\vspace{20pt}
\noindent With the above information on the statistics of the backward-in-time diffusion process (\ref{bSDE}) for Burgers, we 
can now discuss the limits $\nu,\kappa\rightarrow 0$ of the solutions of the passive scalar equation (\ref{thetaeqreg}). The 
transtion probabilities for $0\leq Pr<\infty$ converge weakly to 
$$ p_*(a,t_0|x,t) = \left\{ \begin{array}{ll}
                                    \delta(a-\xi_{t,t_0}(x)) &  \mbox{$u$ smooth at $(x,t)$} \cr
                                     \frac{1}{2}\left[ \delta(a-\xi_-(t_0))+\delta(a-\xi_+(t_0))\right] & \mbox{$u$ with shock at $(x,t),$} \cr
                                    \end{array} \right. $$ 
where, of course, $\xi_\pm(t_0)=a_\pm(t),$ i.e. the endpoints of the Lagrangian interval at time $t_0.$ We therefore
obtain the limit of the scalar solutions to be $\theta_*(x,t)=\int \rmd a \ p_*(a,t_0|x,t) \theta_0(a),$ or 
\be  \theta_*(x,t) = \left\{ \begin{array}{ll}
                                    \theta_0(\xi_{t,t_0}(x)) &  \mbox{$u$ smooth at $x$} \cr
                                     \frac{1}{2}\left[ \theta_0(\xi_-(t_0))+\theta_0(\xi_+(t_0))\right] & \mbox{$u$ with shock at $(x,t).$} \cr
                                    \end{array} \right. \lb{thetasolstar} \ee
Pointwise scalar values $\theta(x,t)$ are not ``frozen-in'' deterministically, i.e. are not generally equal to $\theta_0(a)$ 
for $a=\xi_{t,t_0}(x).$ This property does hold in a probabilistic sense, as the formula (\ref{thetasolstar}) for $t_0=s$
can be rewritten as $\theta_*(x,t)=\mathbb{E}_*[\theta_*(\txi_{t,s}(x),s)]$ where $\txi_{t,s}(x)$ is the ensemble 
of stochastic Lagrangian trajectories obtained from the limit $\nu,\kappa\rightarrow 0.$ The generalization of the 
``frozen-in'' property is that $\theta_*(\txi_{t,s}(x),s)$ for $s<t$ is a backward-in-time martingale.   
 
It is important to emphasize that $\theta_*(x,t)$ given by (\ref{thetasolstar}) is {\it not} a weak (distributional) 
solution of the scalar advection equation (\ref{thetaeq}). In fact, the standard notion of weak solution 
is not usually available for (\ref{thetaeq}), since it is not of conservation form for $\nabla\cdot u\neq 0.$ Formally, 
 (\ref{thetaeq}) can be rewritten as 
$$ \int \rmd x \int \rmd t \left[ \psi_t(x,t) + u(x,t)\psi_x(x,t) + \nabla\cdot u(x,t) \psi(x,t) \right]\theta(x,t) = 0, $$
for a smooth test function $\psi,$ but the expression on the left is generally ill-defined for non-smooth fields $\theta$ 
when $\nabla\cdot u$ itself exists only as a distribution. For Burgers the lefthand side is well-defined for $\theta_*$
defined by (\ref{thetasolstar}) but not equal to zero. One can instead regard (\ref{thetasolstar}) as a new notion of 
a ``Lagrangian weak solution,'' which generalizes the method of characteristics for smooth solution fields 
rather than generalizing the Eulerian equations of motion.  

An important question addressed in the prior work of Bauer \& Bernard \cite{BauerBernard99} is whether passive scalars  
in a Burgers flow preserve the invariants of smooth solutions or whether the scalar conservation-laws 
are afflicted with anomalies. It was concluded in \cite{BauerBernard99} that there are no anomalies 
``in the limit $\kappa\rightarrow 0$''\footnote{It was never clearly specified in 
\cite{BauerBernard99} how the joint limits $\nu\rightarrow 0,$ $\kappa\rightarrow 0$ should be taken. Their analysis 
seems to be best justified for the $Pr=\infty$ problem, taking first $\kappa\rightarrow 0$, then $\nu\rightarrow 0$. 
This case remains open.} for passive scalars 
advected by Burgers velocities with shocks, for scalar quantities of the form 
\be  I_\psi(t) = \int \rmd x \ \psi(\theta(x,t)) \lb{I-BB} \ee
given by a smooth function $\psi.$ They argued that shocks occur only on a set of Lebesgue measure zero and, 
therefore, do not alter the conservation properties of the scalar fields, which remain smooth almost everywhere.  
We disagree both with this conclusion and even with the formulation of the problem. 

In the first place, the quantities in (\ref{I-BB})
are not ``invariants" of the scalar field, since even for smooth fields 
\be  \frac{\rmd}{\rmd t} I_\psi(t) = \int \rmd x \ \psi(\theta(x,t)) (\nabla\cdot u(x,t))\neq 0.  \lb{I-BB-dot} \ee 
Therefore, if pumping and damping of this invariant are provided, the scalar will not enter a statistical stationary state 
in which driving and dissipation are balanced against each other. The natural invariants of a passive scalar are 
instead of the form
\be  J_\psi(t) = \int \rmd x \ \rho(x,t) \ \psi(\theta(x,t)),  \lb{I-ED} \ee  
in which $\rho$ is a conserved density, as in the previous section \ref{sec:density}. Since for smooth solutions
$$ (\rho \psi(\theta))_t + (\rho \psi(\theta) u)_x =0, $$
the quantities $J_\psi(t)$ are indeed conserved quantities for such standard smooth solutions.  

The second disagreement with the conclusions of \cite{BauerBernard99} is that the natural 
integral invariants (\ref{I-ED}) of the scalar for smooth dynamics are, as a matter of fact, afflicted 
with anomalies when using the same regularizations considered by those authors. Thus, when the passive
density and passive scalar are both regularized by smoothing the Burgers velocity with viscosity $\nu>0$
and by adding a diffusivity $\kappa>0$, then the solutions for $\rho$ and $\theta$ obtained in the limit $\nu,\kappa\rightarrow 0$
with $Pr<\infty$ fixed, which were characterized in the previous sections, do not preserve the invariants 
(\ref{I-ED}). Instead, there are conservation-law anomalies which are easily calculable, either in Lagrangian
or in Eulerian forms. The Lagrangian expression derived by the approach of section \ref{sec:dissanom}, 
\begin{eqnarray}
&& \int_{\mathbb{R}} \rho(x,t) \psi(\theta(x,t)) \ \rmd x - \int_{\mathbb{R}} \rho_0(a) \psi(\theta_0(a)) \ \rmd a \cr
&& \,\,\,\,\,\,\,\,\,\,\,\,\,\,\,\,\,\,\,\,\,\,\,\,\,\,\,\,\,\,\,\,\,\,\,\,\,\,\,\,\,\,\,\,\,\,\,\,
=\sum_i \int_{a_i^-(t)}^{a_i^+(t)} [\psi(\theta_i^*(t))-\psi(\theta_0(a))] \rho_0(a) \ \rmd a.
\lb{scal-anom} \end{eqnarray} 
shows that the origin of the scalar anomalies is the loss of information about the initial scalar distribution. 
This is similar to the conservation-law anomalies associated to the Burgers velocity, except that the 
scalar anomalies can have either sign\footnote{This is also true for the diffusive violation of the conservation 
laws in the regularized equations. If the density $\rho$ and scalar $\theta$ are both subject to the same 
diffusivity $\kappa>0,$ then it is not hard to show that
$$ (\rho \psi(\theta))_t + [u_\nu \rho\psi(\theta) -\kappa (\rho \psi(\theta))_x]_x = -\kappa \rho \theta''(\theta)\theta_x^2
-2\kappa \rho_x \psi'(\theta) \theta_x. $$
The first ``dissipation'' term on the right is non-positive when the function $\psi$ is convex, but the second term 
is of indeterminate sign. It would be interesting to know whether the $\kappa\rightarrow 0$ limit 
of this dissipation term yields the same result (\ref{scal-anom}) as for the ``Lagrangian weak 
solution.''}.  The corresponding Eulerian form 
of the scalar anomalies is 
\begin{eqnarray*}
&& (\rho \psi(\theta))_t + (\rho u \psi(\theta))_x = \sum_i \delta(x-x_i^*(t)) (u_i^- -u_i^+)\cr
&& \,\,\,\,\,\,\,\,\,\,\,\, \times 
\left[ \frac{1}{2}\Big(\rho_i^+(\psi(\theta_i^*)-\psi(\theta_i^+))+\rho_i^-(\psi(\theta_i^*)-\psi(\theta_i^-))\Big)-
\frac{1}{4}\psi'(\theta_i^*)(\theta_i^{-\prime}-\theta_i^{+\prime})M_i(t)\right]. 
\end{eqnarray*}       
The argument of \cite{BauerBernard99}, that scalar anomalies are absent 
because shocks occur only on a set of zero Lebesgue measure, 
fails among other reasons\footnote{The dissipative anomalies for the kinetic 
energy of an  inviscid Burgers solution arise, of course, entirely from the shocks. Experiments on the multifractal 
structure of real hydrodynamic turbulence indicate that its energy dissipation set has fractal dimension also 
less than the space dimension (three) \cite{MeneveauSreenivasan91, KestenerArneodo03}. 
Thus, it is not unusual in turbulent systems that dissipative anomalies arise from zero-measure sets! 
If $\nu=\kappa$ $(Pr=1)$ and the 
initial value of the scalar is the same as that for the Burgers velocity, $\theta_0=u_0,$ then solutions $\theta$
and $u$ agree for all times. In that case the ``invariants'' (\ref{I-BB}) considered by \cite{BauerBernard99} 
are also not conserved for $\nu,\kappa\rightarrow 0.$} because the density $\rho$ for the limiting solutions develops 
positive mass atoms precisely at these shocks. 

It is worth noting that there is an anomaly even for $\psi(\theta)=\theta:$
\begin{eqnarray*}
&& (\rho \theta)_t + (\rho u \theta)_x \cr
&& \,\,\,\,\,\,\,\,\,\,\,\,
= -\frac{1}{4}\sum_i \delta(x-x_i^*(t)) (u_i^--u_i^+)\left[(\theta_i^- -\theta_i^+)(\rho_i^- -\rho_i^+) + (\theta_i^{-\prime}-\theta_i^{+\prime})M_i(t)\right]. 
\end{eqnarray*}  
Thus, $\varrho=\rho\theta$ is also not a weak (distributional) solution of $\varrho_t+(u\varrho)_x=0.$         

\section{Time-Asymmetry of Particle Stochasticity}\lb{sec:reversal} 

We have shown in the preceding that there is spontaneous stochasticity in the zero-noise limit at finite-$Pr$ for Lagrangian 
particles in a Burgers flow moving {\it backward in time}. On the contrary, the zero-noise limit forward in time at any $Pr$
should lead to a natural coalescing flow for Burgers \cite{BauerBernard99}, as has been proved rigorously at $Pr=0$ 
\cite{KhaninSobolevski10,KhaninSobolevski12}. The Burgers system is thus quite different from the time-reversible Kraichnan 
model, where strongly compressible flows lead to coalescence both forward and backward in time \cite{GawedzkiVergassola00, 
EvandenEijnden01}. Based on studies in the Kraichnan model, the difference between stochastic splitting or sticking of 
particles has been viewed as a consequence of the degree of compressibility of the velocity field, with weakly 
compressible/near-solenoidal  velocities leading to splitting and strongly compressible/near-potential velocities leading 
to sticking \cite{GawedzkiVergassola00, EvandenEijnden00, EvandenEijnden01, LeJanRaimond02, LeJanRaimond04}. 
However, the Burgers equation with a velocity that is pure potential can produce both sticking and splitting, in different directions of time. 

The Burgers system appears in fact to have a remarkable similarity in particle behaviors to incompressible Navier-Stokes
turbulence, even though the Burgers velocity is pure potential and the Navier-Stokes velocity is pure solenoidal. Because the 
Navier-Stokes equation just as viscous Burgers is not time-reversible, it can exhibit distinct particle behaviors forward 
and backward in time. Navier-Stokes turbulence appears to lead to Richardson 2-particle dispersion and, consequently,
stochastic particle splitting, both forward and backward in time. Remarkably, however, the rate of dispersion is found in 
empirical studies of three-dimensional Navier-Stokes turbulence to be greater backward in time than forward 
\cite{Sawfordetal05,Bergetal06,Eyink11}. This is the same tendency seen in a 
very extreme form in Burgers, where there is particle splitting backward in time but only coalescence forward in time. 

We have also shown in this work, at least for Burgers, that there is a direct connection between spontaneous stochasticity 
and anomalous dissipation for hydrodynamic equations, as had been suggested earlier in \cite{GawedzkiVergassola00}. 
More precisely, the relation we have found is between the sign of conservation-law anomalies in Burgers and spontaneous 
stochasticity backward in time. In turbulence language, the {\it direct cascade} of energy to small scales in Burgers is due to 
stochastic particle splitting backward in time. The empirical observations on particle dispersion in Navier-Stokes turbulence 
cited above lead us to suggest more generally {\it a deep relation between cascade direction and the time-asymmetry 
of particle dispersion}. Indeed, three-dimensional Navier-Stokes turbulence has a forward cascade of energy, just as Burgers, 
and likewise a faster particle dispersion backward in time than forward. This conjecture is strengthened by the numerical observation 
of a reversed asymmetry for the inverse energy cascade of two-dimensional turbulence, with Richardson particle dispersion 
in 2D inverse cascade instead faster forward in time than backward \cite{FaberVassilicos09}.   

There is a well-known connection in statistical physics between dissipation/entropy production and the 
asymmetry between forward and backward processes, embodied in so-called {\it fluctuation theorems}. For a 
recent review of this theory, see \cite{Gawedzki13}. Since it is natural to suspect a relation with our 
conjectures above, we briefly recall here that the fluctuation theorems state that 
\be {\mathbb E}(e^{W[\tilde{\bx}]}) =1 \lb{FT} \ee
where $e^{W} = d{\mathcal P}'/d{\mathcal P}$ is a Radon-Nikod\'ym derivative of the path measure 
for a time- reversed process with respect to the path measure for the direct process. Physically, 
$-k_B � W$ has often the meaning of ``entropy production'' and the consequence of Jensen's inequality, 
$$ {\mathbb E}(W[\tilde{\bx}])\leq 0, $$
implies the sign of energy dissipation or entropy production in the 2nd law of thermodynamics. 
However, the fluctuation theorems are a considerable refinement of the 2nd law, since they 
state not only the existence of entropy production on average but also provide information about 
the likelihood of 2nd-law violations.  

Fluctuation theorems are straightforward to derive for stochastic particle motion in Burgers governed by 
the SDE 
$$ d\tilde{\bx} = \bu(\tilde{\bx},t) \rmd t +\sqrt{2\nu}\ \rmd \tilde{\bW}(t), \,\,\,\, t\in [t_0,t_f], $$
especially when the velocity is potential with $\bu(\bx,t)=\grad\phi(\bx,t).$ In this case, 
the time-reverse process is the same as the direct process with merely the time-change $t' = t_0+t_f -t$ 
\cite{Kolmogorov35}. Also, for any gradient dynamics with additive noise, the accompanying measures 
(or instantaneously stationary measures) at time $t$ are
$$ n_t(\rmd \bx) = \frac{1}{Z_t} \exp\left(\frac{\phi(\bx,t)}{\nu}\right). $$
The standard recipes \cite{Gawedzki13} then give (\ref{FT}) with  
\begin{eqnarray*} 
&& W[\tilde{\bx}] =  \frac{1}{\nu}\phi(\tilde{\bx}(t_f),t_f)-\ln \rho_f(\tilde{\bx}(t_f)) \cr 
&& \hspace{50pt} 
+ \frac{1}{\nu}\int_{t_0}^{t_f} \partial_t\phi (\tilde{\bx}(t),t) \rmd t
-\frac{1}{\nu}\phi(\tilde{\bx}(t_0),t_0)+\ln \rho_0(\tilde{\bx}(t_0))
\end{eqnarray*} 
where $\rho_0(\bx)$ and $\rho_f(\bx)$ are starting probability densities for the forward and backward processes
which may be freely chosen.  The trajectories $\tilde{\bx}(t)$ in the expectation ${\mathbb E}$ of (\ref{FT}) 
are sampled from solutions of the forward SDE with initial data chosen from $\rho_0.$ 

An intriguing question is whether such fluctuation theorems for stochastic particle motion in Burgers have any relation 
with anomalous dissipation in the limit $\nu\rightarrow 0.$ For Burgers the potential satisfies the KPZ/Hamilton-Jacobi 
equation
$$ \partial_t\phi_\nu(\tilde{\bx}(t),t) = -\frac{1}{2}|\grad \phi_\nu(\tilde{\bx}(t),t)|^2+\nu\triangle \phi_\nu(\tilde{\bx}(t),t). $$
Also, the forward stochastic flow $\tilde{\bx}_\nu(t)$ converges to the coalescing flow $\bx_*(t)$ for Burgers as 
$\nu\rightarrow 0$. Note that the laplacian term has been shown \cite{KhaninSobolevski10, KhaninSobolevski12} 
to have the limit along the trajectories of the forward coalescing flow given by  
$$ \lim_{\tau\downarrow 0}\lim_{\nu\downarrow 0} \nu\triangle \phi_\nu(\bx_*(t+\tau),t+\tau) = 
- \min_{\pm} D_L^{\bu_\pm(t)}\left(\bu_*(t),\bu_\pm(t)\right), $$
the Bregman divergence for the free-particle Lagrangian $L(t,\bx,\bv)=\frac{1}{2}|\bv|^2,$ or just the kinetic energy.
(Note the sign error in \cite{KhaninSobolevski10}, p.1591) The quantity $\partial_t\phi_\nu(\tilde{\bx}(t),t)$ 
then has an enticing similarity to our expression (\ref{lagdiss2}) for the  dissipative anomaly, when $\psi=L$. 
Unfortunately, we are skeptical that any general connection exists. A counterexample\footnote{There is also a physical 
puzzle what quantity would constitute the ``temperature'' to relate the ``entropy production'' $-k_B W[\tilde{\bx}]$
to energy dissipation.} is the stationary shock 
solution of viscous Burgers considered in section \ref{sec:scalar}, which has a kinetic energy anomaly 
$-\frac{2}{3}u_0^3\delta(x)$ in the limit $\nu\rightarrow 0,$ but for which $\partial_t\phi(x)=0!$ It remains to be
seen whether any ideas related to the fluctuation theorems can be at all connected with dissipative anomalies 
in Burgers or elsewhere.

\section{Final Discussion} 

Our work has verified that many of the relations suggested by the Kraichnan model \cite{Bernardetal98,GawedzkiVergassola00},
between Lagrangian particle stochasticity, anomalous dissipation, and turbulent weak solutions, remain valid for the
inviscid Burgers equation. Our results for Burgers give, as far as we are aware, the first proof of spontaneous stochasticity 
for a deterministic PDE problem. There is some similarity with the results of Brenier \cite{Brenier89} on 
global-in-time existence of action minimizers for incompressible Euler fluids via  ``generalized flows". However, 
unlike Brenier's work which dealt with a two-time boundary-value problem, our stochastic representation (\ref{Burgrep})
is valid for solutions of the Cauchy problem, like the similar representations for weak solutions in the Kraichnan model.  
As in Brenier's work, however, and unlike in the Kraichnan model, we find that the stochastic Lagrangian flows for 
inviscid Burgers are generally non-unique (even for entropy solutions). An important question left open by our work 
is whether existence of suitable stochastic processes of Lagrangian trajectories, which are backward Markov and 
for which the velocity is a backward martingale, uniquely characterize the entropy solution of Burgers.   

The most important outstanding scientific issue is certainly the validity of similar results for more physically realistic hydrodynamic 
equations, such as the incompressible Navier-Stokes equation. It is an entirely open mathematical question whether standard 
weak solutions of incompressible Euler can be obtained by the zero-viscosity limit of incompressible Navier-Stokes 
solutions and whether these Euler solutions are characterized by a backward martingale property for the fluid circulations, 
as earlier conjectured by us \cite{Eyink06, Eyink07, Eyink10}. It is not even known whether the ``arrow of time'' specified by the
martingale property is the same as the arrow specified by dissipation of energy. That is to say, it is unknown whether 
weak Euler solutions (if any) satisfying the backward martingale property for circulations must have kinetic energies 
always decreasing in time. There is not even a formal physicists' argument that this is so, let alone a rigorous proof. 

The existence of non-vanishing energy dissipation in the limit of zero viscosity has been termed the 
``zeroth law of turbulence'' \cite{Frisch95}. Explaining such anomalous dissipation is indeed the zeroth-order problem 
for any theory of turbulence. While much is known about turbulent energy cascade in Eulerian representation 
 from a synthesis of experiment, simulations and theory, the Lagrangian aspects remain rather mysterious. 
G. I. Taylor's vortex-stretching picture \cite{TaylorGreen37,Taylor38} is still the most common and popularly 
taught Lagrangian view of turbulent dissipation (e.g. see Feynman's undergraduate lectures \cite{Feynmanetal64}, 
volume II, section 41-5).  Taylor's line-stretching mechanism is exemplified  by the Kazantsev-Kraichnan model 
of kinematic magnetic dynamo in its ``free decay regime'', but this example also shows that Taylor's mechanism becomes 
much more subtle in the presence of spontaneous stochasticity \cite{EyinkNeto10}. We believe that the 
possibility exists for fundamentally new Lagrangian perspectives on turbulent energy dissipation for 
Navier-Stokes and related equations. We hope that the current work may provide some useful hints in that direction.

\vspace{20pt}
\noindent {\bf Acknowledgements}

G. E. thanks K. Khanin and A. Sobolevski for conversations about their work at the 2012 Wolfgang Pauli Institute 
workshop ``Mathematics of Particles and Flows'' and he also thanks many of the participants of the 2013 Eilat Workshop 
``Turbulence \& Amorphous Materials'' for their comments on our talk announcing the results of this paper,
in particular J. Bec, G. Falkovich, A. Frishman, I. Kolokolov, J. Kurchan, and K. Gaw\c{e}dzki. 
T.D. is grateful to Y.-K. Shi for some useful conversations.  Both authors thank the anonymous 
referees for their very helpful comments.


\setcounter{section}{0}
\renewcommand\thesection{\Alph{section}}

\section*{Appendix A: Multi-Shock Geometric Construction}\lb{multishock} 

The discussion in section \ref{sec:geom} assumed a single shock, but the main results (including Proposition 3.1 and 3.2) hold 
also with multiple shocks and mergers. We now discuss the construction of the random process and the verification 
of its properties in the general case. 

\begin{figure}[!ht]
\begin{center}
\includegraphics[height=3in,width=3.5in]{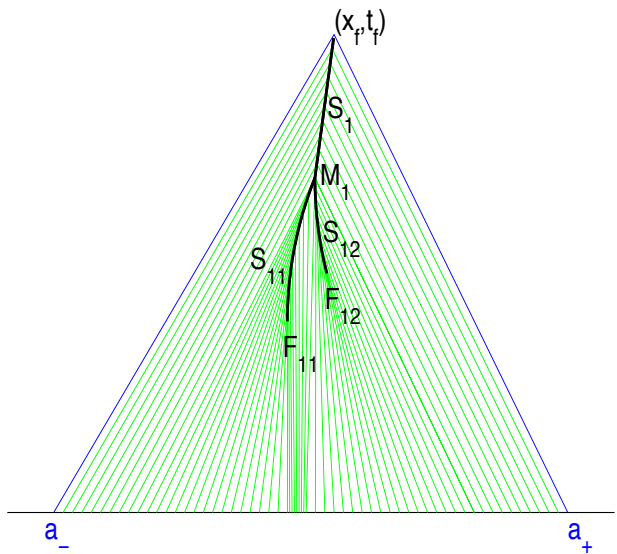}
\caption{{\small {\it Merger of Two Primitive Shocks.} Two shocks $\cS_{11}$ and $\cS_{12}$ 
form at the space-time points $\cF_{11}$ and $\cF_{12},$ then merge at point $\cM_1$ into the single 
shock $\cS_1.$ The straight green lines are some typical characteristic curves originating in the shock 
interval $[a_-,a_+]$ at time 0.} }
\end{center}
\end{figure}\lb{fig4}

We begin with the simplest example of a point $(x_f^*,t_f)$ located on a shock which resulted from the merger of two 
earlier ``primitive" shocks. The situation is illustrated by the space-time diagram in Fig.~4. The shock set $\cS$
in space-time consists of three segment curves. Two segments $\cS_{11},$ $\cS_{12}$ consist of shocks which
formed at points $\cF_{11}=(x_{F_{11}}^*,t_{F_{11}})$ and $\cF_{12}=(x_{F_{12}}^*,t_{F_{12}}),$ and then merged at point 
$\cM_1=(x_{M_{1}}^*,t_{M_{1}}).$ The third segment is the single shock $\cS_1$ which resulted from the merger, 
ending in the chosen point $(x_f^*,t_f).$ The random process is defined in the same manner as for the single shock case, 
by assigning to all shock segments the probability densities $p_1^\pm(\tau),$ $p_{1a}^\pm(\tau),\ a=1,2$ to leave 
the shock surface (backward in time) either to the right or left at time $\tau.$ These probabilities are assigned in the 
same way as before, by choosing a time $t_0<t_{F_{11}}$ (assuming here that $t_{F_{11}}\leq t_{F_{12}}$) and 
then mapping the uniform distribution on the shock interval $[b_-^f,b_+^f]$ at time $t_0$ into the shock set via 
the coalescing forward flow. This is illustrated in Fig.~4 for the case $(b,t_0)=(a,0),$ where the uniform distribution 
on the interval $[a_-^f,a_+^f]$ is mapped by the straight-line characteristics (green) to the shock. Note that the 
formula (\ref{ppm}) previously derived for the probability density still holds separately for each segment of the shock set. 

This assignment of probabilities again has the properties stated in the propositions of section \ref{sec:geom}.
Here we check the result ${\mathbb E}(\dot{x}(t))=u_f^*$ of Proposition 3.1 (which is also basic to Proposition 
3.2). If $t<t_{F_{12}},$ then we have the single shock case of section \ref{sec:geom}. However, if $t>t_{F_{12}},$
then one must take into account both shocks. First consider the case $t_{F_{12}}<t<t_{M_1},$ which is illustrated
in Fig.~5 for the case $(b,t_0)=(a,0).$ The probability distribution on the particle labels $[c_-^f,c_-^f]$ at time $t$ 
has a continuous part $p(c,t)$ and two atoms located at $x_{11}^*(t),x_{12}^*(t)$ on the shocks $\cS_{11},$ $\cS_{12}.$  
The two atoms have probabilities $\frac{b_{1+}(t)-b_{1-}(t)}{b_+^f-b_-^f}$ and $\frac{b_{2+}(t)-b_{2-}(t)}{b_+^f-b_-^f}$  
corresponding to the relative lengths of the intervals which map into those points. (See the magenta curves in Fig.~5). 
The continuous part of the distribution makes a contribution to ${\mathbb E}(\dot{x}(t))$ of the form 
$$ \int_{c_-^f}^{c_+^f} \rmd c \ u(c,t) \ p(c,t) = \frac{1}{b_+^f-b_-^f}\int_{[b_-^f,b_+^f]\backslash([b_{1-}(t),b_{1+}(t)]\cup [b_{2-}(t),b_{2+}(t)])} \rmd b \ u(b,t). $$
The contribution of each atom to ${\mathbb E}(\dot{x}(t))$ is
$$  \left(\frac{b_{a+}(t)-b_{a-}(t)}{b_+^f-b_-^f}\right)\cdot u_{1a}^*(t) = 
       \frac{1}{b_+^f-b_-^f}\int_{[b_{a-}(t),b_{a+}(t)]} \rmd b \ u(b,t), \,\,\,\, a=1,2, $$
using again the fundamental property (\ref{avrgcond}). 
Adding all of the contributions gives
$$ {\mathbb E}(\dot{x}(t)) =  \frac{1}{b_+^f-b_-^f} \int_{b_-^f}^{b_+^f} \rmd b \ u(b,t) = u_*^f. $$

\begin{figure}[!ht]
\begin{center}
\includegraphics[height=3in,width=3.5in]{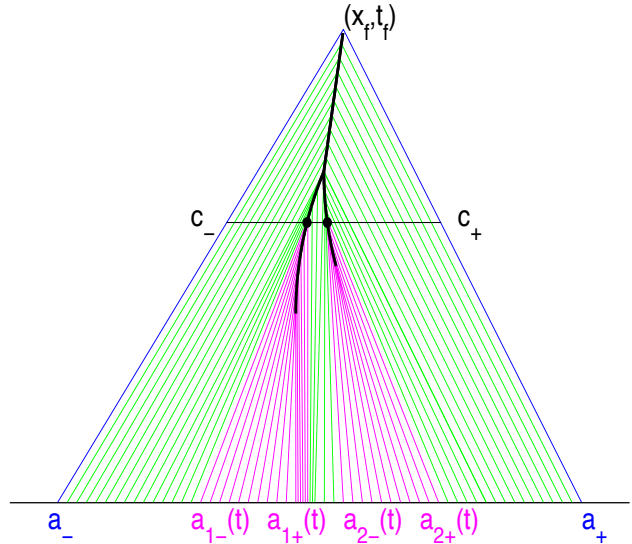}
\caption{{\small {\it Particle Positions with Two Shocks Before Merger}}. The black segment denotes the interval 
$[c_-^f,c_+^f]$ of particle positions at a time $t_{F_{12}}<t<t_{M_1}$. There are two atoms of finite probability 
located at $x_{11}^*(t),x_{12}^*(t)$ on the shocks $\cS_{11},$ $\cS_{12}.$ }
\end{center}
\end{figure}

\begin{figure}[!ht]
\begin{center}
\includegraphics[height=3in,width=3.5in]{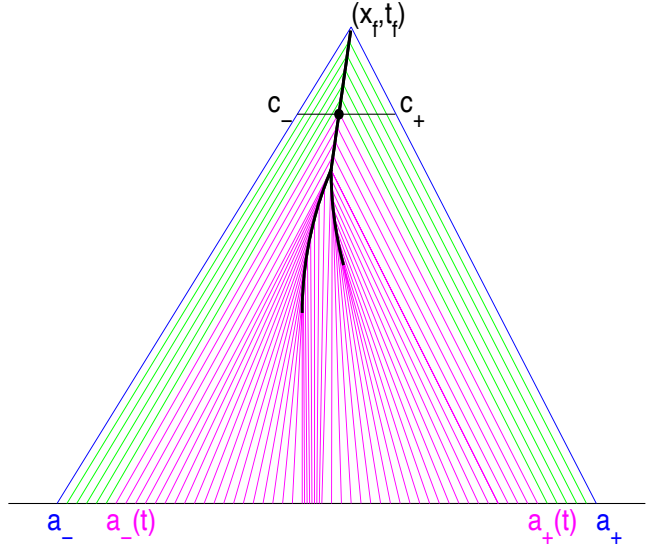}
\caption{{\small {\it Particle Positions with Two Shocks After Merger}}. The black segment denotes the interval 
$[c_-^f,c_+^f]$ of particle positions at a time $t_{M_1}<t<t_f$. There is a single atom of finite probability 
located at $x_{1}^*(t)$ on the shock $\cS_{1}.$}
\end{center}
\end{figure}

When instead $t_{M_1}<t<t_f,$ the situation is very much like the single shock case considered before (with any merger).
This situation is illustrated in Fig.~6 for the case $(b,t_0)=(a,0).$ The probability distribution on the particle labels $[c_-^f,c_-^f]$ 
at time $t$ has a continuous part $p(c,t)$ and one atom located at $x_{1}^*(t)$ on the shocks $\cS_{1}.$ It is readily seen 
by the same calculation as in section \ref{sec:geom} that ${\mathbb E}(\dot{x}(t))=u_f^*$. 

Although the definition of the random process and the verification of Propositions 3.1-2 are quite straightforward 
geometrically, the demonstration of its Markov properties backward in time become a bit cumbersome. To define 
the state-space of the (non-stationary) Markov process we must decompose the space-time set which lies 
outside the shock $\cS=\cS_1\cup \cS_{11}\cup \cS_{12}$ but which ends up at point $(x_f^*,t_f).$ We define
the right/left flanks of each shock segment $\cS_1^\pm,\cS_{1a}^\pm,$ $a=1,2$ as those points which are 
carried into the shock from the right/left by the forward coalescing flow. Set $\cS^\pm=\cS_1^\pm\cup \cS_{11}^\pm\cup \cS_{12}^\pm$
We also denote the straight-line characteristic  entering the merger point as $\cM_1^\downarrow$ and those entering the shock 
formation points as $\cF_{1a}^\downarrow,$ $a=1,2.$ The state space of the backward Markov process at time $\tau$
is then 
$$ X(\tau) = (\cS(\tau)\times \{-1,0,+1\}) \cup (\cM(\tau)\times \{0,\pm 1,\downarrow\} ) $$
$$ \dots \cup (\cS^+(\tau)\times \{+1\}) \cup (\cS^-(\tau)\times\{-1\})\cup (\cF^\downarrow(\tau)\times\{0\}) 
\cup (\cM^\downarrow(\tau)\times\{\downarrow\})$$
where we use the notation ${\mathcal A}(\tau)$ to denote the time-$\tau$ section of a space-time set ${\mathcal A}$.
These sections may be the empty set for some $\tau.$ The discrete labels indicate the property to be on 
\footnote{Note that on the curves $\cF_{1a}^\downarrow$ which will form shocks at times $t_{F_{1a}},$ $a=1,2,$
the label is also ``0.''} the shock $(0),$ right of the shock $(+)$, left of the shock $(-),$ and downward from the shock $(\downarrow).$  
The time-dependent infinitesimal generator $L(\tau)$ of the process is then obtained by straightforward calculations to be 
\begin{eqnarray}
&& t_M<\tau<t_f:\cr
&&\cr
&& \,\,\,\,\,\,\,\,\, L(\tau)f(x,\pm 1) = -u(x,\tau)f'(x,\pm 1), \ x\in \cS_{1}^\pm(\tau) \cr
&& \,\,\,\,\,\,\,\,\, L(\tau)f(x,0) = -\bar{u}(x,\tau)f'(x,0) + \sum_{\alpha=\pm 1} \lambda^\alpha_1(\tau) [f(x,\alpha)-f(x,0)], \ x\in \cS_{1}(\tau) \cr
&& \,\,\,\,\,\,\,\,\, L(\tau)f(x,\pm 1) = -u(x\pm,\tau)f'(x,\pm 1), \ x\in \cS_{1}(\tau) 
\end{eqnarray} 
with 
$$ \lambda^\pm_1(\tau) = p_1^\pm(\tau)/P_1(\tau), \,\,\,\, P_1(\tau)=1-\int_{\tau}^{t_f} \rmd t \ [p_1^+(t)+p_1^-(t)]. $$
\begin{eqnarray}
&& \tau=t_{M_1} \cr
&&\cr
&& \,\,\,\,\,\,\,\,\, L(\tau)f(x,\pm 1) = -u(x,t_M)f'(x,\pm 1), \ x\in \cS_{1}^\pm(t_M) \cr
&& \,\,\,\,\,\,\,\,\, L(\tau)f(x_M,0) = -\sum_{a=1,2} B_{1a} u_{1a}^*(t_{M_1})f'(x_M,0) 
+ \sum_{\alpha\in\{-1,\downarrow,+\}} \lambda^\alpha_M  [f(x_M,\alpha)-f(x_M,0)] \cr
&& \,\,\,\,\,\,\,\,\, L(\tau)f(x_M,\pm 1) = -u(x_M\pm,t_M)f'(x,\pm 1) \cr
&& \,\,\,\,\,\,\,\,\, L(\tau)f(x_M,\downarrow) = -u(x_M,t_M-)f'(x,\downarrow) 
\end{eqnarray} 
with 
$$ B_{1a} = \frac{P_{1a}}{P_{11}+P_{12}}, \,\,\,\,
P_{1a}=\int_{t_{F_{1a}}}^{t_{M_1}} \rmd t \ [p_{1a}^+(t) + p_{1a}^-(t)], \,\,\,\, a=1,2. $$
$$ \lambda^+_M= p_{12}^+(t_M)/(P_{11}+P_{12}), \,\,\,\,
\lambda^-_M= p_{11}^-(t_M)/(P_{11}+P_{12}), $$
$$ \lambda^\downarrow_M= (p_{11}^+(t_M) +p_{12}^-(t_M))/(P_{11}+P_{12}), $$
\begin{eqnarray}
&& t_{F_{12}}<\tau<t_M: \cr
&& \cr
&& \,\,\,\,\,\,\,\,\, L(\tau)f(x,\pm 1) = -u(x,\tau)f'(x,\pm 1), \ x\in S^\pm(\tau) \cr
&& \,\,\,\,\,\,\,\,\, L(\tau)f(x,0) = -\bar{u}(x,\tau)f'(x,0) + \sum_{\alpha=\pm 1} \lambda^\alpha_{1a}(\tau) [f(x,\alpha)-f(x,0)], \ x\in \cS_{1a}(\tau),
\,\,\,\, a=1,2 \cr
&& \,\,\,\,\,\,\,\,\, L(\tau)f(x,\pm 1) = -u(x\pm,\tau)f'(x,\pm 1), \ x\in \cS_{1a}(\tau),  \,\,\,\, a=1,2 \cr
&& \,\,\,\,\,\,\,\,\, L(\tau)f(x,\downarrow) = -u(x,\tau)f'(x,\downarrow), \ x\in M_{1}^\downarrow(\tau) 
\end{eqnarray} 
with 
$$ \lambda^\pm_{1a}(\tau) = p_{1a}^\pm(\tau)/P_{1a}(\tau), \,\,\,\, P_{1a}(\tau)=\int_{t_{F_{1a}}}^{\tau} \rmd t \ [p_{1a}^+(t)+p_{1a}^-(t)]. $$
\begin{eqnarray}
&& t_{F_{11}}<\tau<t_{F_{12}}: \cr
&& \cr 
&& \,\,\,\,\,\,\,\,\, L(\tau)f(x,\pm 1) = -u(x,\tau)f'(x,\pm 1), \ x\in S^\pm(\tau) \cr
&& \,\,\,\,\,\,\,\,\, L(\tau)f(x,0) = -\bar{u}(x,\tau)f'(x,0) + \sum_{\alpha=\pm 1} \lambda^\alpha_{11}(\tau) [f(x,\alpha)-f(x,0)], \ x\in \cS_{11}(\tau) \cr
&& \,\,\,\,\,\,\,\,\, L(\tau)f(x,\pm 1) = -u(x\pm,\tau)f'(x,\pm 1), \ x\in \cS_{11}(\tau) \cr
&& \,\,\,\,\,\,\,\,\, L(\tau)f(x,\downarrow) = -u(x,\tau)f'(x,\downarrow), \ x\in \cM_{1}^\downarrow(\tau) \cr
&& \,\,\,\,\,\,\,\,\, L(\tau)f(x,0) = -u(x,\tau)f'(x,0), \ x\in \cF_{12}^\downarrow(\tau) 
\end{eqnarray} 
\begin{eqnarray}
&& 0<\tau<t_{F_{11}}: \,\,\,\,\,\,\,\,\,\, L(\tau)f(x,\beta) = -u(x,\tau)f'(x,\beta), \ (x,\beta)\in X(\tau) 
\end{eqnarray} 
At $t=t_{M_1}$ the random particle on the shock segment $\cS_1$ moving backward in time branches to $\cS_{11}$  or $\cS_{12}$ 
with probabilities $B_{11},B_{12},$ respectively. The particle may also jump off the shock right, left or downward 
at the merger point $\cM_1.$ Note that 
$$ B_{11}u_{11}^*(t_{M_1})+B_{12}u_{12}^*(t_{M_1})=u_1^*(t_{M_1}) $$
as follows from the fundamental property (\ref{avrgcond}) applied to each of the shock segments $\cS_1, \cS_{11}, \cS_{12}$
at the merger point $\cM_1.$  

The above construction can be made completely general, using well-known facts about the entropy solution of 
inviscid Burgers \cite{BecKhanin07}. The fundamental property which enables our construction is that 
any straight-line characteristic which hits the shock set $\cS$ in space-time has not hit the shock set at any earlier time. 
The shock set $\cS$ at a fixed final time $t_f$ is, in general, a set of disconnected trees branching backward in time. 
The trees consists of segments that end (backward in time) either in a merger point or a formation point. At a merger 
point the shock segment branches into two or more shock segments. Generically, there will be exactly two shock segments 
branching off at each merger point, but non-generic mergers are possible that involve more than two lower segments 
branching off. At each formation point the shock set terminates backward in time. At the final time $t_f$ one thus starts with 
a countable number of shock segments $\cS_{n_1},$ $n_1=1,2,3,...$. Each shock segment $\cS_{n_1}$ ends in either a formation 
formation point $\cF_{n_1}$ or a merger point $\cM_{n_1}.$ At a merger point $\cM_{n_1}$ a second generation of shock segments appears,
typically $\cS_{n_11},$ $\cS_{n_12},$ but occasionally also $\cS_{n_13},$ $\cS_{n_14},$..., etc. These second-generation segments  
$\cS_{n_1n_2}$ also end either in a merger point $\cM_{n_1n_2}$ or a formation point $\cF_{n_1n_2}.$ Assuming that the inviscid Burgers 
solution started with smooth initial data $u_0$, this branching continues backward in time until every segment $\cS_{n_1n_2...n_p}$ ends 
in a formation point $\cF_{n_1n_2...n_p}.$ Call $\cT_{n_1}$ the tree of shock segments branching from $\cS_{n_1}$ (including
segment $\cS_{n_1}$ itself). Except for the merger and formation points, every point on a shock segment  $\cS_{n_1n_2...n_p}$ is intersected 
by exactly two straight-line characteristics curves, one from the left and one from the right. These characteristics for the tree $\cT_{n_1},$ 
$n_1=1,2,3,...$ all originate in a corresponding Lagrangian shock interval $[b_-^{(n_1)},b_+^{(n_1)}]$, $n_1=1,2,3,...$ at any time $t_0$
before the first shock has appeared.  A uniform distribution distribution on that interval then maps to an assignment of probabilities
$p_{\cS_{n_1n_2...n_p}}^\pm(\tau)$ for all segments $\cS_{n_1n_2...n_p}\in \cT_{n_1}.$ For each tree $\cT_{n_1}$, $n_1=1,2,3,...$
one has that 
$$ \sum_{\cS_{\#}\in \cT_{n_1}} \int_{t^i_\#}^{t^f_\#} \rmd \tau\ [p_{\cS_{\#}}^+(\tau)+p_{\cS_{\#}}^-(\tau)]=1, $$
where $t^i_\#,t^f_\#$ are the initial and final times of the segment $\cS_{\#}$. This specification of probabilities to leave the 
shock right or left backward in time fully specifies the random process $\tilde{x}(\tau)$ and it is easy to verify that it satisfies 
Propositions 3.1 and 3.2 by straightforward extensions of the previous arguments.  

\section*{Appendix B: Spontaneous Stochasticity in the Khokhlov Solution For $Pr<\infty$ }\lb{khokhlov} 

We present here the proof of spontaneous stochasticity of backward Lagrangian trajectories at any value 
of Prandtl number $Pr<\infty$ for the Khokhlov solution (\ref{khokhsol}) of viscous Burgers. This is a 
more generic example of a decaying Burgers solution than the stationary shock solution considered 
in section \ref{sec:scalar}. Although it blows up at time $t=0$ and is ill-defined there, it tends to zero 
in a typical way in the limit $t\rightarrow +\infty.$  

To describe evolution backward in time from some chosen time $t_f>0,$ we 
introduce the variable $\tau=\ln(t_f/t),$ so that $t=t_f e^{-\tau}$ and $t\downarrow 0$ as $\tau\uparrow
+\infty.$ The equation for backward-in-time stochastic trajectories thus by the change of time becomes 
\be  \rmd\txi = -[\txi - L\tanh(L\txi e^\tau/2\nu t_f)] \rmd\tau + \sqrt{2\kappa t_f e^{-\tau}} \ \rmd\tilde{W}(\tau). \lb{txieq-B} \ee
With $u_\nu(x,\tau)\equiv  -[x - L\tanh(L x e^\tau/2\nu t_f)]$ we introduce a potential 
$$ \phi_\nu(x,\tau) \equiv  -2\nu t_f e^{-\tau} \ln \cosh(Lx e^\tau/2\nu t_f) + \frac{1}{2}x^2, $$
so that $u_\nu(x,\tau)=-\rmd\phi_\nu(x,\tau)/\rmd x.$ This sign is opposite to that used previously in the paper,
but is chosen here so that $\phi_\nu$ acts formally as a Lyapunov function in the zero-noise limit and also so that 
$\phi_\nu(x,\tau)$ as a function of $x$ has the form of a typical ``double-well'' potential, with a local maximum at $x=0$ and 
a pair of global minima near $x=\pm L.$  Note, in particular, that 
$$ \phi_*(x) = \lim_{\nu \rightarrow 0} \phi_\nu(x,\tau) = \lim_{\tau \rightarrow \infty} \phi_\nu(x,\tau) = -L|x| + \frac{1}{2}x^2, $$
and $\phi_*(x)\leq \phi_\nu(x,\tau)$ for all $\nu, \tau>0.$ The naive $\nu,\kappa\rightarrow 0$ limit of (\ref{txieq-B}) is 
 \be  \rmd\xi = -[\xi - L\ {\rm sign}(\xi)] \rmd\tau,  \ee 
 with two deterministic solutions 
 $$\xi_\pm(\tau) = \pm L (1-e^{-\tau}) $$
 for $\tau>0,$ both satisfying $\xi_\pm(0)=0$ and
 $$ \phi_*(\xi_\pm(\tau)) = -\frac{1}{2} L^2 (1-e^{-2\tau}). $$
 The statement of spontaneous stochasticity that we prove here is:
 
 \begin{proposition}
 For solution $\txi(\tau)$ of (\ref{txieq-B}) with $\txi(0)=0$ at any $\tau>0,$ for $\alpha=\min\{1,\frac{1}{2Pr}\},$ and for any $\epsilon\in (0,1),$
 $$ \lim_{\stackrel{\kappa\rightarrow 0}{Pr=\nu/\kappa \,\,\,\, {\rm fixed}}} P_{\nu,\kappa}\Big( \phi_*(\txi(\tau),\tau) \leq 
 -\frac{1}{2} \alpha (1-\epsilon) L^2 (1-e^{-2\tau}) \Big) =1. $$
 \end{proposition}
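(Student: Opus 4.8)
The plan is to adapt the Lyapunov--martingale argument of the stationary-shock Proposition of Section~\ref{sec:scalar} to the time-dependent double-well potential $\phi_\nu$. First I would apply the forward It\^o formula to $\phi_\nu(\txi(\tau),\tau)$ along solutions of (\ref{txieq-B}), using the gradient relation $u_\nu=-\partial_x\phi_\nu$ to cast the drift in the manifestly dissipative Lyapunov form
\[
\mathcal{L}\phi_\nu=\partial_\tau\phi_\nu-(\partial_x\phi_\nu)^2+\kappa t_f e^{-\tau}\,\partial_{xx}\phi_\nu .
\]
Since $\phi_\nu(0,\tau)\equiv0$ (so the process starts at $\phi_\nu=0$) and $\phi_*\le\phi_\nu$ pointwise, it suffices to show that $-\phi_\nu(\txi(\tau),\tau)$ exceeds $\tfrac12\alpha(1-\epsilon)L^2(1-e^{-2\tau})$ with probability tending to one; the stated bound on $\phi_*$ then follows \emph{a~fortiori}.

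Next I would extract the deterministic descent rate. A direct computation gives $\partial_\tau\phi_\nu=2\nu t_f e^{-\tau}\big[\ln\cosh z-z\tanh z\big]\le 0$ with $z=Lxe^\tau/2\nu t_f$, while the It\^o correction is $\kappa t_f e^{-\tau}\partial_{xx}\phi_\nu=\kappa t_f e^{-\tau}-\tfrac{L^2}{2Pr}{\rm sech}^2 z$. The crucial algebraic point, exactly as in the stationary case, is that near the unstable point $x=0$ one has $(\partial_x\phi_\nu)^2\approx L^2\tanh^2 z$, which combined with $\tfrac{L^2}{2Pr}{\rm sech}^2 z$ and the identity $\tanh^2+{\rm sech}^2=1$ produces the escape rate $\alpha L^2$ with $\alpha=\min\{1,\tfrac{1}{2Pr}\}$; away from the boundary layer $(\partial_x\phi_\nu)^2\to(|\txi|-L)^2$. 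Substituting $2\nu t_f e^{-\tau}\ln\cosh z=\tfrac12\txi^2-\phi_\nu$ recasts the evolution as the \emph{linear} SDE $\rmd\phi_\nu=(-\phi_\nu+F)\,\rmd\tau+M\,\rmd\tilde W$, whose variation-of-constants solution convolves the source $-F$ against the Green's function $e^{-(\tau-s)}$. Along the deterministic solution $\xi_\pm(\tau)=\pm L(1-e^{-\tau})$ one finds $-F=\tfrac12 L^2(1+e^{-2s})$, and the convolution integrates \emph{exactly} to $\tfrac12 L^2(1-e^{-2\tau})$; this both fixes the normalization and identifies the target envelope.

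The stochastic term $\int_0^\tau e^{-(\tau-s)}M\,\rmd\tilde W$ is handled as in the stationary proof. From $\mathcal{L}\phi_\nu\le\kappa t_f e^{-\tau}$ (the three drift pieces other than the $\kappa$-term are nonpositive) I get $\mathbb{E}[\phi_\nu(\txi(\tau),\tau)]\le\kappa t_f$, and since $\phi_\nu\ge\phi_*\ge\tfrac14\txi^2-L^2$ this yields the a~priori confinement $\mathbb{E}[\txi^2]\le 4(L^2+\kappa t_f)$. The martingale coefficient is $M=-u_\nu\sqrt{2\kappa t_f e^{-\tau}}$ with $|u_\nu|\le|\txi|+L$, so by the It\^o isometry the variance of the stochastic integral is $O(\kappa)$; Chebyshev's inequality then makes it negligible as $\kappa\to0$, the $(1-\epsilon)$ slack absorbing its fluctuation.

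The hard part will be the remaining, non-clean part of the drift. Unlike the stationary problem, the descent rate $(\partial_x\phi_\nu)^2\approx(|\txi|-L)^2$ \emph{degenerates} at the well bottom $|x|=L$, so the sharp $(1-e^{-2\tau})$ envelope cannot be produced by a pointwise constant lower bound: the crude estimate $-F\ge\tfrac12 L^2$ gives only $\tfrac12\alpha L^2(1-e^{-\tau})$, which fails for small $\tau$. What is genuinely required is a concentration (tube) estimate showing that the trajectory does not approach the bottom appreciably faster than $\xi_\pm$, together with control of the $O(\nu)$ boundary layer about the origin, where the $\tfrac{1}{2Pr}$-term governs the noise-induced escape from the unstable equilibrium. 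I would obtain the tube estimate by a comparison/Gr\"onwall argument for $|\txi(s)|$ against $L(1-e^{-s})$ on the (overwhelmingly likely) event that escape from the boundary layer has already occurred on a vanishing time scale, and then close the proof by sending $\kappa\to0$ at fixed $Pr$ so that all $\nu$- and $\kappa$-dependent corrections vanish inside the $(1-\epsilon)$ margin.
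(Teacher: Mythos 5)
Your overall strategy (It\^o on the potential, $\phi_*\le\phi_\nu$, the identity $\tanh^2+{\rm sech}^2=1$ producing $\alpha$, It\^o isometry plus Chebyshev for the noise, a second-moment bound on $\txi$) is the right one and matches the paper's, but there is a genuine gap at the decisive step, and you flag it yourself. Writing the drift of $\phi_\nu$ as $-\phi_\nu+F$ (Green's function $e^{-(\tau-s)}$) leaves in the source $F$ the $x$-dependent combination $-\tfrac12\txi^2+L\txi\tanh z$ with $z=L\txi e^\tau/2\nu t_f$. Any pointwise bound on this term (it is at most $\tfrac12 L^2$) yields an envelope of the form $c\,L^2(1-e^{-\tau})$ with a degraded constant, which, as you note, fails against the target $\tfrac12\alpha(1-\epsilon)L^2(1-e^{-2\tau})$ for small $\tau$ and small $\epsilon$. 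Your proposed repair --- a tube estimate showing $\txi$ tracks $\xi_\pm(\tau)=\pm L(1-e^{-\tau})$, conditional on escape from the $O(\nu)$ boundary layer ``on a vanishing time scale'' --- is not carried out, and it cannot be treated as a routine Gr\"onwall step: the assertion that escape from the unstable origin happens quickly with probability tending to one, uniformly as $\nu,\kappa\to0$ at fixed $Pr$, \emph{is} essentially the content of the proposition. Your sketch therefore defers the core difficulty rather than resolving it.

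The missing idea is the choice of exponential weight: use $e^{2\tau}$, not $e^{\tau}$. Computing $\rmd\bigl(e^{2\tau}\phi_\nu(\txi(\tau),\tau)\bigr)$, the term $2\phi_\nu$ contributes $\txi^2-4\nu t_f e^{-\tau}\ln\cosh z$, and the $\txi^2$ cancels \emph{identically} against the $-\txi^2$ coming from $(\partial_x\phi_\nu)^2=(\txi-L\tanh z)^2$, while the cross term $2L\txi\tanh z=4\nu t_f e^{-\tau}z\tanh z$ merges with the logarithm into $2\nu t_f e^{-\tau}\bigl[z\tanh z-\ln\cosh z\bigr]\in[0,\,2\nu t_f e^{-\tau}\ln 2]$. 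What remains is the spatially \emph{uniform} bound
\begin{equation*}
\rmd\bigl(e^{2\tau}\phi_\nu\bigr)\le \Bigl(-\alpha L^2 e^{2\tau}+O\bigl((\nu+\kappa)e^{\tau}\bigr)\Bigr)\rmd\tau+\ \mbox{(martingale)},
\end{equation*}
which upon integration gives exactly $\phi_\nu(\txi(\tau),\tau)\le-\tfrac12\alpha L^2(1-e^{-2\tau})+O(\nu+\kappa)+\tilde R^{(2)}(\tau)$ with no information needed about where the trajectory is --- no tube estimate, no boundary-layer analysis, no conditioning on escape. This is why the paper's proof closes in a page: the sharp $(1-e^{-2\tau})$ envelope is produced by the weight $e^{2\tau}$ itself, and your It\^o-isometry/Chebyshev control of the martingale term (together with a second-moment bound on $\txi$, which your $\mathbb{E}[\phi_\nu]\le\kappa t_f$ argument or the paper's direct It\^o computation on $\txi^2$ both supply) finishes the proof.
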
 
 
 \begin{proof}
 An application of the It$\bar{{\rm o}}$ lemma gives 
  \begin{eqnarray} 
 \rmd\big(e^{2\tau} \phi_\nu(\txi(\tau),\tau)\big) &=& -L^2\left[\tanh^2\left(\frac{L\txi e^\tau}{2\nu t_f}\right)+\frac{1}{2 Pr}
 {\rm sech}^2\left(\frac{L\txi e^\tau}{2\nu t_f}\right) \right]e^{2\tau}\rmd \tau \cr
 && +2\nu t_f \left[ \left(\frac{L\txi e^\tau}{2\nu t_f}\right)\tanh\left(\frac{L\txi e^\tau}{2\nu t_f}\right)-\ln\cosh\left(\frac{L\txi e^\tau}{2\nu t_f}\right)\right] e^{\tau} \rmd\tau \cr
 && +\kappa t_f e^{\tau} \rmd \tau - \sqrt{2\kappa t_f e^{-\tau}}\ u_\nu(\txi,\tau) \ e^{2\tau} \rmd \tilde{W}(\tau)  
 \end{eqnarray} 
 Using $\tanh^2(z)+{\rm sech}^2(z)=1,$ the first term on the righthand side is $\leq -\alpha L^2 e^{2\tau} \rmd \tau.$ Integrating 
 from 0 to $\tau$ thus yields
 $$ \phi_*(\txi(\tau)) \leq \phi_\nu(\txi(\tau),\tau) \leq  -\frac{1}{2}\alpha L^2 (1-e^{-2\tau}) + \kappa t_f e^{-\tau}(1-e^{-\tau}) + \tilde{R}_{\nu,\kappa}^{(1)}(\tau) + \tilde{R}_{\nu,\kappa}^{(2)}(\tau) $$
 with 
 $$ \tilde{R}_{\nu,\kappa}^{(1)}(\tau)=2\nu t_f e^{-2\tau}\int_0^\tau \left[ \left(\frac{L\txi(\sigma) e^\sigma}{2\nu t_f}\right)\tanh\left(\frac{L\txi(\sigma) e^\sigma}{2\nu t_f}\right)
 -\ln\cosh\left(\frac{L\txi(\sigma) e^\sigma}{2\nu t_f}\right)\right] e^{\sigma} \rmd\sigma $$
 and
 $$ \tilde{R}_{\nu,\kappa}^{(2)}(\tau) =  \sqrt{2\kappa t_f} e^{-2\tau}\int_0^\tau [\txi(\sigma) - L\tanh(L\txi(\sigma) e^\sigma/2\nu t_f)]  \ e^{3\sigma/2} \rmd \tilde{W}(\sigma).  $$
 Using $0\leq z\tanh(z) -\ln\cosh(z)\leq \ln 2,$ it follows that 
 $$0\leq \tilde{R}_{\nu,\kappa}^{(1)}(\tau) \leq \nu t_f (\ln 2) e^{-\tau} (1-e^{-\tau}) $$
 and thus
 \begin{eqnarray*}
 && \phi_*(\txi(\tau)) > -\frac{1}{2} \alpha (1-\epsilon) L^2 (1-e^{-2\tau}) \cr
 && \hspace{60pt} \Longrightarrow 
 \tilde{R}_{\nu,\kappa}^{(2)}(\tau) > \frac{1}{2} \epsilon \alpha L^2 (1-e^{-2\tau}) - (\kappa +\nu (\ln 2))t_f e^{-\tau}(1-e^{-\tau}). 
 \end{eqnarray*} 
 Note that the lower bound in the latter inequality is positive for sufficiently small $\nu,\kappa.$ To bound the probability of 
 this event, we can use Chebyshev inequality. By the It$\bar{\rm{o}}$ isometry
 \begin{eqnarray*}
 {\mathbb E}_{\nu,\kappa}\left(|\tilde{R}_{\nu,\kappa}^{(2)}(\tau)|^2\right)
 &= & 2 \kappa t_f e^{-4\tau} \int_0^\tau {\mathbb E}_{\nu,\kappa}\left|\txi(\sigma)-L\tanh(L\txi(\sigma) e^\sigma/2\nu t_f)\right|^2 e^{3\sigma}\rmd\sigma \cr
 &\leq & 2 \kappa t_f e^{-4\tau} \int_0^\tau \left[ {\mathbb E}_{\nu,\kappa}|\txi(\sigma)|^2+L^2\right] e^{3\sigma}\rmd\sigma. 
 \end{eqnarray*} 
 To obtain a bound on ${\mathbb E}_{\nu,\kappa}|\txi(\tau)|^2$ we again use the It$\bar{{\rm o}}$ lemma, as 
 \begin{eqnarray*} 
 \rmd(\txi^2) &=& 2\txi \rmd\txi + 2\kappa t_f e^{-\tau} \rmd\tau \cr
 &=& \left[-2\txi^2+2\txi L\tanh(L\txi(\tau) e^\tau/2\nu t_f)+2\kappa t_f e^{-\tau}\right]\rmd \tau + \rmd\tilde{M}
 \end{eqnarray*}
where $\tilde{M}(\tau)$ is a martingale with ${\mathbb E}_{\nu,\kappa}(\tilde{M})=0.$ By a Young's inequality 
$$ 2xL\tanh(L x e^\tau/2\nu t_f) \leq x^2 + L^2\tanh^2(Lx e^\tau/2\nu t_f)\leq x^2 + L^2, $$ 
so that 
$$ \rmd{\mathbb E}_{\nu,\kappa}(\txi^2) \leq \left(-{\mathbb E}_{\nu,\kappa}(\txi^2) +L^2 + 2\kappa t_f e^{-\tau}\right)\rmd \tau. $$
This integrates to 
$$ {\mathbb E}_{\nu,\kappa}\left(\txi^2(\tau)\right)\leq L^2(1-e^{-\tau}) + 2\kappa t_f\tau e^{-\tau} \leq L^2 + 2\kappa t_f \tau e^{-\tau} $$
and then 
$$ {\mathbb E}_{\nu,\kappa}\left(|\tilde{R}_{\nu,\kappa}^{(2)}(\tau)|^2\right) \leq 2\kappa t_f e^{-\tau}
\left[ \frac{2}{3}L^2(1-e^{-3\tau}) + \kappa t_f e^{-\tau} \left(\tau-\frac{1}{2}(1-e^{-2\tau}\right)\right]. $$ 
Since this vanishes as $\kappa\rightarrow 0,$ the Chebyshev inequality completes the proof. \hfill $\Box$

 \end{proof} 

\bibliographystyle{spmpsci} 
\bibliography{StochDissBurgers}

\end{document}